%% file: main.tex
\documentclass[journal,twoside,web]{ieeecolor}
\input{3-compilation/packages}
\input{3-compilation/user_specific}

\begin{document}
\input{2-info_paper/titlepage}

\maketitle
\input{1-major_content/0-abstract}
\input{1-major_content/1-introduction}
\input{1-major_content/2-preliminaries}
\input{1-major_content/3-problem_formulation}

\input{1-major_content/4-theoretical_analysis}
\input{1-major_content/4.5-main_results}

\input{1-major_content/5-case_study}

\input{1-major_content/6-conclusion}
\input{1-major_content/7-appendices}

\vspace*{-0.5cm}
\section*{References}
\vspace*{-0.5cm}
{\footnotesize
\bibliographystyle{IEEEtran}
\bibliography{4-references/additional_reference,4-references/basic_references,4-references/books,4-references/cbf,4-references/event_mpc,4-references/learning,4-references/microgrid,4-references/network,4-references/oco,4-references/intro_used,4-references/ourpaper}
}

\end{document}

%% file: 3-compilation/packages.tex
\usepackage{generic}
\usepackage{epsfig} 

\usepackage{cite}
\usepackage{amsmath,amssymb,amsfonts}
\usepackage{mathtools}

\makeatletter
\let\proof\@undefined                        
\let\endproof\@undefined                  
\makeatother
\usepackage{amsthm}

\usepackage{algorithmic}
\usepackage{graphicx}
\usepackage{transparent} 
\usepackage{algorithm,algorithmic}
\usepackage{dsfont} 
\usepackage{bm} 
\usepackage{makecell} 
\usepackage{tikz} 
\usepackage{subcaption}
\usepackage[dvipsnames]{xcolor}
\let\labelindent\relax
\usepackage{enumitem}

\allowdisplaybreaks

\usepackage{hyperref}
\hypersetup{
    colorlinks,
    linkcolor={blue},
    citecolor={blue},
    urlcolor={blue}
} 
\usepackage{textcomp}
\def\BibTeX{{\rm B\kern-.05em{\sc i\kern-.025em b}\kern-.08em
    T\kern-.1667em\lower.7ex\hbox{E}\kern-.125emX}}

%% file: 3-compilation/user_specific.tex
\theoremstyle{plain} 
\newtheorem{definition}{\textbf{Definition}}
\newtheorem{assumption}{\textbf{Assumption}}
\newtheorem{theorem}{\textbf{Theorem}}
\newtheorem{lemma}{\textbf{Lemma}}
\newtheorem{proposition}{\textbf{Proposition}}

\theoremstyle{definition} 
\newtheorem{remark}{\textbf{Remark}}

\newcommand{\bR}{\mathbb{R}}

\newcommand{\bN}{\mathbb{N}}
\newcommand{\bI}{\mathbb{I}}

\newcommand{\cX}{\mathcal{X}}
\newcommand{\cU}{\mathcal{U}}

\newcommand{\cT}{\mathcal{T}}
\newcommand{\cJ}{\mathcal{J}}

\newcommand{\cK}{\mathcal{K}}
\newcommand{\cKL}{\mathcal{K}\mathcal{L}}

\newcommand{\Df}{\Delta f}
\newcommand{\mmf}{\varepsilon_f}
\newcommand{\els}{\ell^\star}
\newcommand{\elf}{\ell_{\mathrm{T}}}
\newcommand{\epf}{\kappa_{\mathrm{T}}}

\newcommand{\eigmax}[1]{\overline{\lambda}_{#1}}
\newcommand{\eigmin}[1]{\underline{\lambda}_{#1}}

\newcommand{\splus}{\hspace{-0.08cm}+\hspace{-0.08cm}}
\newcommand{\sto}{\hspace{-0.08cm}\to\hspace{-0.08cm}}
\newcommand{\sminus}{\hspace{-0.06cm}-\hspace{-0.06cm}}
\newcommand{\seq}{\hspace{-0.08cm}=\hspace{-0.08cm}}

\newcommand{\sleq}{\hspace{-0.08cm}\leq\hspace{-0.08cm}}

\newcommand{\bzero}{\mathbf{0}}
\newcommand{\bone}{\mathbf{1}}
\newcommand{\bfI}{\mathbf{I}}

\newcommand{\Rcrnom}{R_{N,M}}
\newcommand{\Rcr}{R^\ast_{N,M}(\mmf)}

\newcommand{\pit}{\pi_{\mathrm{T}}}

\newcommand{\tf}{f^\ast}
\newcommand{\tsys}[1]{f^\ast_{#1}}
\newcommand{\tsysc}{F^\ast_{0,\infty}}
\newcommand{\tsysn}{F^\ast_{t,N}}

\newcommand{\cost}{\ell}

\newcommand{\probopt}[2]{\mathrm{P}_{\mathrm{IHOPC}}(#1;{#2})}
\newcommand{\probmpc}[2]{\mathrm{P}_{\mathrm{MPC}}(#1;{#2})}

\newcommand{\umpc}{\overline{\gamma}}
\newcommand{\mpc}{\gamma}

\newcommand{\bu}{\mathbf{u}}
\newcommand{\bv}{\mathbf{v}}

\newcommand{\xroa}{\mathcal{X}_{\mathrm{ROA}}}

\newcommand{\nQ}[1]{\| #1 \|^2_Q}

\newcommand{\nP}[1]{\| #1 \|^2_P}



%% file: 2-info_paper/titlepage.tex
\title{Asynchronous Model Predictive Control Under Model Mismatch: Stability and Performance Guarantees}
\author{Changrui Liu, \IEEEmembership{Graduate Student Member, IEEE}, Anil Alan, Shengling Shi, \IEEEmembership{Member, IEEE}, \\and Bart De Schutter, \IEEEmembership{Fellow, IEEE}
\thanks{This paper is part of a project that has received funding from the European Research Council (ERC) under the European Union’s Horizon 
2020 research and innovation programme (Grant agreement No. 101018826 - CLariNet). (Corresponding authors: Changrui Liu \& Shengling Shi)}
\thanks{Changrui Liu, Anil Alan, Shengling Shi, and Bart De Schutter are with the Delft Center for Systems and Control, TU Delft, Delft, The Netherlands. (e-mail: \{c.liu-14, a.alan, shengling.shi, b.deschutter\}@tudelft.nl). }
}

%% file: 1-major_content/0-abstract.tex
\begin{abstract}
Certainty-equivalence model predictive control (CE-MPC) is widely used for its simplicity and efficiency, but theoretical guarantees under asynchronous feedback remain limited. This paper establishes stability and performance guarantees for asynchronous CE-MPC of input-constrained nonlinear systems. We first derive a nominal stability condition and competitive-ratio bound that explicitly account for inter-execution intervals without prescribing a feedback mechanism. A value-function perturbation analysis for quadratic stage costs then accommodates additive, potentially non-smooth model mismatch without constraint qualification conditions. Combining these results yields stability criteria and competitive-ratio bounds for CE-MPC under general asynchronous feedback, including event/self-triggered and multi-step MPC. The guarantees explicitly relate prediction horizon, inter-execution time, and uncertainty magnitude, quantifying performance degradation relative to an ideal infinite-horizon controller. These results clarify tradeoffs between feedback frequency, model accuracy, and horizon length, guiding asynchronous MPC design using approximate or learned models.
\end{abstract}

\begin{IEEEkeywords}
Nonlinear model predictive control, event-based control, performance analysis, uncertain systems 
\end{IEEEkeywords}

%% file: 1-major_content/1-introduction.tex
\section{Introduction}
\label{sec:introduction}
\IEEEPARstart{M}{odel} predictive control (MPC) has been successfully deployed across diverse engineering domains, including robotics~\cite{worthmann2015model}, power electronics~\cite{wang2020event}, and traffic control~\cite{wu2020distributed}. Traditional implementations rely on high-frequency, time-triggered feedback, which can be computationally costly and impractical for resource-constrained systems~\cite{li2014event, brunner2017robust}. This limitation motivates asynchronous feedback designs, such as event/self-triggered MPC~\cite{sun2019robust} and multi-step MPC~\cite{grune2010analysis}, which reduce computational and communication overhead by recomputing control inputs only at selected time instants~\cite{eqtami2010event, heemels2012introduction}. Crucially, MPC relies on an internal prediction model to compute these optimal inputs~\cite[Section 1.2]{rawlings2017model}. In practice, however, obtaining a perfect model is impossible due to unmodeled dynamics, limited data, or measurement noise. Consequently, plant-model mismatch is unavoidable. The irregular timing of asynchronous feedback significantly exacerbates the challenges of handling this mismatch, creating a coupled effect that complicates stability and suboptimality analysis, which motivates the theoretical framework developed in this work.

To address plant–model mismatch, prior work has extensively studied MPC combined with online model identification~\cite{mesbah2022fusion}, focusing on stability and feasibility guarantees~\cite{kohler2020computationally, kohler2021robust}. In contrast, relatively few studies provide closed-loop suboptimality guarantees~\cite{wabersich2022cautious, liu2025regret, degner2026adaptive}. Even when online identification is incorporated, residual model errors persist, making it natural to analyze MPC within the certainty-equivalence framework~\cite{soloperto2019dual}, where the MPC controller relies on a nominal model obtained offline. Recent advances in analyzing the suboptimality of CE-MPC relative to infinite-horizon optimal control (IHOPC) with perfect model knowledge have focused on linear systems~\cite{liu2024stability, shi2025suboptimality} and nonlinear systems~\cite{schwenkel2020robust, liu2026certainty}. However, most existing analyses assume time-triggered feedback and do not explicitly account for asynchronous feedback inherent in event/self-triggered and multi-step implementations.

Event/self-triggered MPC provides a principled method to handle model mismatch~\cite{brunner2017robust}. By bounding the deviation between predicted and true states, state-dependent triggering rules can be designed to recompute control inputs only when this deviation exceeds a prescribed threshold~\cite{sun2019robust, sun2021dynamic}. In addition to stability and feasibility results, performance bounds based on relaxed dynamic programming~\cite{lincoln2006relaxing} have been established to quantify the infinite-horizon performance of asynchronous MPC~\cite{lu2015asynchronous, lu2022self}. However, existing approaches typically rely on intricate triggering rules involving tuning parameters~\cite{sun2019robust, lu2015asynchronous, lu2022self}, and the resulting suboptimality bounds are expressed in terms of those parameters, thereby obscuring the direct influence of the inter-execution steps (i.e., feedback frequency) on the closed-loop performance of MPC. In this work, we consider a more general asynchronous feedback framework, making our analysis independent of the specific triggering mechanisms. Moreover, existing analyses for event/self-triggered MPC often require the terminal cost to be a control Lyapunov function (CLF)~\cite{sun2019robust, sun2021dynamic}. While standard in MPC theory~\cite{rawlings2012fundamentals}, this requirement can be restrictive because constructing such terminal functions is challenging~\cite{sun2019robust, kohler2023stability}, particularly under model mismatch. Despite the clear benefits of asynchronous feedback, a unified framework characterizing the stability and suboptimality of asynchronous CE-MPC under model mismatch, with explicit dependence on the inter-execution steps, remains absent from the literature.

In line with recent results on the inherent robustness of MPC~\cite{pannocchia2011conditions} and MPC with general terminal costs~\cite{kohler2023stability}, this work examines the stability and infinite-horizon performance of general asynchronous CE-MPC for input-constrained nonlinear systems. The results of this paper are closely related to the stability and performance analysis of multi-step MPC under perfect model assumptions~\cite{grune2010analysis},~\cite[Chapter 10.4]{grune2017nonlinear}; however, we explicitly consider model mismatch in this work. More specifically, we address the following foundational questions:
\begin{enumerate}
\item \textit{Given a perfect model, how do the prediction horizon, and the inter-execution steps jointly influence closed-loop stability and infinite-horizon performance?}
\item \textit{Given plant-model mismatch, under what conditions does asynchronous CE-MPC stabilize the true system?}
\item \textit{Given plant-model mismatch, how do model mismatch, the prediction horizon, and inter-execution steps jointly affect the infinite-horizon performance?}
\end{enumerate}

In response, this paper advances the state of the art in multi-step MPC~\cite{grune2010analysis, grune2017nonlinear} and time-triggered CE-MPC~\cite{liu2026certainty} through the following three contributions:
\begin{enumerate}
\item We analyze asynchronous MPC under nominal conditions, explicitly quantifying how the prediction horizon and maximum inter-execution step govern the stability and the infinite-horizon performance without restricting the analysis to specific asynchronous feedback mechanisms.
\item We establish a stability condition for asynchronous CE-MPC that explicitly accounts for the prediction horizon, the maximum inter-execution interval, and the model mismatch, thereby quantifying how the mismatch and asynchronous feedback affect the stability.
\item We develop a novel pipeline to establish the infinite-horizon performance of asynchronous CE-MPC via perturbation analysis. Specifically, we derive a competitive-ratio bound, expressed conceptually as:
\begin{equation*}
R \geq \frac{\text{Cost of asynchronous CE-MPC}}{\text{Cost of IHOPC}}.
\end{equation*}
This bound quantifies the suboptimality gap relative to IHOPC as a function of the prediction horizon, the maximum inter-execution step, and the model mismatch.
\end{enumerate}

Importantly, our suboptimality bounds are consistent with classical results obtained in the absence of model mismatch and asynchronous execution~\cite{grune2017nonlinear, kohler2023stability}, as they confirm that using more frequent feedback (i.e., shorter inter-execution steps) and longer prediction horizons reduces the suboptimality gap. We emphasize that this work does not propose new MPC algorithms; rather, it provides a unifying theoretical stability and suboptimality analysis of asynchronous CE-MPC for uncertain nonlinear systems under the condition that the model mismatch does not change the equilibrium of the true model~\cite{schimperna2025data, liu2026certainty}. 

The remainder of the paper is organized as follows. Section~\ref{sec:2-preliminaries} introduces preliminaries and notation. Section~\ref{sec:3-problem_formulation} formulates the problem. Sections~\ref{sec:4-theoretical_analysis} and~\ref{sec:theoreticalAnalysis_model_mismatch} present the stability and performance analysis. Section~\ref{sec:5-examples} reports numerical simulations, and Section~\ref{sec:6-conclusion} concludes the paper.

%% file: 1-major_content/2-preliminaries.tex
\vspace{-0.2cm}
\section{Preliminaries}
\label{sec:2-preliminaries}
\input{1.2-preliminaries/1.2.1-notation}
\input{1.2-preliminaries/1.2.2-basic}

\input{1.2-preliminaries/1.2.3-st_mpc}

%% file: 1.2-preliminaries/1.2.1-notation.tex
\subsection{Notation}
\label{subsec2.1-notation}
The set of (non-negative) real numbers is denoted by $\bR$ ($\bR_+$). The set of natural numbers is $\bN$, and $\bI_{a:b} := \bN \cap [a, b]$ for $0 \leq a \leq b$. 
The symbols $\bzero_n$, $\bone_n$, and $\bfI_n$ denote the zero vector, one vector, and identity matrix of dimension $n$. The symbol $\|\cdot\|$ denotes the $2$-norm of a vector and the induced $2$-norm of a matrix by default. 
A function $f: \bR^n \to \bR_+$ is positive definite if $f(x) \seq 0 \iff x \seq \bzero_n$. The comparison function classes $\cK$ and $\cKL$ are defined in the usual sense following~\cite{kellett2014compendium}. 
Given $x \in \bR^n$ and $Q \in \bR^{n\times n}$, $x^\top Qx$ is denoted by $\|x\|^2_Q$. The maximum (minimum) eigenvalue of a symmetric positive-definite matrix $M$ is denoted by $\eigmax{M}$ ($\eigmin{M}$).

%% file: 1.2-preliminaries/1.2.2-basic.tex
\vspace*{-0.7cm}
\subsection{System Description \& Optimal Control}
\label{subsec2.2-system_control}
Consider a discrete-time nonlinear system modeled by
\begin{equation}
    \label{eq:sec2-basic_model}
    x_{t+1} = \tsys{t}(x_t, u_t) = f(x_t, u_t) + \Df_t(x_t, u_t),
\end{equation}
where $x_t \in \cX = \bR^n$ and $u_t \in \cU \subseteq \bR^m$ are, respectively, the state and input at time step $t \in \bN$, $\tsys{t}: \cX\times\cU \to \cX$ ($t \in \bN$) is the \textit{time-varying} true model, $f: \cX\times\cU \to \cX$ is the nominal model, and $\Df_t:\cX\times\cU \to \cX$ ($t \in \bN$) describes the \textit{unknown} time-varying model mismatch, which is typical in robotics~\cite{wang2018safe, dacs2025robust} and automotive engineering~\cite{ames2016control, liu2025robust}. The input-constraint set $\cU$ is \textit{closed}.  
Consider the state $x_t$ at time step $t$ and an input sequence $\mathbf{v}_{0:n} = (v_0, \dots, v_{n}) \in \mathcal{U}^{n+1}$ of length $n+1 \in \mathbb{I}_{1:\infty}$. For $k \in \mathbb{I}_{0:n+1}$, let $\psi_{k|t}(\mathbf{v}_{0:n})$ and $\psi^\ast_{k|t}(\mathbf{v}_{0:n})$ denote the nominal and true states, respectively, that evolve according to $\psi_{k+1|t}(\mathbf{v}_{0:n}) = f(\psi_{k|t}(\mathbf{v}_{0:n}), v_k)$ and $\psi^\ast_{k+1|t}(\mathbf{v}_{0:n}) = f^\ast_{t+k}(\psi^\ast_{k|t}(\mathbf{v}_{0:n}), v_k)$, with the initial conditions $\psi_{0|t}(\mathbf{v}_{0:n}) = \psi^\ast_{0|t}(\mathbf{v}_{0:n}) = x_t$. In addition, the shorthand notations $\tsysn \coloneqq \{f^\ast_\tau\}^{t+N-1}_{\tau=t}$ and $\tsysc \coloneqq \{f^\ast_t\}^{\infty}_{t=0}$ are introduced for convenience. The following assumption summarizes the properties of the considered nonlinear system.
\begin{assumption}
    \label{ass:system_setting}
    The functions $f$ and $\{\Delta f_t\}^\infty_{t=0}$, along with the set $\mathcal{U}$, satisfy the following conditions:
    \begin{enumerate}[label=(\alph*), font=\upshape]
        \item $f(\bzero_n, \bzero_m) = \bzero_n$ and $\forall t \in \bN$, $\Df_t(\bzero_n,\bzero_m) = \bzero_n$;
        \item $u = \bzero_m$ lies in the interior of $\cU$;
        \item $f$ is Lipschitz continuous with respect to $x$ on $\cX$ uniformly over $\cU$, i.e., $\exists L_{f,x} \in (0, +\infty)$ such that $\forall u \in \cU$,
        \begin{equation}
            \label{eq:Lipschitz_nominal}
            \|f(x',u) - f(x'',u)\| \leq L_{f,x}\|x'-x''\|
        \end{equation}
        for any $x',x''\in\cX$;
        \item The functions $\{\Delta f_t\}_{t=0}^\infty$ are uniformly Lipschitz continuous on $\mathcal{X} \times \mathcal{U}$: for all $t \in \mathbb{N}$,
        \begin{multline}
            \label{eq:Lipschitz_uncertainty}
            \|\Delta f_t(x',u') - \Delta f_t(x'',u'')\| \\ \leq L_{\Delta f}(\mmf) \left( \|x' - x''\| + \|u' - u''\| \right)
        \end{multline}
        for any $x', x'' \in \mathcal{X}$ and $u', u'' \in \mathcal{U}$, where $\mmf \in [0, \bar{\varepsilon}_f]$, $L_{\Delta f}\hspace{-0.3em}:\hspace{-0.4em} \mathbb{R}_+ \sto \mathbb{R}_+$ is an increasing, continuous function satisfying $\lim_{\varepsilon \searrow 0} L_{\Delta f}(\varepsilon) = 0$, and $L_{\Delta f}(\bar{\varepsilon}_f) \eqqcolon \bar{L}_{\Delta f}$.
    \end{enumerate}
\end{assumption}
The condition (a), being widely satisfied in many applications (e.g.,~inverted pendulum~\cite{kohler2023stability, kuntz2024beyond} and robotics~\cite{wang2018safe, worthmann2015model}), ensures that $(\bzero, \bzero)$ is a shared equilibrium of both the nominal and true models, which then enables CE-MPC to stabilize the true system at the origin~\cite{liu2026certainty, schimperna2025data}. 
The condition (b) assumes that input constraints are inactive at $(\bzero, \bzero)$, which is standard in MPC formulations~\cite{rawlings2017model}. 
The condition (c) simplifies the analysis of state perturbations in nonlinear MPC~\cite{liu2026certainty, wabersich2022cautious}, and it accommodates many applications (e.g., robotics~\cite{worthmann2015model, wang2018safe, dacs2025robust} and power electronics~\cite{wang2020event}). Condition (d) requires the rate of change of $\Delta f_t$ to be bounded. Given that $\Delta f_t(\mathbf{0}_n, \mathbf{0}_m) = \mathbf{0}_n$, this requirement implies that $\|\Delta f_t(x, u)\| \leq L_{\Delta f}(\mmf)(\|x\| + \|u\|), \forall (x,u) \in \cX\times\cU$, yielding a bound on $\Delta f_t$ scaled by $\mmf$ via a Lipschitz-type condition, and $\mmf$ thus represents the mismatch level with a prior upper bound $\bar{\varepsilon}_f$. Such a formulation seamlessly accommodates both parametric modeling errors, where $\mmf$ is the norm of the difference between the nominal and true parameters~\cite{liu2026certainty} (e.g., systems modeled by nonlinear functions with known kernels and unknown coefficients), and operator-theoretic modeling frameworks~\cite{schimperna2025data}, where the nominal model is constructed using the Koopman operator, and $\mmf$ represents the corresponding residual approximation error.

\begin{remark}
Smoothness of the functions $f$ and $\Delta f_t$ is not required. This flexibility enables the framework to accommodate piecewise-affine (PWA) true systems under either static smooth or PWA nominal approximations, thereby naturally capturing a broad class of hybrid MPC problems~\cite{bemporad2002hybrid}.
\end{remark}

The control objective is to regulate the state to the origin asymptotically while minimizing the quadratic stage cost:
\begin{equation}
    \label{eq:quadratic_stage_cost}
    \ell(x, u) = \|x\|^2_Q + \|u\|^2_R,
\end{equation}
where $Q \succ 0$ and $R \succ 0$ are positive definite matrices\footnote{Although $R$ is conventionally allowed to be positive semi-definite, strict positive definiteness ($R \succ 0$) is commonly required in theoretical analyses to establish stability and performance guarantees~\cite{liu2024stability, li2025learning}.}. The optimized stage cost~\cite{grune2017nonlinear, kohler2023stability} follows as
\begin{equation}
    \label{eq:optimized_stage_cost}
    \els(x) := \min_{u \in \cU}\ell(x, u) = \ell(x,\bzero_m) = \|x\|^2_Q.
\end{equation}
Starting from an initial state $x_0 = x$, the oracle infinite-horizon optimal control solves the following optimization problem:
\begin{align*} 
	\probopt{x}{\tsysc}: & \min_{\{\nu_{k}, \xi_{k}\}^{\infty}_{k=0}} \sum^{\infty}_{k=0} \cost\left(\xi_{k}, \nu_{k}\right)
	\\ \text{s.t. } & \;\xi_{k+1} = \tsys{k}(\xi_{k}, \nu_{k}), \quad \forall k \in \bN;
	\\ & \; \nu_{k} \in \cU, \quad \forall k \in \bN;
	\\ & \; \xi_{0} = x,
\end{align*}
where $\xi_{k}$ and $\nu_{k}$ denote the predicted state and input at prediction step $k$, respectively, given the initial state $\xi_{0} = x$. In this work, $\probopt{x}{\tsysc}$ is assumed to be \textit{well-posed}, i.e., it admits a global minimizer $\bu^\star_{0:\infty}(x;\tsysc)$ and its associated optimal value $V_{\infty}(x;\tsysc) < +\infty$, implying that $\bu^\star_{0:\infty}(x;\tsysc)$ stabilizes the system at the origin asymptotically. Under input constraints, $\probopt{x}{\tsysc}$ may not be well-posed for an arbitrary $x$, and we therefore define the cost-induced region of attraction (ROA) of \eqref{eq:sec2-basic_model} as follows:
\begin{definition}[Cost-induced ROA~\cite{liu2026certainty}]
	\label{def:sec_pre:roa}
	The ROA of the system \eqref{eq:sec2-basic_model} for the stage cost $\ell$ in \eqref{eq:quadratic_stage_cost} is $\xroa(\ell) = \{x \in \cX \mid \exists \bu_{0:\infty} \in \cU^{\infty} \text{  s.t.}\;\; V_{\infty}(x;\tsysc) < +\infty\}$.
\end{definition}
It can be verified that $\xroa(\ell)$ is \textit{control invariant}~\cite{schimperna2025data, liu2026certainty}, and it is assumed that $x_0 \in \xroa(\ell)$; otherwise, the performance of IHOPC cannot be quantified. Characterizing $\xroa(\ell)$ without knowing $\tsysc$ is still an open challenge, and it is out of the scope of the current paper.

%% file: 1.2-preliminaries/1.2.3-st_mpc.tex
\vspace{-0.1cm}
\subsection{Asynchronous Model Predictive Control}
\label{subsec2.3-ampc}
Solving $\probopt{x}{\tsysc}$ is intractable for general nonlinear systems, and CE-MPC serves as a practical \textit{finite-horizon} surrogate using the nominal model. Specifically, at any time step $t$, the CE-MPC controller solves the following optimization problem:
\begin{align*} 
\probmpc{x_t}{f}: & \min_{\{\nu_{k|t}\}^{N-1}_{k=0}, \{\xi_{k|t}\}^{N}_{k=0}} \sum^{N-1}_{k=0} \cost(\xi_{k|t}, \nu_{k|t}) + \elf(\xi_{N|t}) 
\\ \text{s.t.  } & \;\xi_{k+1|t} = f(\xi_{k|t}, \nu_{k|t}), \quad \forall k \in \mathbb{I}_{0:N-1}; 
\\ & \; \nu_{k|t} \in \cU, \quad \forall k \in \mathbb{I}_{0:N-1}; 
\\ & \; \xi_{0|t} = x_t, 
\end{align*}
where $\xi_{k|t}$ and $\nu_{k|t}$ are, respectively, the predicted state and input at prediction step $k$ at time step $t$, $N \in \bN$ is the prediction horizon, and $\elf: \cX\to \bR_+$, defined by $\elf(x) = \|x\|^2_P$ with $P \succeq Q$, is the quadratic \textit{terminal} cost satisfying $\elf(x) \geq \els(x)$. Denote the optimal solution to $\probmpc{x_t}{f}$ by $\{\nu^\star_{k|t}(f)\}^{N-1}_{k=0}$ and $\{\xi^\star_{k|t}(f)\}^{N}_{k=0}$. In standard MPC with \textit{synchronous} feedback~\cite{rawlings2017model}, only the first optimal input $\nu^\star_{0|t}(f)$ is applied to the system, discarding the remaining ones $\{\nu^\star_{k|t}(f)\}^{N-1}_{k=1}$. Such \textit{time-triggered} control schemes require measuring the state $x_t$ and solving the problem $\probmpc{x_t}{f}$ at each time step $t$, which is not computationally efficient. The asynchronous CE-MPC scheme only solves the optimization problem $\probmpc{x_t}{f}$ at the triggering time steps $t_j \in \cT_M$, where the feedback time sequence $\cT_M$ is given by
\begin{equation}
    \label{eq:feedback_time_sequence}
    \cT_M = \{t_1, t_2, \dots, t_j, \dots \mid t_j \in \bN\}
\end{equation}
subject to $t_1 \seq 0$ and $t_{j}+1 \sleq t_{j+1} \sleq t_{j} + M$, with $M \leq N-1$ being the \textit{maximum} allowable inter-execution step. Consequently, the control loop is closed only at the time steps $t_j \in \cT_M$. The individual inter-execution steps are given as
\begin{equation*}
    m_j = t_{j+1} - t_j, \quad j \in \bN,
\end{equation*}
which satisfies $m_j \leq M$. Accordingly, the inputs applied at time steps $(t_j, t_j +1, \dots, t_{j+1}-1)$ are $\{\nu^\star_{k|t_j}(f)\}^{m_j-1}_{k=0}$. Note that the resulting MPC problem is the multi-step CE-MPC problem~\cite{grune2010analysis} when $\forall j \in \bN$, $m_j = m^\ast$, and it further degenerates to time-triggered CE-MPC~\cite{rawlings2017model, liu2026certainty} if $m^\ast=1$.

%% file: 1-major_content/3-problem_formulation.tex
\section{Problem Formulation}
\label{sec:3-problem_formulation}
Let $V_N(x;f)$ denote the optimal value of $\probmpc{x}{f}$, which implicitly defines the CE-MPC value function $V_N(\cdot;f): \cX \to \bR_+$ for a given horizon $N$. To analyze the resulting infinite-horizon performance, we introduce a \textit{cost controllability} condition on $V_N(\cdot;f)$ alongside a \textit{relaxed} control Lyapunov function (rCLF) property on the terminal cost $\elf$.
\begin{assumption}[Nominal Cost Controllability]
\label{ass:cost_controllability}
    There exists a uniform bound $\umpc$ such that for any prediction horizon $i \in \bI_{1:\infty}$, there exists a constant $\mpc_i \in (0, \umpc]$ such that $\forall x \in \xroa(\ell)$,
    \begin{equation}
        \label{eq:cost_controllability}
        V_i(x;f) \leq (1 + \mpc_i)\els(x),
    \end{equation}
    where $\els$ is the optimized stage cost defined in~\eqref{eq:optimized_stage_cost}.
\end{assumption}
\begin{assumption}[Terminal Cost]
    \label{ass:terminal_cost_relaxed_clf}
    The terminal cost function $\elf$ is a relaxed control Lyapunov function (rCLF), i.e., there exists a constant $\epf \in \bR_+$ such that $\forall x \in \xroa(\ell)$,
    \begin{equation}
        \label{eq:relaxed_clf}
        \min_{u\in\cU}\big\{\elf(f(x,u)) + \ell(x,u)\big\} \leq (1 + \epf)\els(x).
    \end{equation}
\end{assumption}
Assumptions~\ref{ass:cost_controllability} and~\ref{ass:terminal_cost_relaxed_clf} are standard, well-justified technical tools in MPC stability and performance analysis using relaxed dynamic programming~\cite{grune2008infinite, grune2010analysis, grune2017nonlinear, kohler2023stability, liu2026certainty}. Specifically, Assumption~\ref{ass:cost_controllability} holds locally within a compact neighborhood of the origin if the system possesses a controllable linearization at the origin and a strictly positive definite stage cost~\cite{grune2017nonlinear}. Inequality~\eqref{eq:cost_controllability} bounds the MPC value function relative to the stage cost, serving as an analytical bridge to characterize the infinite-horizon performance of MPC as a function of the prediction horizon~\cite{grune2008infinite, grune2017nonlinear}. This assumption is not restrictive; it represents a baseline well-posedness requirement, as any system violating it is inherently unstabilizable under a bounded cost. Because engineering systems are fundamentally designed to be locally controllable around their steady state, Assumption~\ref{ass:cost_controllability} has been used in mobile robots~\cite{worthmann2015model} and inverted pendulum~\cite{kohler2023stability}. Meanwhile, Assumption~\ref{ass:terminal_cost_relaxed_clf} serves to analyze the stability of the nominal system~\cite{kohler2023stability}, where the constant $\epf$ quantifies the degree of relaxation. Notably, \eqref{eq:relaxed_clf} recovers the standard CLF property~\cite{rawlings2017model} when $\epf=0$. Furthermore, Assumption~\ref{ass:terminal_cost_relaxed_clf} can be trivially satisfied by choosing a sufficiently large $\epf$~\cite{kohler2023stability}. In particular, $\frac{1+\gamma_1}{1+\epf} \in \left[\frac{\eigmin{P}}{\eigmax{Q}}, \frac{\eigmax{P}}{\eigmin{Q}}\right]$~\cite[Assumption 5]{kohler2023stability}, for which $\epf = \gamma_1$ offers a simple baseline. Crucially, because~\eqref{eq:cost_controllability} and~\eqref{eq:relaxed_clf} depend solely on the nominal dynamics $f$, they are completely decoupled from the unknown true plant $\tf_t$. Consequently, the parameters $\mpc_i$, $\epf$, and the uniform upper bound $\umpc$ can all be approximated numerically offline~\cite{kohler2023stability, grune2017nonlinear, schimperna2025data}.

To facilitate the subsequent analysis, we characterize the true state trajectory under the asynchronous CE-MPC framework at the \textit{open-loop} time steps $(t_j, t_j+1, \dots, t_{j+1})$, driven by the inputs $\{\nu^\star_{k|t_j}(f)\}^{m_j-1}_{k=0}$. For notational simplicity, for each $k \in \bI_{0:m_j}$, let $x_{j,k} \coloneqq x_{t_j+k}$ denote the true system state at time step $t_j+k$ within the $j$-th inter-execution range, and let $\nu^\star_{j,k}(f) \coloneqq \nu^\star_{k|t_j}(f)$ represent the applied input. The true state evolution within the $j$-th range is then governed by
\begin{equation}
    \label{eq:true_state_evolution}
    x_{j,k+1} = f^\ast_{t_j+k}(x_{j,k}, \nu^\star_{j,k}(f)),\;k \in \bI_{0:m_j-1},
\end{equation}
subject to the boundary condition $x_{j,m_j} = x_{j+1,0}$ matching consecutive ranges. 

Accordingly, given the feedback time sequence $\cT_M$ and the initial state $x_0 = x_{1,0} = x$, the infinite-horizon performance of the asynchronous CE-MPC controller is given by:
\begin{equation}
    \cJ^{[N,M]}_{\infty}(x;f) = \sum^\infty_{j=1}\sum^{m_j-1}_{k=0}\ell(x_{j,k+1}, \nu^\star_{j,k}(f)).
\end{equation}
The theoretical objectives of the current work are twofold: (i) to establish rigorous conditions under which the asynchronous CE-MPC scheme asymptotically stabilizes the true system at the true equilibrium $(\bzero_n, \bzero_m)$, ensuring that $\lim_{j\to\infty}x_{j,k} = \bzero_n$, and (ii) to derive an explicit upper bound on the asynchronous CE-MPC infinite-horizon performance $\cJ^{[N,M]}_{\infty}(x;f)$ relative to the ideal infinite-horizon performance $V_{\infty}(x;\tsysc)$, thus explicitly quantifying the suboptimality gap induced by the combined effects of a finite prediction horizon, plant-model mismatch, and the asynchronous feedback mechanism.

%% file: 1-major_content/4-theoretical_analysis.tex
\vspace*{-0.2cm}
\section{Analysis of Nominal Asynchronous Model Predictive Control}
\label{sec:4-theoretical_analysis}
In this section, we analyze the nominal stability and performance of asynchronous MPC under the assumption of perfect model alignment, i.e., $f^\ast_t = f$ ($\Df_t = \bzero_n$) for all $t \in \bN$. These results extend~\cite{grune2010analysis} by generalizing from synchronous multi-step feedback to arbitrary asynchronous feedback, while explicitly characterizing the stability conditions and performance bounds as functions of the maximum open-loop execution interval $M$. Furthermore, this analysis generalizes the frameworks in~\cite{liu2026certainty} and~\cite{kohler2023stability} by quantifying the effects of asynchronous feedback. While serving as an essential foundation for the subsequent robust analysis under plant-model mismatch in Section~\ref{sec:theoreticalAnalysis_model_mismatch}, these nominal results possess independent theoretical merit due to these distinct extensions.

\begin{theorem}[Nominal Stability and Performance]
    \label{thm:nominal_stability_performance}
    Let Assumptions~\ref{ass:system_setting}--\ref{ass:terminal_cost_relaxed_clf} hold. Define the stability index $\delta_{N,M}$ as
    \begin{equation}
        \label{eq:nominal_stability_margin}
        \delta_{N,M} := \frac{(1+\umpc)\mpc_N[(1 + \epf)^M-1]}{1+\umpc+(N-M)(1+\epf)^M},
    \end{equation}
    where $\umpc$ and $\mpc_N$ are given in Assumption~\ref{ass:cost_controllability}, and $\epf$ is given in Assumption~\ref{ass:terminal_cost_relaxed_clf}. If the prediction horizon $N$ and maximum execution interval $M$ satisfy $\delta_{N,M} < 1$, then:
    \begin{enumerate}
        \item[\normalfont(i)] The asynchronous closed-loop nominal system
        \begin{equation*}
            \label{eq:nominal_closed_loop}
            x_{j,k+1} = f(x_{j,k}, \nu^\star_{j,k}(f)),\;j \in \bI_{1:\infty},\;k \in \bI_{0:m_j-1},
        \end{equation*}
        is asymptotically stable at the origin under the MPC controller specified by $\probmpc{x_t}{f}$ and the feedback time sequence $\cT_M$ in~\eqref{eq:feedback_time_sequence}.
        \item[\normalfont(ii)] The infinite-horizon performance satisfies the following performance bound:
        \begin{equation}
            \label{eq:nominal_performance_bound}
            \cJ^{[N,M]}_{\infty}(x;f) \leq \underbrace{\frac{1 + \frac{\eigmax{P}}{\eigmin{Q}}\left(\frac{\umpc}{1+\umpc}\right)^N}{1 - \delta_{N,M}}}_{:= \Rcrnom} V_{\infty}(x;f),
        \end{equation}
        where $\Rcrnom$ is the nominal competitive-ratio bound.
    \end{enumerate}
\end{theorem}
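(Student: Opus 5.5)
The approach is relaxed dynamic programming in the multi-step form of Gr\"une and Pannek, with the terminal-cost relaxation of K\"ohler et al. The aim is a per-interval decrease
\begin{equation*}
V_N(x_{j+1,0};f) \le V_N(x_{j,0};f) - (1-\delta_{N,M})\sum_{k=0}^{m_j-1}\ell(x_{j,k},\nu^\star_{j,k}(f)),
\end{equation*}
valid for every $m_j\le M$, with $\delta_{N,M}$ the worst case over $m\in\bI_{1:M}$. Stability and the performance bound both follow from this inequality.

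\textbf{Step 1: candidate solution.} Fix $j$ and $m=m_j$. Since the model is exact, $x_{j+1,0}=\xi^\star_{m|t_j}(f)$. At $x_{j+1,0}$, use the tail $\{\nu^\star_{k|t_j}\}_{k=m}^{N-1}$ of the previous optimal input sequence, then append $m$ inputs that attain the minimum in Assumption~\ref{ass:terminal_cost_relaxed_clf}, applied recursively from $\xi^\star_{N|t_j}$. Iterating \eqref{eq:relaxed_clf} $m$ times and using $\els\le\elf$ gives, after a geometric sum, a terminal-step cost of at most $(1+\epf)^m\els(\xi^\star_{N|t_j})$. Hence
\begin{equation*}
V_N(x_{j+1,0})\le V_N(x_{j,0})-\sum_{k<m}\ell(\xi^\star_{k|t_j},\nu^\star_{k|t_j})+[(1+\epf)^m-1]\els(\xi^\star_{N|t_j}).
\end{equation*}
The remaining task is to bound the error term $\els(\xi^\star_{N|t_j})$ by a fraction of the running cost over the first $m$ steps.

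\textbf{Step 2: bounding the terminal stage cost.} By Bellman's principle, each tail of the optimal trajectory is optimal. Apply Assumption~\ref{ass:cost_controllability} to the tails starting at $\xi^\star_{k|t_j}$ for $k=m,\dots,N-1$, together with the relaxed CLF, in the style of Gr\"une's $\alpha$-formula. The terminal cost is $\elf(\xi^\star_N)$, and $\els(\xi^\star_{N})\le$ the stage cost at every later predicted step up to constants. Summing these tail inequalities bounds $(N-m)(1+\epf)^m\els(\xi^\star_N)$ plus a $(1+\umpc)$-weighted term by $\mpc_N$ times the cost of the first $m$ steps. This gives
\begin{equation*}
[(1+\epf)^m-1]\els(\xi^\star_{N|t_j})\le \delta_{N,m}\sum_{k<m}\ell(\xi^\star_{k|t_j},\nu^\star_{k|t_j}).
\end{equation*}
The expression $\delta_{N,m}$ is increasing in $m$: the numerator grows and the denominator's $(N-m)$ factor shrinks. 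So $\delta_{N,m}\le\delta_{N,M}$. Producing exactly the denominator $1+\umpc+(N-M)(1+\epf)^M$ is the main obstacle. First I would try a convex combination: take weights $1+\umpc$ on $V_N(x_{j,0})\le(1+\mpc_N)\els(x_{j,0})$ and weight $1$ on each of the $N-m$ tail bounds, then optimize the weights.

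\textbf{Step 3: conclusions.} With $\delta_{N,M}<1$, $V_N(\cdot;f)$ decreases strictly at triggering times. It is sandwiched between $\els$ and $(1+\umpc)\els$, which are quadratic in $Q$, so it is a Lyapunov function at those times. Within an interval, Lipschitz continuity of $f$ with $k\le M$ bounds the intermediate states. This gives asymptotic stability, part (i). For part (ii), telescope the decrease inequality to get $(1-\delta_{N,M})\cJ^{[N,M]}_\infty\le V_N(x;f)$. Then bound $V_N(x;f)\le V_\infty(x;f)+\elf(\text{state at }N)$ by using the IHOPC input truncated at $N$. Along the infinite-horizon optimal trajectory, Assumption~\ref{ass:cost_controllability} gives geometric decay of the cost-to-go, $V_\infty(\xi_N)\le(\umpc/(1+\umpc))^N V_\infty(x)$. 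Bounding $\elf\le(\eigmax{P}/\eigmin{Q})\els$ and $\els(\xi_N)\le V_\infty(\xi_N)$ then yields the numerator of $\Rcrnom$. Dividing by $1-\delta_{N,M}$ gives \eqref{eq:nominal_performance_bound}.
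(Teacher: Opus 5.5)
Your Step~1 is correct, and slightly sharper than the paper, which keeps $(1+\kappa_{\mathrm{T}})^m\ell_{\mathrm{T}}(\xi^\star_{N|t_j})$. Step~3 is essentially fine, including your geometric-decay derivation of the numerator of $R_{N,M}$. For the intermediate states, $\sum_{k}\ell_{k|t_j}(f)\le V_N(x_{t_j};f)$ is the cleaner bound, since Lipschitz continuity in $x$ alone does not control $f(\mathbf{0},u)$.

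The gap is Step~2. Your target inequality is $[(1+\kappa_{\mathrm{T}})^m-1]\ell^\star(\xi^\star_{N|t_j})\le\delta_{N,m}\sum_{k<m}\ell_{k|t_j}(f)$. It involves only the old open-loop trajectory and does not follow from the assumptions, so no weighting of Bellman-tail inequalities along that trajectory can give it. A counterexample:
\begin{itemize}
\item Take $f(x,u)=x/\sqrt{2}$, $Q=P=R=1$, $\mathcal{U}=[-1,1]$.
\item Then $V_i(x)=(2-2^{-i})x^2$, so $\gamma_i=1-2^{-i}$ and $\overline{\gamma}=1$.
\item The rCLF inequality holds for every $\kappa_{\mathrm{T}}\ge\tfrac{1}{2}$; take $\kappa_{\mathrm{T}}=\tfrac{7}{2}$.
\item For $N=2$, $m=M=1$ this gives $\delta_{2,1}=\tfrac{21}{26}<1$.
\item From $x_{t_j}=1$ you have $\ell_{0|t_j}(f)=1$ and $\ell^\star(\xi^\star_{2|t_j})=\tfrac{1}{4}$, so your left-hand side is $\tfrac{7}{8}>\tfrac{21}{26}$.
\end{itemize}
The theorem's decrease still holds here, since $V_2(\xi^\star_{1|t_j};f)-V_2(1;f)=-\tfrac{7}{8}$. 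But your single terminal-extension candidate only certifies $-1+\tfrac{7}{2}\cdot\tfrac{1}{4}=-\tfrac{1}{8}$, which is weaker than the required $-\tfrac{5}{26}$.

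The missing idea is a second candidate at $x_{t_{j+1}}=\xi^\star_{m|t_j}$. Follow the old inputs up to some $k_0\in\mathbb{I}_{m:N-1}$, then switch to the optimal solution of the horizon-$(N-k_0+m)$ problem from $\xi^\star_{k_0|t_j}$. That tail costs at most $(1+\overline{\gamma})\ell^\star(\xi^\star_{k_0|t_j})$. This yields
\[
V_N(x_{t_{j+1}};f)-V_N(x_{t_j};f)\le-\sum_{k<m}\ell_{k|t_j}(f)-\ell_{\mathrm{T}}(\xi^\star_{N|t_j})+(1+\overline{\gamma})\ell^\star(\xi^\star_{k_0|t_j}).
\]
Separately, $V_N(x_{t_j};f)\le(1+\gamma_N)\ell^\star(x_{t_j})$ gives
\[
\sum_{k=m}^{N-1}\ell^\star(\xi^\star_{k|t_j})\le\gamma_N\sum_{k<m}\ell^\star(\xi^\star_{k|t_j})-\ell_{\mathrm{T}}(\xi^\star_{N|t_j}).
\]

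The paper then splits on whether $\ell_{\mathrm{T}}(\xi^\star_{N|t_j})\le\Gamma_{N,m}\sum_{k<m}\ell^\star(\xi^\star_{k|t_j})$, where $\Gamma_{N,m}=(1+\overline{\gamma})\gamma_N/[1+\overline{\gamma}+(N-m)(1+\kappa_{\mathrm{T}})^m]$.
\begin{itemize}
\item If the inequality holds, your candidate suffices.
\item Otherwise, the paper picks $k_0$ with below-average $\ell^\star(\xi^\star_{k_0|t_j})$ and uses the new candidate.
\end{itemize}
$\Gamma_{N,m}$ is exactly the threshold that equalizes the two bounds, and this is where the denominator you were after comes from.

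The weighting you had in mind ($1+\overline{\gamma}$ against $1$ on each of $N-m$ terms) is essentially right. It must be applied to these competing upper bounds on $V_N(x_{t_{j+1}};f)$, not to inequalities about the old trajectory. Average the new candidate over $k_0$, then take the convex combination with your Step~1 bound that eliminates the terminal-cost term. This gives exactly $\delta_{N,m}$.

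Finally, your monotonicity argument is incomplete, because $(N-m)(1+\kappa_{\mathrm{T}})^m$ need not decrease in $m$. The paper instead checks the sign of the derivative.
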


The proof of Theorem~\ref{thm:nominal_stability_performance} is given in Appendix~\ref{appendix:A-proof-thm2}. For a fixed maximum step $M$, it holds that $\lim_{N \to \infty} \delta_{N,M} = 0$ and $\lim_{N \to \infty} \Rcrnom = 1$. This implies that a longer prediction horizon is beneficial to enhance system performance under perfect modeling; it further indicates that as the horizon approaches infinity, the suboptimality induced by the asynchronous feedback mechanism vanishes. Notably, by setting $M = 1$ (i.e., $m_j = 1$ for all $j \in \bI_{1:\infty}$), Theorem~\ref{thm:nominal_stability_performance} recovers the standard synchronous results in~\cite[Proposition 2]{liu2026certainty} and~\cite[Theorem 5]{kohler2023stability}. Conversely, for finite prediction horizon $N$, $\delta_{N,M}$ increases monotonically with $M$ (see Lemma~\ref{lm:increasing_function} in Appendix~\ref{appendix:A-lemmas}). This confirms that performance may degrade as the maximum inter-execution step expands, highlighting the necessity of frequent feedback to maintain performance when prediction capabilities are constrained~\cite[Chapter 10.4]{grune2017nonlinear}.

%% file: 1-major_content/4.5-main_results.tex
\section{Stability \& Performance Analysis Under Plant-Model Mismatch}
\label{sec:theoreticalAnalysis_model_mismatch}
This section extends the results in Section~\ref{sec:4-theoretical_analysis} to account for plant-model mismatch. First, Section~\ref{sec4.5-1:perturbation_analysis} analyzes cost and value function perturbations to quantify mismatch impacts. Leveraging this, Section~\ref{sec4.5-2:final_analysis} delivers our main contributions: an explicit closed-loop stability condition, a competitive-ratio performance bound, and an analysis pipeline. These results advance~\cite{liu2026certainty} by accommodating arbitrary asynchronous feedback, bypassing restrictive constraint qualifications, and permitting time-varying, non-smooth mismatch. All proofs are provided in Appendix~\ref{appendix:B}.

\input{1.4.5-main_results_analysis/1.4.5.1-perturbation}
\input{1.4.5-main_results_analysis/1.4.5.2-full_analysis}

%% file: 1.4.5-main_results_analysis/1.4.5.1-perturbation.tex
\subsection{Perturbation Analysis}
\label{sec4.5-1:perturbation_analysis}
To analyze the perturbation of the MPC value function under model mismatch, a bound on the \textit{open-loop} state perturbation is first established.

\begin{lemma}[State Perturbation] \label{lem:state_perturbation}
    Let Assumption~\ref{ass:system_setting} hold. For any time step $t$ and input sequence $\bv_{0:n} \in \cU^{n+1}$, the state perturbation satisfies
    \begin{multline} \label{eq:state_perturbation}
    \hspace{-1em}\|\psi^\ast_{k+1|t}(\bv_{0:n}) - \psi_{k+1|t}(\bv_{0:n})\| \\ 
    \hspace{-0.6em}\leq \sum^{k}_{i=0} \left[L^\ast_f(\mmf)\right]^{k-i} \|\Df_{t+i}(\psi_{i|t}(\bv_{0:n}), v_i)\|, \forall k \in \bI_{0:n}, \hspace{-0.4em}
    \end{multline}
    where $L^\ast_f(\mmf) \coloneqq L_f+ L_{\Df}(\mmf)$ with $L_f$ and $L_{\Df}(\mmf)$ given in Assumption~\ref{ass:system_setting}. In addition, the inequality~\eqref{eq:state_perturbation} also holds when $[L^\ast_f(\mmf)]^{k-i}\Df_{t+i}(\psi_{i|t}(\bv_{0:n}), v_i)$ is substituted with $(L_f)^{k-i}\Df_{t+i}(\psi^\ast_{i|t}(\bv_{0:n}), v_i)$, i.e., when the true state trajectory is employed in the upper bound.
\end{lemma}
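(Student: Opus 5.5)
Write $e_k \coloneqq \psi^\ast_{k|t}(\bv_{0:n}) - \psi_{k|t}(\bv_{0:n})$ for the open-loop state error, which satisfies $e_0 = \bzero_n$. The inequality~\eqref{eq:state_perturbation} follows from a one-step recursion for $e_k$, unrolled by a discrete Gr\"onwall/induction argument. (Here $L_f$ is the constant $L_{f,x}$ of Assumption~\ref{ass:system_setting}(c).)

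For the first version, insert and subtract $\Df_{t+k}$ evaluated at the \emph{nominal} state:
\begin{align*}
e_{k+1} &= f(\psi^\ast_{k|t},v_k)+\Df_{t+k}(\psi^\ast_{k|t},v_k) - f(\psi_{k|t},v_k)\\
&= \big[f^\ast_{t+k}(\psi^\ast_{k|t},v_k)-f^\ast_{t+k}(\psi_{k|t},v_k)\big] + \Df_{t+k}(\psi_{k|t},v_k).
\end{align*}
The bracket is a difference of the true map $f^\ast_{t+k}=f+\Df_{t+k}$ at a common input. Assumption~\ref{ass:system_setting}(c) gives the bound $L_f\|e_k\|$ for the $f$-part. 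Assumption~\ref{ass:system_setting}(d), applied with $u'=u''=v_k$, gives $L_{\Df}(\mmf)\|e_k\|$ for the $\Df$-part. Hence
\[
\|e_{k+1}\|\le L^\ast_f(\mmf)\|e_k\| + \|\Df_{t+k}(\psi_{k|t},v_k)\|.
\]

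For the second version, insert and subtract instead at the \emph{true} state:
\[
e_{k+1} = \big[f(\psi^\ast_{k|t},v_k)-f(\psi_{k|t},v_k)\big] + \Df_{t+k}(\psi^\ast_{k|t},v_k).
\]
Only the nominal Lipschitz constant is needed here, which yields
\[
\|e_{k+1}\|\le L_f\|e_k\| + \|\Df_{t+k}(\psi^\ast_{k|t},v_k)\|.
\]

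To finish, induct on $k\in\bI_{0:n}$ using $e_0=\bzero_n$. The base case $k=0$ gives $\|e_1\|\le\|\Df_t(x_t,v_0)\|$, which matches the $i=0$ term. For the inductive step, substitute the hypothesis $\|e_k\|\le\sum_{i=0}^{k-1}L^{k-1-i}\|\Df_{t+i}(\cdot)\|$ into the recursion. Multiplying by $L$ and adding the new term gives exactly $\sum_{i=0}^{k}L^{k-i}\|\Df_{t+i}(\cdot)\|$, with $L=L^\ast_f(\mmf)$ or $L=L_f$ respectively.

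There is no genuine obstacle. The only point needing care is choosing the insertion point correctly in each version: the nominal-state insertion is what forces the larger constant $L^\ast_f$, while the true-state insertion needs only $L_f$. One should also note that Assumption~\ref{ass:system_setting}(d) holds uniformly in $t$, so a single constant serves across the time-varying mismatch. Finally, the indices stay in range: the recursion is used only for $k\le n$, where $v_k$ is defined.
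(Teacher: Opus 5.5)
Your proposal is correct and takes essentially the same approach as the paper. Both arguments build a one-step recursion by inserting $\Df_{t+k}$ at the nominal state, which yields the constant $L_f + L_{\Df}$, or at the true state, which yields only $L_f$, and then unroll it by induction from a zero initial error. Your version is in fact slightly tighter on one point: you apply Assumption~\ref{ass:system_setting}(d) directly with $L_{\Df}(\mmf)$, which matches the stated constant $L^\ast_f(\mmf)$, whereas the paper's inductive step uses the larger $\bar{L}_{\Df}$.
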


To proceed with the perturbation analysis, the model mismatch is then bounded in terms of the stage cost via the \textit{consistent error-matching} property~\cite[Proposition 1]{liu2026certainty}. For quadratic costs, this property is formulated as follows:
\begin{lemma}[Consistent Error Matching]
    \label{lem:error_matching}
    Let Assumption~\ref{ass:system_setting} hold. Define $\mu_{\mathrm{ce}} := 2\left(\eigmin{Q}^{-1} + \eigmin{R}^{-1}\right)$. For all $t \in \bN$, the mismatch function $\Df_t$ satisfies
    \begin{equation}
        \label{eq:mismatch_bound}
        \hspace{-0.6em} \|\Df_t(x,u)\|^2 \leq \mu_{\mathrm{ce}}[L_{\Df}(\mmf)]^2 \ell(x,u), \forall (x,u) \in \cX\times\cU,
    \end{equation}
    where $L_{\Df}$ is given in Assumption~\ref{ass:system_setting}-(d).
\end{lemma}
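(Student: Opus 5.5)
The plan is to combine the linear growth bound implied by Assumption~\ref{ass:system_setting}-(d) with elementary quadratic-form inequalities; no optimization or trajectory argument is needed. First, since $\Df_t(\bzero_n,\bzero_m)=\bzero_n$ by Assumption~\ref{ass:system_setting}-(a), apply \eqref{eq:Lipschitz_uncertainty} with $(x'',u'')=(\bzero_n,\bzero_m)$. This gives, for every $t\in\bN$ and $(x,u)\in\cX\times\cU$,
\begin{equation*}
\|\Df_t(x,u)\| \leq L_{\Df}(\mmf)\big(\|x\|+\|u\|\big).
\end{equation*}
The constant $L_{\Df}(\mmf)$ is uniform in $t$, so the final bound is too. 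One caveat: the reference point $(\bzero_n,\bzero_m)$ must belong to $\cX\times\cU$, which holds because $\bzero_m$ lies in the interior of $\cU$ by (b).

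Next, square this bound and use $(a+b)^2\leq 2a^2+2b^2$ to obtain
\begin{equation*}
\|\Df_t(x,u)\|^2\leq 2[L_{\Df}(\mmf)]^2\big(\|x\|^2+\|u\|^2\big).
\end{equation*}

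Then convert the Euclidean norms into the weighted norms of the stage cost~\eqref{eq:quadratic_stage_cost}. Because $Q\succ0$ and $R\succ0$, we have $\|x\|^2\leq \eigmin{Q}^{-1}\|x\|_Q^2$ and $\|u\|^2\leq \eigmin{R}^{-1}\|u\|_R^2$. Hence
\begin{equation*}
\|x\|^2+\|u\|^2 \leq \eigmin{Q}^{-1}\|x\|^2_Q+\eigmin{R}^{-1}\|u\|^2_R .
\end{equation*}
Bound each coefficient by the sum $\eigmin{Q}^{-1}+\eigmin{R}^{-1}$; this is valid because both terms are nonnegative. The right-hand side is then at most $(\eigmin{Q}^{-1}+\eigmin{R}^{-1})\,\ell(x,u)$. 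Multiplying by the factor $2[L_{\Df}(\mmf)]^2$ from the previous step yields exactly $\mu_{\mathrm{ce}}[L_{\Df}(\mmf)]^2\ell(x,u)$, which is \eqref{eq:mismatch_bound}.

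There is no real obstacle; every step is a short inequality. The only point needing care is that the constant $\mu_{\mathrm{ce}}$ comes out exactly as defined, with the factor $2$ from squaring a sum and the sum of inverse minimal eigenvalues from coefficient dominance. A sharper constant, $2\max\{\eigmin{Q}^{-1},\eigmin{R}^{-1}\}$, is also available, but the stated one follows a fortiori.
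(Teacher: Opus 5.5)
Your proof is correct. The linear-growth bound from Assumption~\ref{ass:system_setting}-(a),(b),(d), the split $(a+b)^2\le 2a^2+2b^2$, and the Rayleigh bounds $\|x\|^2\le \eigmin{Q}^{-1}\|x\|_Q^2$, $\|u\|^2\le\eigmin{R}^{-1}\|u\|_R^2$ give exactly $\mu_{\mathrm{ce}}[L_{\Df}(\mmf)]^2\ell(x,u)$.

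The paper does not carry out this computation. It defers to Proposition~1 of the cited CE-MPC paper, which concerns stage costs whose state and input parts are strongly convex, and only notes that $\|x\|_Q^2$ is $\eigmin{Q}$-strongly convex. That reduction buys generality: the bound then holds for any such strongly convex cost, not just quadratics. The price is reliance on an external result. Your route is self-contained and shows where the constant is loose.

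On that point, your remark that $2\max\{\eigmin{Q}^{-1},\eigmin{R}^{-1}\}$ also works is right but not optimal. Writing $\|x\|+\|u\|\le \eigmin{Q}^{-1/2}\|x\|_Q+\eigmin{R}^{-1/2}\|u\|_R$ and applying Cauchy--Schwarz gives $(\|x\|+\|u\|)^2\le(\eigmin{Q}^{-1}+\eigmin{R}^{-1})\,\ell(x,u)$. So the lemma already holds with $\mu_{\mathrm{ce}}/2$. The extra factor $2$ in $\mu_{\mathrm{ce}}$ is what the strong-convexity route yields if one uses modulus $\eigmin{Q}$ rather than the actual modulus $2\eigmin{Q}$ of $x\mapsto\|x\|_Q^2$, in the convention $\ell_x(y)\ge \ell_x(0)+\tfrac{\mu}{2}\|y\|^2$.
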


Lemmas~\ref{lem:state_perturbation} and~\ref{lem:error_matching} establish the 
foundation for the perturbation analysis of the MPC value function. While a 
similar analysis for time-triggered feedback was systematically investigated in~\cite{liu2026certainty} via 
sensitivity under constraint qualifications, 
the current work develops an alternative approach tailored for asynchronous feedback with less restrictive assumptions.

Recall from Section~\ref{subsec2.3-ampc} that $\xi^\star_{k|t}(f) = \psi_{k|t}(\{\nu^\star_{\tau|t}(f)\}^{N-1}_{\tau=0})$ for $k \in \bI_{0:N}$, its associated true state, driven by $\{\nu^\star_{\tau|t}(f)\}^{N-1}_{\tau=0}$ under the true models $\tsysn$, is defined by $\eta^\ast_{k|t}(f) \coloneqq \psi^\ast_{k|t}(\{\nu^\star_{\tau|t}(f)\}^{N-1}_{\tau=0})$ for $k \in \bI_{0:N}$. For convenience, we compact the stage cost notation by defining $\ell_{k|t}(f) \coloneqq \ell(\xi^\star_{k|t}(f), \nu^\star_{k|t}(f))$ and $\ell^\ast_{k|t}(f) = \ell(\eta^\ast_{k|t}(f), \nu^\star_{k|t}(f))$. Analogous to the CE-MPC problem $\probmpc{x_t}{f}$, the true MPC problem $\probmpc{x_t}{\tsysn}$ is defined by governing state transitions with these true models. Following the definition of the CE-MPC value function $V_N(\cdot;f)$ in Section~\ref{sec:3-problem_formulation}, the optimal value of $\probmpc{x}{\tsysn}$, denoted by $V_N(x;\tsysn)$, defines the true MPC value function $V_N(\cdot;\tsysn): \cX \to \bR_+$.

The first perturbation result characterizes the state perturbation of the CE-MPC value function when evaluated at $\xi^\star_{k|t}(f)$ and $\eta_{k|t}(f^\ast)$, specifically quantifying the perturbation propagating directly from the underlying model mismatch.
\begin{proposition}[State-Driven Value-Function Perturbation]
    \label{prop:state_perturbation}
    Let Assumptions~\ref{ass:system_setting} and~\ref{ass:cost_controllability} hold. For each $k \in \bI_{1:N-1}$, there exists a function $\alpha_{N,k}:\bR_+ \to \bR_+$ such that for all $t \in \bN$,
    \begin{equation}
        \label{eq:state_perturbation_original}
        \hspace{-0.6em} V_N(\eta_{k|t}(f^\ast);f)\sminus V_N(\xi^\star_{k|t}(f);f) \leq \alpha_{N,k}(\mmf)\sum^{k-1}_{i=0}\ell_{i|t}(f),\hspace{-0.4em}
    \end{equation}
    where $\alpha_{N,k}(\cdot) = \alpha^{(1)}_{N,k}L_{\Df}(\cdot) + \alpha^{(2)}_{N,k}[L_{\Df}(\cdot)]^2$, with explicit expressions for the coefficients $\alpha^{(1)}_{N,k}$ and $\alpha^{(2)}_{N,k}$ provided in~\eqref{eq:appB.2-final_derviation_state_perturbation_bound} in Appendix~\ref{appendix:B-2-propositions}.
\end{proposition}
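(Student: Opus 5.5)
The plan is to avoid any smoothness or sensitivity argument for $V_N(\cdot;f)$ and instead bound the difference by a \emph{candidate-solution} argument. Abbreviate $\xi \coloneqq \xi^\star_{k|t}(f)$ and $\eta \coloneqq \eta_{k|t}(f^\ast)$, the true state reached after applying $\{\nu^\star_{\tau|t}(f)\}_{\tau=0}^{k-1}$. Let $\mathbf{w}=\{w_i\}_{i=0}^{N-1}$ be an optimal input sequence of $\probmpc{\xi}{f}$, and let $z_i \coloneqq \psi_{i|\cdot}(\mathbf{w})$ be the corresponding nominal prediction from $\xi$. Since $\mathbf{w}\in\cU^N$ is also feasible for $\probmpc{\eta}{f}$, write $\tilde z_i$ for the nominal prediction from $\eta$ under $\mathbf{w}$. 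Then
\[
V_N(\eta;f)-V_N(\xi;f)\le \sum_{i=0}^{N-1}\big(\|\tilde z_i\|_Q^2-\|z_i\|_Q^2\big)+\|\tilde z_N\|_P^2-\|z_N\|_P^2 .
\]
The input terms cancel because both trajectories use the same inputs. By Assumption~\ref{ass:system_setting}-(c), $\|\tilde z_i-z_i\|\le L_{f,x}^i e$ with $e\coloneqq\|\eta-\xi\|$. Using $\|a\|_Q^2-\|b\|_Q^2\le 2\eigmax{Q}\|b\|\|a-b\|+\eigmax{Q}\|a-b\|^2$ (and similarly with $P$ for the terminal term), the difference is bounded by $c_1 e\sum_{i}L_{f,x}^i\|z_i\| + c_2 e^2$. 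By Cauchy--Schwarz, $\sum_i L_{f,x}^i\|z_i\|\le c_3\sqrt{V_N(\xi;f)/\eigmin{Q}}$, where $c_1,c_2,c_3$ depend only on $N,L_{f,x},\eigmax{Q},\eigmax{P}$.

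Next I bound $e$. By Lemma~\ref{lem:state_perturbation}, using the nominal-trajectory version with $\bv=\{\nu^\star_{\tau|t}(f)\}$, we have $e\le\sum_{i=0}^{k-1}[L^\ast_f(\mmf)]^{k-1-i}\|\Df_{t+i}(\xi^\star_{i|t}(f),\nu^\star_{i|t}(f))\|$. Lemma~\ref{lem:error_matching} bounds each summand by $\sqrt{\mu_{\mathrm{ce}}}L_{\Df}(\mmf)\sqrt{\ell_{i|t}(f)}$. Another Cauchy--Schwarz step, together with $L^\ast_f(\mmf)\le L_{f,x}+\bar L_{\Df}$, then gives $e\le c_4 L_{\Df}(\mmf)\big(\sum_{i=0}^{k-1}\ell_{i|t}(f)\big)^{1/2}$. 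This already yields the $e^2$ term in the form $[L_{\Df}]^2\sum_{i<k}\ell_{i|t}$.

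The remaining task is to bound $\sqrt{V_N(\xi;f)}$ by $\big(\sum_{i<k}\ell_{i|t}(f)\big)^{1/2}$, so that the cross term becomes linear in $L_{\Df}$. I would use Assumption~\ref{ass:cost_controllability} twice. First, $V_N(\xi;f)\le(1+\umpc)\els(\xi)$. Second, optimality of the original problem gives $\els(\xi)\le\ell_{k|t}(f)\le V_N(x_t;f)\le(1+\mpc_N)\els(x_t)\le(1+\umpc)\ell_{0|t}(f)$, and $\ell_{0|t}(f)\le\sum_{i<k}\ell_{i|t}(f)$ since $k\ge1$. Combining the pieces gives
\[
V_N(\eta;f)-V_N(\xi;f)\le\big(\alpha^{(1)}_{N,k}L_{\Df}(\mmf)+\alpha^{(2)}_{N,k}[L_{\Df}(\mmf)]^2\big)\sum_{i=0}^{k-1}\ell_{i|t}(f),
\]
with the coefficients collected explicitly from $c_1,\dots,c_4$ and $\umpc$.

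The step I expect to be delicate is the use of Assumption~\ref{ass:cost_controllability} at $\xi^\star_{k|t}(f)$ and at $x_t$, since it is stated only on $\xroa(\ell)$. For $x_t$ this follows from control invariance of $\xroa(\ell)$ along the true closed loop. For the nominal predicted state $\xi^\star_{k|t}(f)$, I would first argue that it has finite nominal cost-to-go; the key fact here is that its optimal-tail cost is bounded by $V_N(x_t;f)$. If membership cannot be justified directly, I would avoid the assumption at $\xi$ altogether. In that case I would bound $\sum_i\|z_i\|^2$ via $V_N(\xi;f)\le \ell_{k|t}(f)+V_{N}(\xi^\star_{k+1|t}(f);f)$-type recursions, or simply replace $V_N(\xi;f)$ by the cost of the shifted original tail padded by the rCLF input. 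That route still produces a multiple of $\ell_{0|t}(f)$.
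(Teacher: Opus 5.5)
Your argument is correct and, up to its last step, is the paper's proof. The candidate-input bound at $\eta_{k|t}(f^\ast)$, the Lipschitz propagation, the quadratic-difference estimate and Cauchy--Schwarz against $V_N(\xi^\star_{k|t}(f);f)$ reproduce \eqref{eq:appB.2-general_state_perturbation}. Your bound on $\|\eta_{k|t}(f^\ast)-\xi^\star_{k|t}(f)\|$ is Lemma~\ref{lm:state_mismatch_scaled}.

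You differ in how you relate $V_N(\xi^\star_{k|t}(f);f)$ to $\sum_{i=0}^{k-1}\ell_{i|t}(f)$. The paper uses the nominal dissipation inequality \eqref{eq:appA-dissipation_final} with $m_j$ replaced by $k$, together with $V_N(x_t;f)\le(1+\mpc_N)\ell_{0|t}(f)$, and obtains the factor $\mpc_N+\delta_{N,k}$. You apply Assumption~\ref{ass:cost_controllability} directly at $\xi^\star_{k|t}(f)$ and chain $\els(\xi^\star_{k|t}(f))\le\ell_{k|t}(f)\le V_N(x_t;f)$, which gives the factor $(1+\umpc)^2$.

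Your route is self-contained and uses only Assumptions~\ref{ass:system_setting}--\ref{ass:cost_controllability}, which is exactly what the proposition lists. By contrast, \eqref{eq:appA-dissipation_final} rests on the rCLF padding of Lemma~\ref{lm:terminal_policy}, and thus tacitly on Assumption~\ref{ass:terminal_cost_relaxed_clf}.

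The paper's route buys a sharper constant. Because $N-k\ge 1$, one has $\delta_{N,k}<(1+\umpc)\mpc_N$, hence (using $\mpc_N\le\umpc$) $\mpc_N+\delta_{N,k}<(1+\umpc)^2$. So the $\alpha^{(1)}_{N,k}$ you obtain is larger than the one in \eqref{eq:appB.2-final_derviation_state_perturbation_bound}, and it would make $\rho_{N,M}(\mmf)$ in Theorem~\ref{thm:stability_performance_acempc} more conservative. You can shrink your factor to $(1+\umpc)\mpc_N$ by using $\ell_{k|t}(f)\le V_N(x_t;f)-\sum_{i=0}^{k-1}\ell_{i|t}(f)\le\mpc_N\sum_{i=0}^{k-1}\ell_{i|t}(f)$.

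On the step you flag as delicate, the paper is in the same position. Its proof of \eqref{eq:appA-dissipation_final} applies Assumption~\ref{ass:cost_controllability} at $x_{t_j}$ and at the nominal prediction $\xi^\star_{k_0|t_j}(f)$ without verifying membership in $\xroa(\ell)$. So this is not a weakness of your proposal relative to the paper.

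Your proposed justifications do not work as stated, though:
\begin{itemize}
\item Control invariance of $\xroa(\ell)$ only says that some admissible input keeps the state in the set, not that the CE-MPC input does.
\item A finite nominal cost-to-go does not place $\xi^\star_{k|t}(f)$ in $\xroa(\ell)$, which is defined through the true system's infinite-horizon cost.
\item The rCLF-padding fallback does not help, since Assumption~\ref{ass:terminal_cost_relaxed_clf} is also posed only on $\xroa(\ell)$. It reintroduces that assumption and is essentially the first step \eqref{eq:appA-value_propagation_suboptimal} of the paper's route.
\end{itemize}
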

Subsequently, we characterize the functional perturbation induced by the model mismatch, defined as the discrepancy between the CE-MPC and true MPC value functions evaluated at an identical state.
\begin{proposition}[Model-Driven Value-Function Perturbation]
    \label{prop:model_mismatch}
    Let Assumptions~\ref{ass:system_setting} and~\ref{ass:cost_controllability} hold. For each $x \in \cX$, there exists a function $\beta_{N}:\bR_+ \to \bR_+$ such that for all $t \in \bN$,
    \begin{equation}
        \label{eq:model_mismatch_original}
        V_N(x;f)- V_N(x;\tsysn) \leq \beta_{N}(\mmf)V_N(x;\tsysn),
    \end{equation}
    where $\beta_{N}(\cdot) = \beta^{(1)}_{N}L_{\Df}(\cdot) + \beta^{(2)}_{N}[L_{\Df}(\cdot)]^2$, with explicit expressions for the coefficients $\beta^{(1)}_{N}$ and $\beta^{(2)}_{N}$ provided, respectively, in~\eqref{eq:appB.2-model_perturbation_expansion_derivation_final} and~\eqref{eq:appB.2-model_perturbation_expansion_derivation_quad} in Appendix~\ref{appendix:B-2-propositions}.
\end{proposition}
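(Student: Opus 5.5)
We prove Proposition~\ref{prop:model_mismatch} by a suboptimality argument: we evaluate the CE-MPC cost along the \emph{true-optimal} input sequence. Let $\bu^\ast_{0:N-1}$ denote an optimal input sequence of $\probmpc{x}{\tsysn}$ with true trajectory $\psi^\ast_{k|t}(\bu^\ast)$. If no minimizer exists, we use an $\epsilon$-optimal sequence and let $\epsilon\to0$ at the end. Since $\bu^\ast$ is feasible for $\probmpc{x}{f}$, we have
\begin{equation*}
V_N(x;f) - V_N(x;\tsysn) \leq \sum_{k=0}^{N-1}\big[\ell(\psi_{k|t},u^\ast_k)-\ell(\psi^\ast_{k|t},u^\ast_k)\big] + \elf(\psi_{N|t})-\elf(\psi^\ast_{N|t}),
\end{equation*}
where $\psi_{k|t}=\psi_{k|t}(\bu^\ast)$ is the nominal trajectory driven by the same inputs. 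The input terms cancel, so only the state differences $e_k := \psi_{k|t}-\psi^\ast_{k|t}$ remain. For each $k$ we apply the elementary quadratic inequality
\begin{equation*}
\|a\|^2_W-\|b\|^2_W \leq 2\|b\|_W\|a-b\|_W+\|a-b\|^2_W,
\end{equation*}
with $a=\psi_{k|t}$, $b=\psi^\ast_{k|t}$, and $W\in\{Q,P\}$. This yields a term linear in $\|e_k\|$ and a term quadratic in $\|e_k\|$, which is the source of the structure $\beta^{(1)}_N L_{\Df}+\beta^{(2)}_N L_{\Df}^2$.

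Next we bound $e_k$ along the true trajectory using the second form of Lemma~\ref{lem:state_perturbation}:
\begin{equation*}
\|e_{k}\|\leq\sum_{i=0}^{k-1}L_f^{k-1-i}\,\|\Df_{t+i}(\psi^\ast_{i|t},u^\ast_i)\|.
\end{equation*}
Lemma~\ref{lem:error_matching} then gives
\begin{equation*}
\|\Df_{t+i}(\psi^\ast_{i|t},u^\ast_i)\|\leq \sqrt{\mu_{\mathrm{ce}}}\,L_{\Df}(\mmf)\sqrt{\ell^\ast_i},
\end{equation*}
where $\ell^\ast_i$ denotes the true-optimal stage costs. Applying Cauchy--Schwarz,
\begin{equation*}
\|e_k\|\leq \sqrt{\mu_{\mathrm{ce}}}\,L_{\Df}\Big(\sum_{i<k}L_f^{2(k-1-i)}\Big)^{1/2}\Big(\sum_{i<k}\ell^\ast_i\Big)^{1/2}.
\end{equation*}
It is essential to use the true-trajectory version of Lemma~\ref{lem:state_perturbation}. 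That version propagates with $L_f$ only, which keeps the coefficients independent of $\mmf$ and produces an exact polynomial in $L_{\Df}$.

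We then collect the terms. The quadratic contributions $\|e_k\|^2_W\leq\eigmax{W}\|e_k\|^2$ are bounded by $\mu_{\mathrm{ce}}L_{\Df}^2 c_k\sum_{i<k}\ell^\ast_i\leq \mu_{\mathrm{ce}}L_{\Df}^2c_kV_N(x;\tsysn)$, where $c_k=\sum_{i<k}L_f^{2(k-1-i)}$. For the cross terms, $\|\psi^\ast_{k|t}\|_Q\leq\sqrt{\ell^\ast_k}$, and for the terminal step $\|\psi^\ast_{N|t}\|_P\leq\sqrt{\elf(\psi^\ast_{N|t})}$. Another Cauchy--Schwarz across $k$, or simply bounding each square-root factor by $\sqrt{V_N(x;\tsysn)}$, gives a linear term $\beta^{(1)}_N L_{\Df}V_N(x;\tsysn)$. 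Its coefficient is built from $\sqrt{\mu_{\mathrm{ce}}}$, $\eigmax{Q}/\eigmin{Q}$, $\eigmax{P}$ (through $\eigmax{P}/\eigmin{Q}$ where $P$-norms must be compared to $Q$-norms), and the geometric sums $c_k$. The constant $\beta^{(2)}_N$ collects the quadratic terms in the same way.

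The main obstacle is ensuring that every factor is expressed through $V_N(x;\tsysn)$ rather than through nominal quantities, and that the $\mmf$-dependence stays exactly of the form linear plus quadratic in $L_{\Df}$. Both are handled by the true-trajectory form of Lemma~\ref{lem:state_perturbation} and by the bound $\sum_k\ell^\ast_k+\elf(\psi^\ast_{N|t})=V_N(x;\tsysn)$. Assumption~\ref{ass:cost_controllability} is not needed beyond what is listed, though it could be used to simplify the terminal-cost comparison. Everything else is bookkeeping to match the explicit coefficients.
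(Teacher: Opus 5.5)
Your proposal is correct and follows essentially the same route as the paper: evaluate the CE-MPC cost along the true-optimal inputs, expand the quadratic differences, bound the state deviation with the true-trajectory ($L_f$-only) form of Lemma~\ref{lem:state_perturbation} combined with Lemma~\ref{lem:error_matching}, and close with Cauchy--Schwarz against $V_N(x;\tsysn)$. Your $W$-weighted expansion combined with Cauchy--Schwarz across $k$ gives a linear coefficient $2\sqrt{\beta^{(2)}_N}$, which is no larger than the paper's $\beta^{(1)}_N$ (the paper's coefficient involves $\eigmax{Q}^2/\eigmin{Q}$ and $\eigmax{P}^2/\eigmin{P}$), so the stated bound follows; use that cross-$k$ Cauchy--Schwarz rather than bounding each square-root factor separately by $\sqrt{V_N(x;\tsysn)}$, because the separate bound can give a constant larger than the paper's.
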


Proposition~\ref{prop:model_mismatch} bounds the difference between the two MPC value functions, accounting for both the stage and terminal costs along two distinct trajectories. In addition, the finite sum of the first $k$ stage costs for the state trajectories $\xi^\star_{k|t}(f)$ and $\eta_{k|t}(f^\ast)$ can also be isolated and bounded, excluding the terminal cost. This yields the following proposition, which is also instrumental in the subsequent analysis:
\begin{proposition}[Model-Driven $k$-Sum Perturbation]
    \label{prop:k_sum_mismatch}
    Let Assumptions~\ref{ass:system_setting} and~\ref{ass:cost_controllability} hold. For all $k \in \bI_{1:N-1}$, there exists a function $\theta_{k}:\bR_+ \to \bR_+$ such that for all $t \in \bN$,
    \begin{equation}
        \label{eq:model_mismatch_k_sum}
        \left|\sum^{k-1}_{i=0}\ell^\ast_{i|t}(f) - \sum^{k-1}_{i=0}\ell_{i|t}(f)\right|
        \leq \theta_k(\mmf)\sum^{k-1}_{i=0}\ell_{i|t}(f)
    \end{equation}
    where $\theta_{k}(\cdot) = \theta^{(1)}_{k}L_{\Df}(\cdot) + \theta^{(2)}_{k}[L_{\Df}(\cdot)]^2$. The explicit expressions for the coefficients $\theta^{(1)}_{k}$ and $\theta^{(2)}_{k}$ are provided in~\eqref{eq:appB.2-k_sum_expansion_derivation_final} in Appendix~\ref{appendix:B-2-propositions}.
\end{proposition}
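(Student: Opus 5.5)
The plan is to bound the difference of stage costs term by term, writing each true state as the nominal state plus a perturbation. Set $e_i \coloneqq \eta^\ast_{i|t}(f) - \xi^\star_{i|t}(f)$ and note that $e_0 = \bzero_n$. Since both trajectories are driven by the same input $\nu^\star_{i|t}(f)$, the input part of $\ell$ cancels, and
\begin{equation*}
\ell^\ast_{i|t}(f) - \ell_{i|t}(f) = \|\xi^\star_{i|t}(f)+e_i\|_Q^2 - \|\xi^\star_{i|t}(f)\|_Q^2 = 2\,\xi^{\star\top}_{i|t}(f) Q e_i + \|e_i\|^2_Q .
\end{equation*}
By Cauchy--Schwarz in the $Q$-inner product,
\begin{equation*}
\bigl|\ell^\ast_{i|t}(f) - \ell_{i|t}(f)\bigr| \leq 2\sqrt{\eigmax{Q}}\,\sqrt{\ell_{i|t}(f)}\,\|e_i\| + \eigmax{Q}\|e_i\|^2,
\end{equation*}
where I used $\|\xi^\star_{i|t}(f)\|_Q \le \sqrt{\ell_{i|t}(f)}$. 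Summing over $i\in\bI_{0:k-1}$ then reduces the claim to bounding $\sum_{i}\sqrt{\ell_{i|t}}\|e_i\|$ and $\sum_i\|e_i\|^2$ by multiples of $S_k \coloneqq \sum^{k-1}_{i=0}\ell_{i|t}(f)$.

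Next, I control $e_i$ with the nominal-trajectory version of Lemma~\ref{lem:state_perturbation}, applied with $\bv$ equal to the CE-MPC optimal input sequence. It gives
\begin{equation*}
\|e_{i}\| \le \sum_{s=0}^{i-1} [L^\ast_f(\mmf)]^{i-1-s}\,\|\Df_{t+s}(\xi^\star_{s|t}(f),\nu^\star_{s|t}(f))\|.
\end{equation*}
Lemma~\ref{lem:error_matching} then bounds each mismatch term by $\sqrt{\mu_{\mathrm{ce}}}\,L_{\Df}(\mmf)\sqrt{\ell_{s|t}(f)}$. Using Cauchy--Schwarz on the inner sum yields
\begin{equation*}
\|e_i\|^2 \le \mu_{\mathrm{ce}} [L_{\Df}(\mmf)]^2 \Big(\sum_{s=0}^{i-1}[L^\ast_f(\mmf)]^{2(i-1-s)}\Big) S_k .
\end{equation*}
Both $L^\ast_f$ and $L_{\Df}$ are monotone, so I replace $L^\ast_f(\mmf)$ by the $\mmf$-independent bound $L_f+\bar L_{\Df}$. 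This keeps the constants free of $\mmf$ and leaves an explicit prefactor of $L_{\Df}(\mmf)$.

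Inserting these bounds, the quadratic term becomes $\sum_i \eigmax{Q}\|e_i\|^2 \le \theta^{(2)}_k [L_{\Df}(\mmf)]^2 S_k$, with $\theta^{(2)}_k = \eigmax{Q}\mu_{\mathrm{ce}}\sum_{i=1}^{k-1}\sum_{s=0}^{i-1}(L_f+\bar L_{\Df})^{2(i-1-s)}$. For the cross term, I apply Cauchy--Schwarz once more: $\sum_i \sqrt{\ell_{i|t}}\|e_i\| \le \sqrt{S_k}\,\bigl(\sum_i\|e_i\|^2\bigr)^{1/2}$. This gives a linear coefficient $\theta^{(1)}_k = 2\sqrt{\eigmax{Q}}\bigl(\mu_{\mathrm{ce}}\sum_{i}\sum_{s}(L_f+\bar L_{\Df})^{2(i-1-s)}\bigr)^{1/2}$ multiplying $L_{\Df}(\mmf) S_k$. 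Together, these give exactly the form $\theta_k(\mmf)=\theta^{(1)}_kL_{\Df}(\mmf)+\theta^{(2)}_k[L_{\Df}(\mmf)]^2$. Assumption~\ref{ass:cost_controllability} is not needed here, since everything is stated relative to the nominal stage-cost sum.

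The main obstacle is keeping the bound homogeneous in $S_k$. The cross term naturally produces mixed products $\sqrt{\ell_{i|t}\ell_{s|t}}$, and the double Cauchy--Schwarz step above is what collapses them into $S_k$. The only other care needed is to use the nominal-trajectory form of Lemma~\ref{lem:state_perturbation} rather than the true-trajectory form, so that everything is expressed through $\ell_{i|t}(f)$.
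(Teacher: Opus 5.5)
Your proposal is correct and follows the paper's proof essentially step for step. You use the same term-by-term quadratic expansion and the same bound $\|e_i\|^2\le \bar\zeta_i[L_{\Df}(\mmf)]^2\sum_{s=0}^{i-1}\ell_{s|t}(f)$, which you re-derive from Lemmas~\ref{lem:state_perturbation} and~\ref{lem:error_matching} and which the paper packages as Lemma~\ref{lm:state_mismatch_scaled}; this gives the identical $\theta^{(2)}_k=\eigmax{Q}\sum_{i=1}^{k-1}\bar\zeta_i$, and you close with the same Cauchy--Schwarz step that collapses the cross term onto $\sum_{i=0}^{k-1}\ell_{i|t}(f)$. The only deviation is that you apply Cauchy--Schwarz in the $Q$-inner product rather than Lemma~\ref{lm:quadratic_error_expansion} followed by $\|\xi\|\le\|\xi\|_Q/\sqrt{\eigmin{Q}}$, which yields the slightly sharper $\theta^{(1)}_k=2\sqrt{\eigmax{Q}}\,(\sum_{i=1}^{k-1}\bar\zeta_i)^{1/2}$ in place of the paper's $2\eigmax{Q}\,\eigmin{Q}^{-1/2}(\sum_{i=1}^{k-1}\bar\zeta_i)^{1/2}$, and therefore implies the paper's bound.
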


In summary, these results establish the fundamental bounds required to quantify the effects of model mismatch on the MPC value functions. Proposition~\ref{prop:state_perturbation} bounds the perturbation of the CE-MPC value function induced by state trajectory deviations, while Proposition~\ref{prop:model_mismatch} quantifies the point-wise functional discrepancy between the CE-MPC and true MPC value functions. Complementing these, Proposition~\ref{prop:k_sum_mismatch} leverages the proof technique of Proposition~\ref{prop:model_mismatch} to bound the accumulated model-mismatch error on the stage costs over a finite horizon, excluding the terminal cost. This suite of results establishes a rigorous foundation for the subsequent closed-loop stability and suboptimality analysis.

%% file: 1.4.5-main_results_analysis/1.4.5.2-full_analysis.tex
\subsection{Competitive-Ratio Performance Bound}
\label{sec4.5-2:final_analysis}
In essence, the stability of the nominal system under nominal asynchronous MPC (cf.~Theorem~\ref{thm:nominal_stability_performance}) is established by employing $V_N(\cdot;f)$ as a Lyapunov function candidate satisfying a standard dissipation inequality, consistent with conventional MPC stability proofs~\cite{grune2017nonlinear, rawlings2017model}. However, $V_N(\cdot;f)$ guarantees stability only under nominal state trajectories. To extend this to the closed-loop system~\eqref{eq:true_state_evolution} under asynchronous CE-MPC, nominal stability must be coupled with the perturbation bounds in Propositions~\ref{prop:state_perturbation} and~\ref{prop:model_mismatch}. The stability and the associated competitive-ratio performance bound are given in the following theorem:
\begin{theorem}[Stability and Performance of Asynchronous CE-MPC]
    \label{thm:stability_performance_acempc}
    Let Assumptions~\ref{ass:system_setting}--\ref{ass:terminal_cost_relaxed_clf} hold. Given the nominal stability index $\delta_{N,M}$ in~\eqref{eq:nominal_stability_margin}, define the mismatch-driven stability index of asynchronous CE-MPC by 
    \begin{equation}
        \label{eq:stabiltity_index_acempc}
        \rho_{N,M}(\mmf) \coloneqq \delta_{N,M} + \alpha_{N,M}(\mmf),
    \end{equation}
    where $\alpha_{N}$ is given in Proposition~\ref{prop:state_perturbation}. If the prediction horizon $N$, the maximum inter-execution step $M$, and the mismatch level $\mmf$ jointly satisfy $\rho_{N,M}(\mmf) < 1$, then:
    \begin{enumerate}
        \item[\normalfont(i)] The asynchronous closed-loop system~\eqref{eq:true_state_evolution} is asymptotically stable at the origin under the asynchronous CE-MPC controller specified by $\probmpc{x_t}{f}$ and the feedback time sequence $\cT_M$ in~\eqref{eq:feedback_time_sequence}.
        \item[\normalfont(ii)] The infinite-horizon performance satisfies the following performance bound:
        \begin{equation}
            \label{eq:performance_bound_acempc}
            \cJ^{[N,M]}_{\infty}(x;f) \leq \Rcr V_{\infty}(x;\tsysc),
        \end{equation}
        where the competitive-ratio bound $\Rcr$ is given by
        \begin{multline}
            \label{eq:ratio_acempc_detail}
            \hspace*{-1em}\Rcr = \big[1 - \rho_{N,M}(\mmf)\big]^{-1}\big[1+\min\{\umpc,\beta_{N}(\mmf)\}\big]\\\big[1+\theta_M(\mmf)\big]\left[1 + \frac{\eigmax{P}}{\eigmin{Q}}\left(\frac{\bar{\gamma}^\ast(\mmf)}{1+\bar{\gamma}^\ast(\mmf)}\right)^N\right],
        \end{multline}
        with $\bar{\gamma}^\ast(\mmf) \coloneqq \umpc+\beta_{\infty}(\mmf)+\umpc\beta_{\infty}(\mmf)$, and $\beta_{N}$($\beta_{\infty}$) and $\theta_M$ defined as in Propositions~\ref{prop:model_mismatch} and~\ref{prop:k_sum_mismatch}, respectively.
    \end{enumerate}
\end{theorem}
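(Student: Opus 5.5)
We will use $V_N(\cdot;f)$, the CE-MPC value function, as a Lyapunov function evaluated only at the feedback instants $t_j\in\cT_M$. We then telescope the resulting decrease to obtain the performance bound. The argument has four steps.

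\textbf{Step 1: nominal one-interval decrease.} Fix $j$ and set $x=x_{j,0}=x_{t_j}$ and $m=m_j\le M$. From the proof of Theorem~\ref{thm:nominal_stability_performance} we extract the relaxed dynamic programming inequality along the nominal predicted trajectory:
\begin{equation*}
V_N(\xi^\star_{m|t_j}(f);f)\le V_N(x;f)-(1-\delta_{N,M})\sum_{k=0}^{m-1}\ell_{k|t_j}(f).
\end{equation*}
Here we use Assumptions~\ref{ass:cost_controllability}--\ref{ass:terminal_cost_relaxed_clf} and the monotonicity of $\delta_{N,m}$ in $m$ (Lemma~\ref{lm:increasing_function}), so that $\delta_{N,m}\le\delta_{N,M}$.

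\textbf{Step 2: replace the nominal endpoint by the true state.} The true state reached at the next feedback instant is $x_{j+1,0}=\eta^\ast_{m|t_j}(f)$. Proposition~\ref{prop:state_perturbation} with $k=m$ gives
\begin{equation*}
V_N(x_{j+1,0};f)\le V_N(\xi^\star_{m|t_j}(f);f)+\alpha_{N,m}(\mmf)\sum_{k=0}^{m-1}\ell_{k|t_j}(f).
\end{equation*}
We need $\alpha_{N,m}\le\alpha_{N,M}$. This should follow from the explicit coefficients in~\eqref{eq:appB.2-final_derviation_state_perturbation_bound}, which are geometric sums in $L^\ast_f$ and hence nondecreasing in $k$; this is the step to check carefully. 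Combining Steps 1 and 2 yields
\begin{equation*}
V_N(x_{j+1,0};f)\le V_N(x_{j,0};f)-(1-\rho_{N,M}(\mmf))\sum_{k=0}^{m_j-1}\ell_{k|t_j}(f).
\end{equation*}

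\textbf{Step 3: stability.} Since $\rho_{N,M}(\mmf)<1$, the value function $V_N(\cdot;f)$ decreases along the feedback instants. It is sandwiched as $\els(x)\le V_N(x;f)\le(1+\umpc)\els(x)$, with $\els$ quadratic. Summing the decrease shows that $\sum_k\ell_{k|t_j}(f)\to0$; with the quadratic bounds this gives convergence and Lyapunov stability of $x_{t_j}$.

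The intermediate states $x_{j,k}$ remain to be controlled. By Lemma~\ref{lem:state_perturbation} and Lemma~\ref{lem:error_matching}, the true intra-interval states are bounded by the nominal predicted costs. Proposition~\ref{prop:k_sum_mismatch} then bounds the realized costs:
\begin{equation*}
\sum_{k=0}^{m_j-1}\ell(x_{j,k},\nu^\star_{j,k})\le(1+\theta_M(\mmf))\sum_{k}\ell_{k|t_j}(f).
\end{equation*}
Since there are at most $M$ intermediate steps, asymptotic stability of the whole trajectory follows.

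\textbf{Step 4: performance.} We telescope the decrease of Step 2 and use the cost bound of Step 3. This gives
\begin{equation*}
\cJ^{[N,M]}_\infty(x;f)\le\frac{1+\theta_M(\mmf)}{1-\rho_{N,M}(\mmf)}\,V_N(x;f).
\end{equation*}
The main obstacle is then relating $V_N(x;f)$ to $V_\infty(x;\tsysc)$, in two substeps.

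First, Proposition~\ref{prop:model_mismatch} gives $V_N(x;f)\le(1+\beta_N(\mmf))V_N(x;\tsysn)$. Assumption~\ref{ass:cost_controllability} gives $V_N(x;f)\le(1+\umpc)\els(x)\le(1+\umpc)V_N(x;\tsysn)$. Together these yield the factor $1+\min\{\umpc,\beta_N\}$.

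Second, we compare the true finite-horizon value with the true infinite-horizon value:
\begin{equation*}
V_N(x;\tsysn)\le\left[1+\tfrac{\eigmax{P}}{\eigmin{Q}}\left(\tfrac{\bar\gamma^\ast}{1+\bar\gamma^\ast}\right)^N\right]V_\infty(x;\tsysc).
\end{equation*}
We mimic the nominal argument in Theorem~\ref{thm:nominal_stability_performance}(ii). Truncate the infinite-horizon optimal trajectory at $N$ and bound the terminal cost by $\tfrac{\eigmax{P}}{\eigmin{Q}}\els(\xi_N)$. Then bound $\els(\xi_N)$ by a geometric decay of the tail of the true optimal cost. That decay comes from true cost controllability with constant $\bar\gamma^\ast=\umpc+\beta_\infty+\umpc\beta_\infty$, which is obtained by applying Proposition~\ref{prop:model_mismatch} with $N\to\infty$ in the reverse direction. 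Verifying this reverse application, and the time-variance of $\tsysc$ in the shifted tails, is where we expect the most care to be needed.
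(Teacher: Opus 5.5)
Your proposal is correct and follows the paper's proof essentially step for step. Both split $V_N(x_{t_{j+1}};f)-V_N(x_{t_j};f)$ into the state-driven term of Proposition~\ref{prop:state_perturbation} plus the nominal dissipation, convert predicted to realized costs with Proposition~\ref{prop:k_sum_mismatch}, telescope, and then chain the factor $1+\min\{\umpc,\beta_N(\mmf)\}$ with the K\"ohler-type tail bound, which rests on a reversed Proposition~\ref{prop:model_mismatch} holding uniformly in the start time $t$. The two points you flag are exactly where the paper only asserts: monotonicity of $\alpha_{N,k}$ in $k$ also needs Lemma~\ref{lm:increasing_function} for the $\delta_{N,k}$ factor inside $\alpha^{(1)}_{N,k}$, and the reversed Proposition~\ref{prop:model_mismatch} is claimed ``by symmetry'' (strictly, that direction rests on the first bound of Lemma~\ref{lm:state_mismatch_scaled}, so its constants involve $\bar{\zeta}_k$ rather than $\zeta_k$).
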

The presence of $\alpha_{N,M}(\mmf)$ implies that having a larger model mismatch reduces the \textit{stability margin} $1 - \rho_{N,M}(\mmf)$ and increases $\Rcr$ (i.e., the worst-case performance bound). The bound~\eqref{eq:performance_bound_acempc} quantifies this degradation and remains consistent, recovering the ideal case $\Rcr = 1$ when $\mmf=0$ and $N=\infty$. Furthermore, because both $\alpha_{N,M}(\mmf)$ and $\theta_M(\mmf)$ increase with $M$ (see Appendix~\ref{appendix:B}), model mismatch degrades the stability margin and inflates the worst-case performance bound, especially for larger $M$. Thus, maintaining a target performance level under mismatch necessitates more frequent feedback than the nominal case.

On the other hand, because $\delta_{N,M}$ decreases as the prediction horizon $N$ increases, neither $\rho_{N,M}(\mmf)$ nor $\Rcr$ is monotonic with respect to $N$. Under model mismatch, a true optimal horizon may exist for certain systems, and the horizon derived from minimizing the competitive ratio provides a coarse warm start for tuning~\cite{liu2024stability, liu2026certainty}. Yet, in practice, system performance remains largely insensitive to horizon variations, particularly under infinitesimal mismatch~\cite{liu2026certainty}. Nonetheless, the stability condition $\rho_{N,M}(\mmf) < 1$ remains a valuable tool to derive a permissible range of $N$ given $M$ and $\mmf$, or to bound either $M$ or $\mmf$ given the other two parameters. Typically, the bound on $\mmf$ has a closed-form expression:
\begin{equation*}
    \label{eq:bound_model_mismatch}
    \mmf < L_{\Df}^{-1}\Bigg(\frac{-\alpha^{(1)}_{N,M} \splus \left[\big(\alpha^{(2)}_{N,M}\big)^2 \splus 4\alpha^{(2)}_{N,M}(1 \sminus \delta_{N,M})\right]^{\frac{1}{2}}}{2\alpha^{(2)}_{N,M}}\Bigg),
\end{equation*}
where $L_{\Df}^{-1}$ is the inverse function of $L_{\Df}$ given in Assumption~\ref{ass:system_setting}-(d), $\delta_{N,M}$ is the nominal stability index given in~\eqref{eq:nominal_stability_margin}, and $\alpha^{(1)}_{N,M}$ and $\alpha^{(2)}_{N,M}$ are given in Proposition~\ref{prop:state_perturbation}.

\subsection{Analysis Pipeline}
\label{subsec4.5-3:pipeline}
\input{figures/pipeline}
Inspired by the analysis pipeline in~\cite{liu2026certainty} for time-triggered CE-MPC without a terminal cost, we propose an extended analysis pipeline tailored for asynchronous CE-MPC with general terminal costs characterized by rCLFs, as illustrated in Fig.~1. The pipeline begins with the \textbf{Basic Lemmas}, where Lemma~\ref{lem:state_perturbation} (State Perturbation) and Lemma~\ref{lem:error_matching} (Consistent Error Matching) provide the foundational results for the subsequent \textbf{Perturbation Analysis}. Specifically, these lemmas jointly drive the derivation of Proposition~\ref{prop:state_perturbation} (State-Driven Value-Function Perturbation) and Proposition~\ref{prop:model_mismatch} (Model-Driven Value-Function Perturbation), and the proof technique of Proposition~\ref{prop:model_mismatch} can be used directly to establish the model-driven $k$-sum perturbation bound in Proposition~\ref{prop:k_sum_mismatch}. In parallel, the nominal properties of \textbf{Asynchronous MPC} are established via Theorem~\ref{thm:nominal_stability_performance}, progressing from nominal stability to nominal performance. Finally, the perturbation bounds from Propositions~\ref{prop:state_perturbation}--\ref{prop:k_sum_mismatch} and the nominal guarantees from Theorem~1 are synthesized to establish the robust stability and performance of \textbf{Asynchronous CE-MPC} in Theorem~\ref{thm:stability_performance_acempc}.

Note that the stability condition $\rho_{N,M}(\delta_f) < 1$ is only sufficient, and the performance bound \eqref{eq:performance_bound_acempc} is inherently conservative. However, such conservatism is a recognized characteristic of theoretical performance analysis in the literature~\cite{grune2017nonlinear, liu2026certainty}. To minimize conservatism under stronger system assumptions and/or more structured model mismatch, the proposed analysis pipeline can be refined by deriving a tighter stability index $\delta_{N,M}$ and sharpening the bounding functions (i.e., $\alpha_{N,M}$, $\beta_{N}$ and $\theta_k$) in the results of value-function perturbation~\cite{liu2026certainty}. These refined analyses propagate directly through our pipeline to yield less conservative performance bounds.
\begin{remark}
    The competitive ratio can be computed in practice as it depends only on the mismatch level $\mmf$ and the nominal model $f$, independent of the inaccessible true models $\tsysc$. Notably, the pipeline and the results remain valid if the time-varying nature is inverted (i.e., a time-varying nominal model with a static true model), capturing scenarios in learning-based MPC with online model learning where the estimated model fluctuates near the true one.
\end{remark}

%% file: figures/pipeline.tex
\begin{figure}[h]
\label{fig:pipeline}
\centering
\resizebox{\linewidth}{!}{

\tikzset{every picture/.style={line width=0.75pt}} 

\begin{tikzpicture}[x=0.75pt,y=0.75pt,yscale=-1,xscale=1]

\draw  [color={rgb, 255:red, 144; green, 19; blue, 254 }  ,draw opacity=0.6 ][fill={rgb, 255:red, 144; green, 19; blue, 254 }  ,fill opacity=0.1 ][dash pattern={on 4.5pt off 4.5pt}] (83,22.57) .. controls (83,16.18) and (88.18,11) .. (94.57,11) -- (232.63,11) .. controls (239.02,11) and (244.2,16.18) .. (244.2,22.57) -- (244.2,107.23) .. controls (244.2,113.62) and (239.02,118.8) .. (232.63,118.8) -- (94.57,118.8) .. controls (88.18,118.8) and (83,113.62) .. (83,107.23) -- cycle ;
\draw   (87.96,40.2) .. controls (87.96,36.11) and (91.27,32.8) .. (95.36,32.8) -- (229.8,32.8) .. controls (233.89,32.8) and (237.2,36.11) .. (237.2,40.2) -- (237.2,62.4) .. controls (237.2,66.49) and (233.89,69.8) .. (229.8,69.8) -- (95.36,69.8) .. controls (91.27,69.8) and (87.96,66.49) .. (87.96,62.4) -- cycle ;
\draw   (88.2,83.31) .. controls (88.2,79.37) and (91.39,76.19) .. (95.32,76.19) -- (230.08,76.19) .. controls (234.01,76.19) and (237.2,79.37) .. (237.2,83.31) -- (237.2,104.68) .. controls (237.2,108.61) and (234.01,111.8) .. (230.08,111.8) -- (95.32,111.8) .. controls (91.39,111.8) and (88.2,108.61) .. (88.2,104.68) -- cycle ;
\draw  [color={rgb, 255:red, 245; green, 166; blue, 35 }  ,draw opacity=0.6 ][fill={rgb, 255:red, 245; green, 166; blue, 35 }  ,fill opacity=0.1 ][dash pattern={on 4.5pt off 4.5pt}] (274.2,27.26) .. controls (274.2,17.73) and (281.93,10) .. (291.46,10) -- (499.94,10) .. controls (509.47,10) and (517.2,17.73) .. (517.2,27.26) -- (517.2,153.54) .. controls (517.2,163.07) and (509.47,170.8) .. (499.94,170.8) -- (291.46,170.8) .. controls (281.93,170.8) and (274.2,163.07) .. (274.2,153.54) -- cycle ;
\draw   (281.2,40) .. controls (281.2,36.02) and (284.42,32.8) .. (288.4,32.8) -- (503,32.8) .. controls (506.98,32.8) and (510.2,36.02) .. (510.2,40) -- (510.2,61.6) .. controls (510.2,65.58) and (506.98,68.8) .. (503,68.8) -- (288.4,68.8) .. controls (284.42,68.8) and (281.2,65.58) .. (281.2,61.6) -- cycle ;
\draw   (281.2,82) .. controls (281.2,78.02) and (284.42,74.8) .. (288.4,74.8) -- (503,74.8) .. controls (506.98,74.8) and (510.2,78.02) .. (510.2,82) -- (510.2,103.6) .. controls (510.2,107.58) and (506.98,110.8) .. (503,110.8) -- (288.4,110.8) .. controls (284.42,110.8) and (281.2,107.58) .. (281.2,103.6) -- cycle ;
\draw   (281.2,135) .. controls (281.2,131.02) and (284.42,127.8) .. (288.4,127.8) -- (503,127.8) .. controls (506.98,127.8) and (510.2,131.02) .. (510.2,135) -- (510.2,156.6) .. controls (510.2,160.58) and (506.98,163.8) .. (503,163.8) -- (288.4,163.8) .. controls (284.42,163.8) and (281.2,160.58) .. (281.2,156.6) -- cycle ;
\draw  [color={rgb, 255:red, 139; green, 87; blue, 42 }  ,draw opacity=0.6 ][fill={rgb, 255:red, 139; green, 87; blue, 42 }  ,fill opacity=0.1 ][dash pattern={on 4.5pt off 4.5pt}] (84,140.18) .. controls (84,132.9) and (89.9,127) .. (97.18,127) -- (232.02,127) .. controls (239.3,127) and (245.2,132.9) .. (245.2,140.18) -- (245.2,236.62) .. controls (245.2,243.9) and (239.3,249.8) .. (232.02,249.8) -- (97.18,249.8) .. controls (89.9,249.8) and (84,243.9) .. (84,236.62) -- cycle ;
\draw   (88.96,155.2) .. controls (88.96,151.11) and (92.27,147.8) .. (96.36,147.8) -- (230.8,147.8) .. controls (234.89,147.8) and (238.2,151.11) .. (238.2,155.2) -- (238.2,177.4) .. controls (238.2,181.49) and (234.89,184.8) .. (230.8,184.8) -- (96.36,184.8) .. controls (92.27,184.8) and (88.96,181.49) .. (88.96,177.4) -- cycle ;
\draw   (89.96,213.2) .. controls (89.96,209.11) and (93.27,205.8) .. (97.36,205.8) -- (231.8,205.8) .. controls (235.89,205.8) and (239.2,209.11) .. (239.2,213.2) -- (239.2,235.4) .. controls (239.2,239.49) and (235.89,242.8) .. (231.8,242.8) -- (97.36,242.8) .. controls (93.27,242.8) and (89.96,239.49) .. (89.96,235.4) -- cycle ;
\draw  [color={rgb, 255:red, 65; green, 117; blue, 5 }  ,draw opacity=0.6 ][fill={rgb, 255:red, 65; green, 117; blue, 5 }  ,fill opacity=0.1 ][dash pattern={on 4.5pt off 4.5pt}] (274.2,187.49) .. controls (274.2,183.36) and (277.56,180) .. (281.69,180) -- (509.71,180) .. controls (513.84,180) and (517.2,183.36) .. (517.2,187.49) -- (517.2,242.31) .. controls (517.2,246.44) and (513.84,249.8) .. (509.71,249.8) -- (281.69,249.8) .. controls (277.56,249.8) and (274.2,246.44) .. (274.2,242.31) -- cycle ;
\draw   (285.96,213.2) .. controls (285.96,209.11) and (289.27,205.8) .. (293.36,205.8) -- (368.8,205.8) .. controls (372.89,205.8) and (376.2,209.11) .. (376.2,213.2) -- (376.2,235.4) .. controls (376.2,239.49) and (372.89,242.8) .. (368.8,242.8) -- (293.36,242.8) .. controls (289.27,242.8) and (285.96,239.49) .. (285.96,235.4) -- cycle ;
\draw   (414.96,212.4) .. controls (414.96,208.31) and (418.27,205) .. (422.36,205) -- (497.8,205) .. controls (501.89,205) and (505.2,208.31) .. (505.2,212.4) -- (505.2,234.6) .. controls (505.2,238.69) and (501.89,242) .. (497.8,242) -- (422.36,242) .. controls (418.27,242) and (414.96,238.69) .. (414.96,234.6) -- cycle ;
\draw [color={rgb, 255:red, 144; green, 19; blue, 254 }  ,draw opacity=1 ]   (237,45) -- (279.2,45) ;
\draw [shift={(281.2,45)}, rotate = 180] [fill={rgb, 255:red, 144; green, 19; blue, 254 }  ,fill opacity=1 ][line width=0.08]  [draw opacity=0] (9.6,-2.4) -- (0,0) -- (9.6,2.4) -- cycle    ;
\draw [color={rgb, 255:red, 144; green, 19; blue, 254 }  ,draw opacity=1 ]   (237,100.6) -- (279.2,100.6) ;
\draw [shift={(281.2,100.6)}, rotate = 180] [fill={rgb, 255:red, 144; green, 19; blue, 254 }  ,fill opacity=1 ][line width=0.08]  [draw opacity=0] (9.6,-2.4) -- (0,0) -- (9.6,2.4) -- cycle    ;
\draw [color={rgb, 255:red, 144; green, 19; blue, 254 }  ,draw opacity=1 ]   (237.2,91) -- (278.65,57.26) ;
\draw [shift={(280.2,56)}, rotate = 140.86] [fill={rgb, 255:red, 144; green, 19; blue, 254 }  ,fill opacity=1 ][line width=0.08]  [draw opacity=0] (9.6,-2.4) -- (0,0) -- (9.6,2.4) -- cycle    ;
\draw [color={rgb, 255:red, 144; green, 19; blue, 254 }  ,draw opacity=1 ]   (237.2,54) -- (278.65,87.74) ;
\draw [shift={(280.2,89)}, rotate = 219.14] [fill={rgb, 255:red, 144; green, 19; blue, 254 }  ,fill opacity=1 ][line width=0.08]  [draw opacity=0] (9.6,-2.4) -- (0,0) -- (9.6,2.4) -- cycle    ;
\draw [color={rgb, 255:red, 245; green, 166; blue, 35 }  ,draw opacity=1 ]   (390.2,111) -- (390.2,126) ;
\draw [shift={(390.2,128)}, rotate = 270] [fill={rgb, 255:red, 245; green, 166; blue, 35 }  ,fill opacity=1 ][line width=0.08]  [draw opacity=0] (9.6,-2.4) -- (0,0) -- (9.6,2.4) -- cycle    ;
\draw [color={rgb, 255:red, 139; green, 87; blue, 42 }  ,draw opacity=1 ]   (163.2,185) -- (163.2,204) ;
\draw [shift={(163.2,206)}, rotate = 270] [fill={rgb, 255:red, 139; green, 87; blue, 42 }  ,fill opacity=1 ][line width=0.08]  [draw opacity=0] (9.6,-2.4) -- (0,0) -- (9.6,2.4) -- cycle    ;
\draw [color={rgb, 255:red, 65; green, 117; blue, 5 }  ,draw opacity=1 ]   (376,224) -- (413.2,224) ;
\draw [shift={(415.2,224)}, rotate = 180] [fill={rgb, 255:red, 65; green, 117; blue, 5 }  ,fill opacity=1 ][line width=0.08]  [draw opacity=0] (9.6,-2.4) -- (0,0) -- (9.6,2.4) -- cycle    ;
\draw [color={rgb, 255:red, 245; green, 166; blue, 35 }  ,draw opacity=1 ]   (510,91) -- (529.2,91) ;
\draw [color={rgb, 255:red, 245; green, 166; blue, 35 }  ,draw opacity=1 ]   (507,224) -- (529.2,224) ;
\draw [shift={(505,224)}, rotate = 0] [fill={rgb, 255:red, 245; green, 166; blue, 35 }  ,fill opacity=1 ][line width=0.08]  [draw opacity=0] (9.6,-2.4) -- (0,0) -- (9.6,2.4) -- cycle    ;
\draw [color={rgb, 255:red, 245; green, 166; blue, 35 }  ,draw opacity=1 ]   (529.2,91) -- (529.2,224) ;
\draw [color={rgb, 255:red, 139; green, 87; blue, 42 }  ,draw opacity=1 ]   (238,166.6) -- (248.2,166.6) ;
\draw [color={rgb, 255:red, 139; green, 87; blue, 42 }  ,draw opacity=1 ]   (247.76,235.4) -- (283.96,235.4) ;
\draw [shift={(285.96,235.4)}, rotate = 180] [fill={rgb, 255:red, 139; green, 87; blue, 42 }  ,fill opacity=1 ][line width=0.08]  [draw opacity=0] (9.6,-2.4) -- (0,0) -- (9.6,2.4) -- cycle    ;
\draw [color={rgb, 255:red, 139; green, 87; blue, 42 }  ,draw opacity=1 ]   (248.2,166) -- (248.2,235) ;
\draw [color={rgb, 255:red, 245; green, 166; blue, 35 }  ,draw opacity=1 ]   (254,52.6) -- (280.2,52.6) ;
\draw [color={rgb, 255:red, 245; green, 166; blue, 35 }  ,draw opacity=1 ]   (254.2,52) -- (254.2,225) ;
\draw [color={rgb, 255:red, 245; green, 166; blue, 35 }  ,draw opacity=1 ]   (254,224.6) -- (284.2,224.6) ;
\draw [shift={(286.2,224.6)}, rotate = 180] [fill={rgb, 255:red, 245; green, 166; blue, 35 }  ,fill opacity=1 ][line width=0.08]  [draw opacity=0] (9.6,-2.4) -- (0,0) -- (9.6,2.4) -- cycle    ;
\draw [color={rgb, 255:red, 245; green, 166; blue, 35 }  ,draw opacity=1 ]   (265,214.6) -- (284.2,214.6) ;
\draw [shift={(286.2,214.6)}, rotate = 180] [fill={rgb, 255:red, 245; green, 166; blue, 35 }  ,fill opacity=1 ][line width=0.08]  [draw opacity=0] (9.6,-2.4) -- (0,0) -- (9.6,2.4) -- cycle    ;
\draw [color={rgb, 255:red, 245; green, 166; blue, 35 }  ,draw opacity=1 ]   (265,145.6) -- (281.2,145.6) ;
\draw [color={rgb, 255:red, 245; green, 166; blue, 35 }  ,draw opacity=1 ]   (265,145.6) -- (265,214.6) ;

\draw (128,15) node [anchor=north west][inner sep=0.75pt]  [font=\footnotesize,color={rgb, 255:red, 144; green, 19; blue, 254 }  ,opacity=1 ] [align=left] {\textbf{Basic Lemmas}};
\draw (90,37) node [anchor=north west][inner sep=0.75pt]  [font=\small] [align=left] {\begin{minipage}[lt]{108.84pt}\setlength\topsep{0pt}
\begin{center}
\textcolor[rgb]{0.82,0.01,0.11}{State} Perturbation\\(Lemma~\ref{lem:state_perturbation})
\end{center}

\end{minipage}};
\draw (90.17,81.33) node [anchor=north west][inner sep=0.75pt]  [font=\small] [align=left] {\begin{minipage}[lt]{108.84pt}\setlength\topsep{0pt}
\begin{center}
Consistent Error Matching\\
(Lemma~\ref{lem:error_matching})
\end{center}

\end{minipage}};
\draw (340,14) node [anchor=north west][inner sep=0.75pt]  [font=\footnotesize,color={rgb, 255:red, 245; green, 166; blue, 35 }  ,opacity=1 ] [align=left] {\textbf{Perturbation Analysis}};
\draw (283.4,36.8) node [anchor=north west][inner sep=0.75pt]  [font=\small] [align=left] {\begin{minipage}[lt]{169.92pt}\setlength\topsep{0pt}
\begin{center}
\textcolor[rgb]{0.82,0.01,0.11}{State-Driven} Value-Function Perturbation\\
(Proposition~\ref{prop:state_perturbation})
\end{center}

\end{minipage}};
\draw (282,79) node [anchor=north west][inner sep=0.75pt]  [font=\small] [align=left] {\begin{minipage}[lt]{173.49pt}\setlength\topsep{0pt}
\begin{center}
\textcolor[rgb]{0.82,0.01,0.11}{Model-Driven} Value-Function Perturbation\\
(Proposition~\ref{prop:model_mismatch})
\end{center}

\end{minipage}};
\draw (303.33,132.67) node [anchor=north west][inner sep=0.75pt]  [font=\small] [align=left] {\begin{minipage}[lt]{138.94pt}\setlength\topsep{0pt}
\begin{center}
\textcolor[rgb]{0.82,0.01,0.11}{Model-Driven} k-Sum Perturbation\\
(Proposition~\ref{prop:k_sum_mismatch})
\end{center}

\end{minipage}};
\draw (118.67,153.5) node [anchor=north west][inner sep=0.75pt]  [font=\small] [align=left] {\begin{minipage}[lt]{71.09pt}\setlength\topsep{0pt}
\textcolor[rgb]{0.82,0.01,0.11}{Nominal} Stability
\begin{center}
(Theorem~\ref{thm:nominal_stability_performance})
\end{center}

\end{minipage}};
\draw (116,131) node [anchor=north west][inner sep=0.75pt]  [font=\footnotesize,color={rgb, 255:red, 139; green, 87; blue, 42 }  ,opacity=1 ] [align=left] {\textbf{Asynchronous MPC}};
\draw (102.36,210.8) node [anchor=north west][inner sep=0.75pt]  [font=\small] [align=left] {\begin{minipage}[lt]{91.5pt}\setlength\topsep{0pt}
\begin{center}
\textcolor[rgb]{0.82,0.01,0.11}{Nominal} Performance\\(Theorem~\ref{thm:nominal_stability_performance})
\end{center}

\end{minipage}};
\draw (333,186) node [anchor=north west][inner sep=0.75pt]  [font=\footnotesize,color={rgb, 255:red, 65; green, 117; blue, 5 }  ,opacity=1 ] [align=left] {\textbf{Asynchronous CE-MPC}};
\draw (295.67,211) node [anchor=north west][inner sep=0.75pt]  [font=\small] [align=left] {\begin{minipage}[lt]{53.23pt}\setlength\topsep{0pt}
\begin{center}
\textcolor[rgb]{0.82,0.01,0.11}{CE} Stability\\
(Theorem~\ref{thm:stability_performance_acempc})
\end{center}

\end{minipage}};
\draw (408.36,210) node [anchor=north west][inner sep=0.75pt]  [font=\small] [align=left] {\begin{minipage}[lt]{75.28pt}\setlength\topsep{0pt}
\begin{center}
\textcolor[rgb]{0.82,0.01,0.11}{CE} Performance\\
(Theorem~\ref{thm:stability_performance_acempc})
\end{center}

\end{minipage}};
\end{tikzpicture}
}
\caption{Pipeline for performance analysis of asynchronous certainty-equivalence model predictive control (CE-MPC).}
\end{figure}

%% file: 1-major_content/5-case_study.tex
\vspace{-0.3cm}
\section{Numerical Example}
\label{sec:5-examples}
In this section, a numerical example is presented to verify the validity of the derived performance bound. Consider the following second-order discrete-time nonlinear system:
\begin{equation}
    \label{eq:numerical_example_model}
    \hspace{-0.5em}
    \begin{bmatrix}
        x^{(1)}_{t+1} \\
        x^{(2)}_{t+1}
    \end{bmatrix}
    \seq
    \underbrace{\begin{bmatrix}
        -0.6\sin(x^{(2)}_{t}) \\
        0.67(x^{(1)}_{t} \splus x^{(2)}_{t}) \splus u_t
    \end{bmatrix}}_{\text{nominal }f(x_t, u_t)}
    +
    \underbrace{\sin(\omega_t)
    \begin{bmatrix}
        |x^{(1)}_{t}| \\
        \tanh(u_t)
    \end{bmatrix}}_{\text{mismatch }\Delta f_t(x_t,u_t)},
\end{equation}
where $\omega_t = \mmf[1 + \cos(t)]$ reflects the time-varying nature of the model mismatch, with $\mmf \in [0, 0.01]$ denoting the mismatch level. System~\eqref{eq:numerical_example_model} readily satisfies Assumption~\ref{ass:system_setting} with $L_f = 0.95$, $L_{\Delta}(\mmf) = 2\mmf$, and $\bar{L}_{\Delta f} = 0.03$. The stage-cost matrices are set to $Q = \mathbf{I}_2$ and $R = 1$, the terminal-cost matrix is $P = 5\mathbf{I}_2$, and the input-constraint set is given by $\cU = \{u \in \bR \mid -0.1 \leq u \leq 0.1\}$. Due to the non-smooth mismatch term, the IHOPC problem with the true model is formulated as a mixed-integer program over a sufficiently long horizon and solved via Gurobi~\cite{gurobi}. Conversely, for the CE-MPC problem with the nominal model, the resulting nonlinear program is solved using IPOPT via CasADi~\cite{andersson2019}. All simulations are carried out in Python 3.13.11. The true performance ratio is defined as
\begin{equation}
    \label{eq:true_performance_ratio}
    R^{\diamond}_{N,M}(\mmf) = \frac{\cJ^{[N,M]}_{\infty}(x;f)}{V_{\infty}(x;\tsysc)},
\end{equation}
which is evaluated against the theoretical competitive-ratio bound $\Rcr$ in~\eqref{eq:ratio_acempc_detail}.

For the numerical setup, the prediction horizon is selected as $N = 25$. To guarantee that the stability condition $\rho_{N,M}(\mmf) < 1$ holds for all $\mmf \leq 0.01$, the considered maximum inter-execution step is restricted to $M \leq 8$, and the initial state is chosen from $x_0 \in \{[3,3]^\top, [-3,3]^\top, [3,-3]^\top, [-3,-3]^\top\}$. To encompass both general asynchronous feedback and multi-step MPC, two execution scenarios are evaluated for any given $M$: (i) \textbf{random inter-execution intervals}, where $m_j$ is drawn uniformly at random from $\bI_{1:M}$ over $100$ independent Monte Carlo runs, and (ii) \textbf{a fixed inter-execution interval}, where $m_j = M$. 

System performance is analyzed under two distinct experimental settings:
\begin{enumerate}
    \item \textbf{Varying the Maximum Inter-Execution Step $M$:} The mismatch level is fixed to $\mmf = 0.01$ while $M$ is varied over $M \in \bI_{2:8}$ under both scenarios. The simulation results are shown in Fig.~2.
    \item \textbf{Varying the Mismatch Level $\mmf$:} The maximum inter-execution step is fixed to $M = 5$ while the mismatch level is varied as $\mmf = 0.001i$ for $i \in \bI_{1:10}$ under both scenarios. The simulation results are shown in Fig.~3.
\end{enumerate}
\begin{figure}
    \centering
    \includegraphics[width=\linewidth]{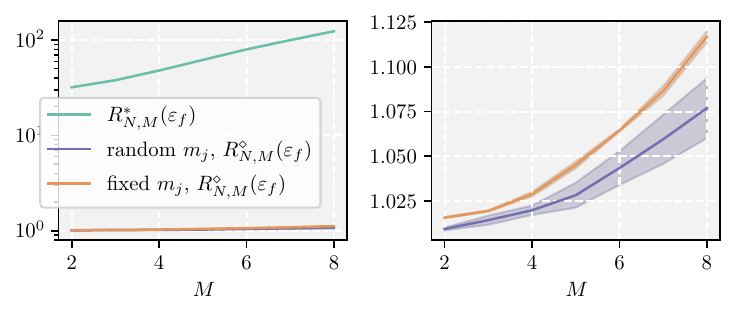}
    \vspace{-0.7cm}
    \caption{Performance evaluation for varying maximum inter-execution intervals $M$. The shaded envelope encloses closed-loop results across four initial states and/or 100 random realizations of $m_j$. The right panel presents a zoomed-in view of the left panel.}
    \label{fig:varying_inter_execution}
\end{figure}
\begin{figure}
    \centering
    \includegraphics[width=\linewidth]{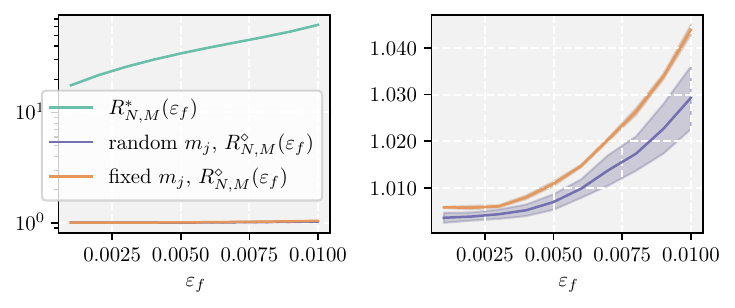}
    \vspace{-0.7cm}
    \caption{Performance evaluation under varying mismatch level $\mmf$. The shaded envelope encloses closed-loop results across four initial states and/or 100 random realizations of $m_j$. The right panel presents a zoomed-in view of the left panel.}
    \label{fig:varying_mismatch}
\end{figure}
As shown in Figs.~2 and~3, the competitive-ratio bound consistently satisfies $\Rcr \geq R^{\diamond}_{N,M}(\mmf)$, corroborating the theoretical guarantee. Although conservative, as is typical of worst-case theoretical performance bounds~\cite{kohler2023stability, grune2017nonlinear, shi2025suboptimality}, it captures the observed performance deterioration with increasing $M$ and $\mmf$. Its significance lies not merely in reproducing this trend, but in providing a theoretically certified, parameter-dependent bound rather than an empirical fit, thereby supporting performance-aware choices of inter-execution intervals and model accuracy.

%% file: 1-major_content/6-conclusion.tex
\section{Conclusions and future work}
\label{sec:6-conclusion} 
This paper has established a theoretical analysis pipeline for certainty-equivalence model predictive control (CE-MPC) subject to model mismatch and asynchronous feedback. By leveraging perturbation analysis of the MPC value function, we have derived explicit guarantees for closed-loop stability and competitive-ratio performance under general terminal costs. Ultimately, these results systematically characterize how the prediction horizon, feedback frequency, and model mismatch jointly influence the stability and worst-case infinite-horizon performance of asynchronous CE-MPC. 

Promising directions for future research include deriving tighter perturbation bounds and extending this framework to multi-agent systems with distributed MPC.

%% file: 1-major_content/7-appendices.tex
\appendices
\setlength{\abovedisplayskip}{4pt plus 1pt minus 3pt}
\setlength{\belowdisplayskip}{4pt plus 1pt minus 3pt}
\setlength{\abovedisplayshortskip}{0pt plus 2pt}
\setlength{\belowdisplayshortskip}{3pt plus 1pt minus 1pt}
{\small
\input{5-appendices/1-appendix_A}

\input{5-appendices/2-appendix_B}

}

%% file: 5-appendices/1-appendix_A.tex
\section{Details of Section \ref{sec:4-theoretical_analysis}}
\label{appendix:A}
Given an initial state $x \in \cX$ and a state-feedback policy $\pi:\cX \to \cU$, the $k$-step-forward closed-loop state of the nominal model $f$ under $\pi$ is denoted by $\phi_{\pi,k}(x)$, satisfying $\phi_{\pi,0}(x) = x$ and $\phi_{\pi,k+1}(x) = f(\phi_{\pi,k}(x), \pi(\phi_{\pi,k}(x)))$. In addition, for a given policy $\pi$, we introduce the shorthand notation $\ell_{\pi}(x) := \ell(x, \pi(x))$.
\vspace{-0.3cm}
\subsection{Auxiliary Lemmas}
\label{appendix:A-lemmas}
\begin{lemma}
    \label{lm:terminal_policy}
    Let Assumption~\ref{ass:terminal_cost_relaxed_clf} hold. Given a state $x \in \xroa(\ell)$, there exists a policy $\pit: \cX \to \cU$ such that $\forall k \in \bI_{1:\infty}$,
    \begin{equation}
        \label{eq:appA-multistep}
        \elf(\phi_{\pit,k}(x)) \splus \sum^{k-1}_{i=0}\ell_{\pit}(\phi_{\pit,i}(x)) \leq (1 \splus \epf)^{k}\elf(x),
    \end{equation}
    where the constant $\epf$ is given in~\eqref{eq:relaxed_clf} in Assumption~\ref{ass:terminal_cost_relaxed_clf}.
\end{lemma}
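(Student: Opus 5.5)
The plan is to take $\pit$ to be a pointwise minimizer in the rCLF inequality~\eqref{eq:relaxed_clf} and then to prove~\eqref{eq:appA-multistep} by induction on $k$, using a telescoping argument. For every $x$ where Assumption~\ref{ass:terminal_cost_relaxed_clf} applies, the minimum in~\eqref{eq:relaxed_clf} is attained, since the assumption is stated with a $\min$. In addition, $u\mapsto \ell(x,u)$ is coercive because $R\succ 0$, and $\cU$ is closed. I would therefore define $\pit(x)\in\arg\min_{u\in\cU}\{\elf(f(x,u))+\ell(x,u)\}$, picking any selection, and set $\pit(x)=\bzero_m$ elsewhere. This gives the one-step bound
\begin{equation*}
\elf(f(x,\pit(x)))+\ell_{\pit}(x)\leq(1+\epf)\els(x)\leq(1+\epf)\elf(x),
\end{equation*}
where the last inequality uses $P\succeq Q$, so that $\els(x)=\|x\|_Q^2\leq\|x\|_P^2=\elf(x)$. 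This already settles the case $k=1$.

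For the inductive step I would write $S_k:=\elf(\phi_{\pit,k}(x))+\sum_{i=0}^{k-1}\ell_{\pit}(\phi_{\pit,i}(x))$. Adding and subtracting $\elf(\phi_{\pit,k}(x))$ gives
\begin{equation*}
S_{k+1}=S_k-\elf(\phi_{\pit,k}(x))+\big[\elf(\phi_{\pit,k+1}(x))+\ell_{\pit}(\phi_{\pit,k}(x))\big].
\end{equation*}
Applying the one-step bound at the point $\phi_{\pit,k}(x)$ turns this into $S_{k+1}\leq S_k+\epf\,\elf(\phi_{\pit,k}(x))$. Since all stage costs are nonnegative, $\elf(\phi_{\pit,k}(x))\leq S_k$, and hence $S_{k+1}\leq(1+\epf)S_k$. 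Iterating from $S_0=\elf(x)$ yields $S_k\leq(1+\epf)^k\elf(x)$, which is exactly~\eqref{eq:appA-multistep}.

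The main obstacle is that~\eqref{eq:relaxed_clf} is assumed only on $\xroa(\ell)$. The inductive step therefore requires every iterate $\phi_{\pit,k}(x)$ to remain in $\xroa(\ell)$. Control invariance of $\xroa(\ell)$ alone does not give this for the specific policy $\pit$. I would first try to establish invariance directly from finiteness of the cost. From the one-step bound, $\elf(f(x,\pit(x)))<\infty$ along the trajectory, and in the nominal setting ($\tf_t=f$) $\xroa(\ell)$ is control invariant. I would then try to show that the successor of an ROA point under $\pit$ can again be driven with finite infinite-horizon cost. If this cannot be closed, the fallback is to state the lemma under the standing interpretation that~\eqref{eq:relaxed_clf} holds along the trajectories generated by $\pit$. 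This interpretation is implicit in how the rCLF is used in~\cite{kohler2023stability}, and it suffices for the later use of the lemma in bounding $V_N$.
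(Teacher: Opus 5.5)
Your core argument is the paper's. In both, $\pit$ is a pointwise minimizer in \eqref{eq:relaxed_clf}, the one-step bound is combined with $\els\leq\elf$ (from $P\succeq Q$), and the estimate is iterated. The paper unrolls backward from $i=k$ to $i=1$ and inflates each stage cost by $(1+\epf)^{k-i}\geq 1$. You instead run the forward recursion $S_{k+1}\leq S_k+\epf\,\elf(\phi_{\pit,k}(x))\leq(1+\epf)S_k$ and discard nonnegative stage costs; this is the same slack used in the other order.

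One small correction: coercivity of $\ell(x,\cdot)$ and closedness of $\cU$ do not by themselves give a minimizer. Nothing makes $u\mapsto\elf(f(x,u))$ lower semicontinuous, since Assumption~\ref{ass:system_setting} only constrains $f$ in $x$. Attainment has to come from the $\min$ in Assumption~\ref{ass:terminal_cost_relaxed_clf}, which is how the paper reads it.

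The obstacle you flag is real, and the paper's proof passes over it: it applies the one-step bound at every $\phi_{\pit,i-1}(x)$ without checking that these points lie in $\xroa(\ell)$. Your first route will not work, because $\xroa(\ell)$ need not be invariant under $\pit$, and the lemma as literally stated can fail. Here is a counterexample.
\begin{itemize}
\item Take the scalar model $f(x,u)=g(x)+u$ with $g$ odd, $g(x)=2x$ on $[0,1]$ and $g(x)=100x-98$ for $x>1$, together with $\cU=[-1,1]$, $Q=P=1$, $R=100$.
\item For $x\geq 1$ and $u\in\cU$ one has $g(x)+u\geq 2x-1\geq x$, and symmetrically for $x\leq -1$. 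Every $|x|<1$ can be steered to the origin in finitely many steps, so $\xroa(\ell)=(-1,1)$.
\item On this set the unconstrained minimizer $u=-2x/101$ is admissible and gives $\min_{u\in\cU}\{(2x+u)^2+x^2+100u^2\}=(1+\tfrac{400}{101})x^2\leq 5x^2$. So Assumption~\ref{ass:terminal_cost_relaxed_clf} holds with $\epf=4$.
\item Start at $x=0.9$ with an arbitrary input sequence. Your $S_1\leq 5\cdot 0.81$ forces $|u_0|\leq 0.18$, hence $x_1=1.8+u_0\geq 1.62$, hence $x_2\geq g(1.62)-1=63$ and $S_2\geq 63^2>25\cdot 0.81$.
\end{itemize}
So no input sequence, and in particular no policy, satisfies \eqref{eq:appA-multistep} for both $k=1$ and $k=2$.

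Your fallback is therefore necessary, not optional. The lemma needs \eqref{eq:relaxed_clf} on a set that contains $x$ and is forward invariant under $\pit$, most simply on all of $\cX$. Under that hypothesis your induction is complete. The global version is also what the paper tacitly uses later, since the lemma is invoked at $\xi^\star_{N|t_j}(f)$, which need not belong to $\xroa(\ell)$ either.
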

\begin{proof}
    Denote the minimizer satisfying~\eqref{eq:relaxed_clf} by $u^\star_{\mathrm{T}}(x)$, which implicitly defines a terminal policy $\pit: \cX \to \cU$ as $\pit(x) = u^\star_{\mathrm{T}}(x)$. Under the policy $\pit$, inequality \eqref{eq:relaxed_clf} implies that
    \begin{equation}
        \label{eq:appA-terminal_policy_substitution}
        \elf(f(x,\pit(x))) + \ell_{\pit}(x) \leq (1+\epf)\ell^\star(x).
    \end{equation}
    Given $k \in \bI_{1:\infty}$, for all $i \in \bI_{0:k}$, it holds that
    \begin{multline}
        \label{eq:appA-recursion_terminal_policy}
        (1\splus\epf)^{k-i}\elf(\phi_{\pit,i}(x)) +  \ell_{\pit}(\phi_{\pit,i-1}(x)) \\ \leq(1\splus\epf)^{k-i}\left[\elf(\phi_{\pit,i}(x))+\ell_{\pit}(\phi_{\pit,i-1}(x))\right] \\
        \overset{\eqref{eq:appA-terminal_policy_substitution};\;\ell^\star(x)\leq \elf(x)}{\leq} (1\splus\epf)^{k-i+1}\elf(\phi_{\pit,i-1}(x)).
    \end{multline}
    Starting with $\elf(\phi_{\pit,k}(x)) \splus \sum^{k-1}_{i=0}\ell_{\pit}(\phi_{\pit,i}(x))$, recursively applying~\eqref{eq:appA-recursion_terminal_policy} for $i = k,k-1,\dots,1$ yields~\eqref{eq:appA-multistep}.
\end{proof}
\begin{lemma}
    \label{lm:increasing_function}
    Given positive constants $c_1$, $c_2$, and $\bar{y}$, define the function $g:\mathbb{R}_+ \to \mathbb{R}_+$ as $g(y) = [g_1(y)][g_2(y)]^{-1}$, where $g_1(y) = (1+c_1)^y-1$ and $g_2(y) = c_2 + (\bar{y} - y)(1+c_1)^y$. Then, $g$ is strictly monotonically increasing on $(0, \bar{y})$.
\end{lemma}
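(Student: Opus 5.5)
The plan is a direct derivative computation: we show that $g'(y)>0$ on $(0,\bar y)$. Set $a := 1+c_1>1$, $L := \ln a>0$ and $z := a^y$. For $y\in(0,\bar y)$ we have $z>1$ and $\bar y - y>0$.

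\textbf{Well-definedness and differentiability.} The denominator satisfies $g_2(y) = c_2 + (\bar y - y)z > 0$ on $[0,\bar y]$, since $c_2>0$. Hence $g = g_1/g_2$ is well defined and differentiable there. It therefore suffices to show that the numerator of the quotient rule, $g_1'g_2 - g_1 g_2'$, is strictly positive on $(0,\bar y)$, because $g'(y) = (g_1'g_2 - g_1g_2')/g_2^2$.

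\textbf{Quotient-rule computation.} The derivatives are $g_1'(y) = zL$ and $g_2'(y) = -z + (\bar y - y)zL = z[(\bar y - y)L - 1]$. Substituting and factoring out $z$ gives
\begin{align*}
g_1'g_2 - g_1g_2' &= zL\big(c_2 + (\bar y-y)z\big) - (z-1)z\big[(\bar y-y)L-1\big]\\
&= z\big[Lc_2 + (\bar y-y)Lz - (z-1)(\bar y-y)L + (z-1)\big]\\
&= z\big[Lc_2 + (\bar y-y)L + (z-1)\big].
\end{align*}

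\textbf{Sign conclusion.} Every term in the last bracket is positive on $(0,\bar y)$: $Lc_2>0$, $(\bar y-y)L>0$ and $z-1>0$. Since also $z>0$, we get $g'(y)>0$ on $(0,\bar y)$, and so $g$ is strictly increasing there.

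One might worry that $g_2$ need not be monotone, because $g_2'$ changes sign when $(\bar y-y)L = 1$. The computation above shows this does not matter: the cross terms $(\bar y-y)Lz$ cancel, and no sign argument on $g_2$ is needed. The only step that needs care is this algebraic cancellation.
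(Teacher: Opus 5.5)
Your proposal is correct and matches the paper's proof. The paper also applies the quotient rule, obtaining $g'(y)=[g_2(y)]^{-2}h(y)\,[c_2\ln(1+c_1)+h(y)-1+(\bar{y}-y)\ln(1+c_1)]$ with $h(y)=(1+c_1)^y$, which is exactly your $z\,[Lc_2+(\bar{y}-y)L+(z-1)]/g_2^2$, and then concludes positivity term by term; your explicit check that $g_2>0$ on $[0,\bar{y}]$ is a small point the paper leaves implicit.
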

\begin{proof}
    Let $h(y) = (1+c_1)^y$. By the quotient rule, the derivative of $g$ is given by $g'(y) = [g_2(y)]^{-2} h(y) [c_2 \ln(1+c_1) + h(y) - 1 + (\bar{y} - y)\ln(1+c_1)]$. For any $y \in (0, \bar{y})$, it holds that $h(y) > 1$, $\bar{y} - y > 0$, and $\ln(1+c_1) > 0$, leading to $g'(y) > 0$. Thus, $g$ is strictly monotonically increasing on $(0, \bar{y})$. 
\end{proof}
\vspace{-0.5cm}
\subsection{Proof of Theorem~\ref{thm:nominal_stability_performance}}
\label{appendix:A-proof-thm2}
\begin{proof}
By the optimality of $V_N(x_{t_{j+1}};f)$, it holds that
\begin{multline}
    \label{eq:appA-value_propagation_suboptimal}
    V_N(x_{t_{j+1}};f) \leq \sum^{N-1}_{k=m_j}\ell_{k|t_j}(f) + \sum^{m_j-1}_{k=0}\ell_{\pit}(\phi_{\pit,k}(\xi^\star_{N|t_j}(f))) \\ + \elf(\phi_{\pit,m_j}(\xi^\star_{N|t_j}(f)) \\
    \overset{\eqref{eq:appA-multistep}}{\leq} \sum^{N-1}_{k=m_j}\ell_{k|t_j}(f) + (1\splus\epf)^{m_j}\elf(\xi^\star_{N|t_j}(f)).
\end{multline}
Therefore, the dissipation term $V_N(x_{t_{j+1}};f) - V_N(x_{t_{j}};f)$ satisfies
\begin{multline}
    \label{eq:appA-dissipation_original}
    V_N(x_{t_{j+1}};f) - V_N(x_{t_{j}};f) \leq -\sum^{m_j-1}_{k=0}\ell_{k|t_j}(f) \\ + \left[(1+\epf)^{m_j}-1\right]\elf(\xi^\star_{N|t_j}(f)).
\end{multline}
The remaining proof has two parts with complementary conditions. First, assume that the following condition holds:
\begin{equation}
    \label{eq:appA-case_distinction}
    \mathcal{E}^+: \elf(\xi^\star_{N|t_j}(f)) \leq \Gamma_{N,m_j}\sum^{m_j-1}_{k=0}\ell^\star(\xi^\star_{k|t_j}(f)),
\end{equation}
where $\Gamma_{N,m_j} := \frac{(1+\umpc)\mpc_N}{(1+\umpc)+(N-m_j)(1+\epf)^{m_j}}$. Then \eqref{eq:appA-dissipation_original} yields
\begin{multline}
    \label{eq:appA-distinction_original_final}
    V_N(x_{t_{j+1}};f) - V_N(x_{t_{j}};f) \overset{\eqref{eq:appA-case_distinction}}{\leq} -\sum^{m_j-1}_{k=0}\ell_{k|t_j}(f) \\ + \left[(1+\epf)^{m_j}-1\right]\Gamma_{N,m_j}\sum^{m_j-1}_{k=0}\ell^\star(\xi^\star_{N|t_j}(f))
    \\
    \leq -\left(1 - \frac{(1+\umpc)\mpc_N[(1 + \epf)^{m_j}-1]}{1+\umpc+(N-m_j)(1+\epf)^{m_j}}\right)\sum^{m_j-1}_{k=0}\ell_{k|t_j}(f), \hspace{-0.6em}
\end{multline}
where the second inequality is due to $\ell^\star(\xi^\star_{N|t_j}(f)) \leq \ell_{k|t_j}(f)$. On the other hand, assume that the following condition holds:
\begin{equation}
    \label{eq:appA-case_distinct_reverse}
    \mathcal{E}^-: \elf(\xi^\star_{N|t_j}(f)) \geq \Gamma_{N,m_j}\sum^{m_j-1}_{k=0}\ell^\star(\xi^\star_{k|t_j}(f)).
\end{equation}
Then, by the definition of $V_N(x;f)$, it holds that
\begin{multline}
    \label{eq:appA-disctinction_reverse_mpcvalue_starting}
    \sum^{N-1}_{k=0}\ell^\star(\xi^\star_{k|t_j}(f)) \leq V_N(x_{t_j};f) - \elf(\xi^\star_{N|t_j}(f)) \\[-0.5em]
    \overset{\eqref{eq:cost_controllability};\;\eqref{eq:appA-case_distinct_reverse}}{\leq} [1 + \mpc_N - \Gamma_{N,m_j}]\sum^{m_j-1}_{k=0}\ell^\star(\xi^\star_{k|t_j}(f)).
\end{multline}
Subtracting $\sum^{m_j-1}_{k=0}\ell^\star(\xi^\star_{k|t_j}(f))$ from both sides of~\eqref{eq:appA-disctinction_reverse_mpcvalue_starting} yields
\begin{equation}
    \label{eq:appA-disctinction_reverse_mpcvalue_subtracted}
    \sum^{N-1}_{k=m_j}\ell^\star(\xi^\star_{k|t_j}(f)) \leq [\mpc_N - \Gamma_{N,m_j}]\sum^{m_j-1}_{k=0}\ell^\star(\xi^\star_{k|t_j}(f)),
\end{equation}
which implies that $\exists k_{0} \in \bI_{m_j:N-1}$ such that
\begin{equation}
    \label{eq:appA-disctinction_reverse_mpcvalue_existence}
    \ell^\star(\xi^\star_{k_0|t_j}(f)) \leq \left[\frac{\mpc_N - \Gamma_{N,m_j}}{N-m_j}\right]\sum^{m_j-1}_{k=0}\ell^\star(\xi^\star_{k|t_j}(f)).
\end{equation}
Again, by the optimality of $V_N(x_{t_{j+1}};f)$, it holds that
\begin{align}
    \label{eq:appA-distinction_reverse_final}
    & V_N(x_{t_{j+1}};f) + \sum^{m_j-1}_{k=0}\ell_{k|t_j}(f) - V_N(x_{t_{j}};f) \notag\\[-0.5em] 
    & \hspace{2em} \leq V_{N-k_0+m_j}(\xi^\star_{k_0|t_j}(f)) + \sum^{k_0-1}_{k=0}\ell_{k|t_j}(f) - V_N(x_{t_{j}};f) \notag \\[-0.5em]
    & \overset{\eqref{eq:cost_controllability};\;\eqref{eq:appA-case_distinct_reverse};\;\eqref{eq:appA-disctinction_reverse_mpcvalue_existence}}{\leq} \left\{\frac{(1+\umpc)[\mpc_N - \Gamma_{N,m_j}]}{N-m_j} - \Gamma_{N,m_j}\right\}\sum^{m_j-1}_{k=0}\ell_{k|t_j}(f) \notag \\[-0.5em]
    & \hspace{2em} \leq \frac{(1+\umpc)\mpc_N[(1 + \epf)^{m_j}-1]}{1+\umpc+(N-m_j)(1+\epf)^{m_j}}\sum^{m_j-1}_{k=0}\ell_{k|t_j}(f) 
\end{align}
By the law of trichotomy, the condition $\mathcal{E}^+ \lor \mathcal{E}^-$ is exhaustively complete. Consequently, leveraging~\eqref{eq:appA-distinction_original_final} and~\eqref{eq:appA-distinction_reverse_final} yields
\begin{equation}
    \label{eq:appA-dissipation_final}
    V_N(x_{t_{j+1}};f) - V_N(x_{t_{j}};f) \leq -[1 - \delta_{N,m_j}] \sum^{m_j-1}_{k=0}\ell_{k|t_j}(f),
\end{equation}
where $\delta_{N,m_j} = \frac{(1+\umpc)\mpc_N[(1 + \epf)^{m_j}-1]}{1+\umpc+(N-m_j)(1+\epf)^{m_j}}$. By Lemma~\ref{lm:increasing_function}, it holds that $\delta_{N,m_j} \leq \delta_{N,M}$ with $\delta_{N,M}$ given in~\eqref{eq:nominal_stability_margin}. Thus, \eqref{eq:appA-dissipation_final} reduces to
\begin{equation}
    \label{eq:appA-dissipation_final_general}
    V_N(x_{t_{j+1}};f) - V_N(x_{t_{j}};f) \leq -[1 - \delta_{N,M}] \sum^{m_j-1}_{k=0}\ell_{k|t_j}(f).
\end{equation}
Applying~\eqref{eq:appA-dissipation_final_general} inductively from $x_0 = x_{1,0} = x$ via a telescopic sum~\cite{grune2017nonlinear, kohler2023stability} leads to
\begin{equation}
    \label{eq:appA-telescopic}
    [1 - \delta_{N,M}]\sum^{J}_{j=1}\sum^{m_j-1}_{k=0}\ell_{k|t_j}(f) \leq V_N(x;f) - V_N(x_{t_{J+1}};f)
\end{equation}
Taking the limit as $J \to \infty$ and noting that $V_N(x_{t_{J+1}};f) \geq 0$, \eqref{eq:appA-telescopic} establishes that
\begin{equation}
    \label{eq:appA-telescopic_limit}
    [1 - \delta_{N,M}]\sum^{\infty}_{j=1}\sum^{m_j-1}_{k=0}\ell_{k|t_j}(f) \leq V_N(x;f).
\end{equation}
Following standard arguments in MPC stability proofs using relaxed dynamic programming~\cite{grune2008infinite, grune2017nonlinear}, since $V_N(x;f) < +\infty$, the condition $1 - \delta_{N,M} > 0$ implies that $\lim_{j\to\infty}\ell_{k|t_j}(f) = 0$. By the positive definiteness of $\ell$, this further ensures that $\lim_{j\to\infty}x_{j,k} = \bzero$. Consequently, the origin is asymptotically stable under the asynchronous MPC controller without model mismatch provided that $\delta_N(M) < 1$. Furthermore, analogous to~\cite[Proof of Theorem~5, Part~II]{kohler2023stability}, it holds that
\begin{equation}
    \label{eq:appA-mpcvalue_to_infty}
    V_N(x;f) \leq 1 + \frac{\eigmax{P}}{\eigmin{Q}}\left(\frac{\umpc}{1+\umpc}\right)^N V_{\infty}(x;f).
\end{equation}
Combining~\eqref{eq:appA-telescopic_limit} and~\eqref{eq:appA-mpcvalue_to_infty} yields
\begin{equation}
    \sum^{\infty}_{j=1}\sum^{m_j-1}_{k=0}\ell_{k|t_j}(f) \leq \frac{1 + \frac{\eigmax{P}}{\eigmin{Q}}\left(\frac{\umpc}{1+\umpc}\right)^N}{1 - \delta_{N,M}}V_{\infty}(x;f),
\end{equation}
which confirms the performance bound~\eqref{eq:nominal_performance_bound}.
\end{proof}

%% file: 5-appendices/2-appendix_B.tex
\section{Details of Section \ref{sec:theoreticalAnalysis_model_mismatch}}
\label{appendix:B}
\input{5-appendices/2.1-lemmas}
\input{5-appendices/2.2-propositions}
\input{5-appendices/2.3-final_theorem}

%% file: 5-appendices/2.1-lemmas.tex
\subsection{Auxiliary Lemmas}
\label{appendix:B-1-lemmas}
\begin{proof}[Proof of Lemma~\ref{lem:state_perturbation}]
    The proof proceeds by induction on $k$. First, for the base case $k=0$, the relation $\|\psi^\ast_{1|t}(\bv_{0:n}) - \psi_{1|t}(\bv_{0:n})\| = [L^\ast_f(\mmf)]^0\|\Df_t(\psi_{0|t}(\bv_{0:n}), v_0)\|$ implies that \eqref{eq:state_perturbation} holds trivially. Next, assume the induction hypothesis holds for $k = k_0$, i.e.,
    \begin{multline}
        \label{eq:appB.1-induction_start}
        \hspace{-1em}\|\psi^\ast_{k_0+1|t}(\bv_{0:n}) - \psi_{k_0+1|t}(\bv_{0:n})\| \\ 
        \hspace{-0.6em}\leq \sum^{k_0}_{i=0} \left[L^\ast_f(\mmf)\right]^{k_0-i} \|\Df_{t+i}(\psi_{i|t}(\bv_{0:n}), v_i)\|. \hspace{-0.4em}
    \end{multline}
    Then, for the inductive step $k = k_0+1$, it follows that
    \begin{align}
        \label{eq:appB.1-induction_derivation}
        & \|\psi^\ast_{k_0+2|t}(\bv_{0:n}) - \psi_{k_0+2|t}(\bv_{0:n})\| \notag \\
        \leq \;& \|f(\psi^\ast_{k_0+1|t}(\bv_{0:n}), v_{k_0+1}) - f(\psi_{k_0+1|t}(\bv_{0:n}), v_{k_0+1})\| \notag \\
        & \hspace{8em} + \|\Df_{t+k_0+1}(\psi^\ast_{k_0+1|t}(\bv_{0:n}), v_{k_0+1})\| \notag \\[-0.2cm]
        & \hspace{-2.2em}\overset{\eqref{eq:Lipschitz_nominal};\;\eqref{eq:Lipschitz_uncertainty}}{\leq} (L_{f} + \bar{L}_{\Df})\|\psi^\ast_{k_0+1|t}(\bv_{0:n}) - \psi_{k_0+1|t}(\bv_{0:n})\| \notag \\
        & \hspace{7em} + \|\Df_{t+k_0+1}(\psi_{k_0+1|t}(\bv_{0:n}), v_{k_0+1})\| \notag \\[-0.7em]
        \overset{\eqref{eq:appB.1-induction_start}}{\leq} 
        & \sum^{k_0+1}_{i=0} (L_{f} + \bar{L}_{\Df})^{k_0+1-i} \|\Df_{t+i}(\psi_{i|t}(\bv_{0:n}), v_i)\|,
    \end{align}
    which confirms that \eqref{eq:state_perturbation} holds for $k = k_0+1$. The upper bound can also be established in terms of $\Df_{t+i}(\psi_{i|t}(\bv_{0:n}), v_i)$, and its proof also adopts induction. For the base case $k=0$, $\|\psi^\ast_{1|t}(\bv_{0:n}) - \psi_{1|t}(\bv_{0:n})\| = (L_f)^0\|\Df_t(\psi^\ast_{0|t}(\bv_{0:n}), v_0)\|$. Then, assume the induction hypothesis holds for $k=k_0$, i.e.,
    \begin{multline}
        \label{eq:appB.1-induction_start_2}
        \hspace{-1em}\|\psi^\ast_{k_0+1|t}(\bv_{0:n}) - \psi_{k_0+1|t}(\bv_{0:n})\| \\ 
        \hspace{-0.6em}\leq \sum^{k_0}_{i=0} (L_f)^{k_0-i} \|\Df_{t+i}(\psi^\ast_{i|t}(\bv_{0:n}), v_i)\|. \hspace{-0.4em}
    \end{multline}
    Then, for the inductive step $k = k_0+1$, it follows that
    \begin{align}
        \label{eq:appB.1-induction_derivation_2}
        & \|\psi^\ast_{k_0+2|t}(\bv_{0:n}) - \psi_{k_0+2|t}(\bv_{0:n})\| \notag \\
        \leq \;& \|f(\psi^\ast_{k_0+1|t}(\bv_{0:n}), v_{k_0+1}) - f(\psi_{k_0+1|t}(\bv_{0:n}), v_{k_0+1})\| \notag \\
        & \hspace{8em} + \|\Df_{t+k_0+1}(\psi^\ast_{k_0+1|t}(\bv_{0:n}), v_{k_0+1})\| \notag \\[-0.2cm]
        & \hspace{-1.5em} \overset{\eqref{eq:Lipschitz_nominal}}{\leq}  L_f\|\psi^\ast_{k_0+1|t}(\bv_{0:n}) - \psi_{k_0+1|t}(\bv_{0:n})\| \notag \\
        & \hspace{8em} + \|\Df_{t+k_0+1}(\psi^\ast_{k_0+1|t}(\bv_{0:n}), v_{k_0+1})\| \notag \\[-0.7em]
        \overset{\eqref{eq:appB.1-induction_start_2}}{\leq}
        & \sum^{k_0+1}_{i=0} (L_f)^{k_0+1-i} \|\Df_{t+i}(\psi_{i|t}(\bv_{0:n}), v_i)\|,
    \end{align}
    which completes induction transition and confirms that
    \begin{multline} \label{eq:state_perturbation_2}
    \hspace{-1em}\|\psi^\ast_{k+1|t}(\bv_{0:n}) - \psi_{k+1|t}(\bv_{0:n})\| \\ 
    \hspace{-0.6em}\leq \sum^{k}_{i=0} (L_f)^{k-i} \|\Df_{t+i}(\psi^\ast_{i|t}(\bv_{0:n}), v_i)\|, \forall k \in \bI_{0:n}. \hspace{-0.4em}
    \end{multline}
    Note that both~\eqref{eq:state_perturbation} and~\eqref{eq:state_perturbation_2} are valid, and their upper bounds share a symmetric structure.
\end{proof}
\begin{proof}[Proof of Lemma~\ref{lem:error_matching}]
    The proof directly follows that of~\cite[Proposition~1]{liu2026certainty} by noting that quadratic functions of the form $\ell_x(x) = x^\top Qx$ with $Q \succ 0$ are $\eigmin{Q}$-strongly convex.
\end{proof}
\begin{lemma}
    \label{lm:state_mismatch_scaled}
    Let Assumption~\ref{ass:system_setting} hold. For any time step $t$ and $k \in \bI_{1:N}$, the state perturbation $\|\eta^\ast_{k|t}(f) - \xi^\star_{k|t}(f)\|$ satisfies
    \begin{subequations}
        \label{eq:appB.2-cost_scaled_state_perturbation}
        \begin{align}
        \label{eq:appB.2-cost_scaled_state_perturbation_1}
        &\|\eta^\ast_{k|t}(f) - \xi^\star_{k|t}(f)\|^2 \leq \bar{\zeta}_k[L_{\Df}(\mmf)]^2\sum^{k-1}_{i=0}\ell_{i|t}(f),\\[-0.3em]
        \label{eq:appB.2-cost_scaled_state_perturbation_2}
        &\|\eta^\ast_{k|t}(f) - \xi^\star_{k|t}(f)\|^2 \leq \zeta_k[L_{\Df}(\mmf)]^2\sum^{k-1}_{i=0}\ell^\ast_{i|t}(f),
        \end{align}
    \end{subequations}
    where $\bar{\zeta}_k = \mu_{\mathrm{ce}}\sum^{k-1}_{i=0}(L_f+\bar{L}_{\Df})^{2i}$ and $\zeta_k = \mu_{\mathrm{ce}}\sum^{k-1}_{i=0}(L_f)^{2i}$ given in Lemma~\ref{lem:error_matching}. 
\end{lemma}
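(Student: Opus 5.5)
The plan is to combine Lemma~\ref{lem:state_perturbation}, applied to the optimal CE-MPC input sequence $\bv_{0:N-1}=\{\nu^\star_{\tau|t}(f)\}^{N-1}_{\tau=0}$, with Lemma~\ref{lem:error_matching}, and then square the resulting sum through a Cauchy--Schwarz argument.

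For \eqref{eq:appB.2-cost_scaled_state_perturbation_1}, set $n=N-1$. With this choice $\psi_{i|t}(\bv)=\xi^\star_{i|t}(f)$ and $\psi^\ast_{i|t}(\bv)=\eta^\ast_{k|t}(f)$ at the corresponding index. The first form of Lemma~\ref{lem:state_perturbation}, with $k-1$ in place of $k$, gives
\begin{equation*}
\|\eta^\ast_{k|t}(f)-\xi^\star_{k|t}(f)\|\leq \sum_{i=0}^{k-1} (L_f+\bar L_{\Df})^{k-1-i}\,\|\Df_{t+i}(\xi^\star_{i|t}(f),\nu^\star_{i|t}(f))\|.
\end{equation*}
Here I use the version with $\bar L_{\Df}\geq L_{\Df}(\mmf)$, which matches what the proof of the lemma actually establishes.

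Next I square both sides and apply Cauchy--Schwarz in the form $(\sum_i a_i b_i)^2\leq (\sum_i a_i^2)(\sum_i b_i^2)$. I take $a_i=(L_f+\bar L_{\Df})^{k-1-i}$ and $b_i=\|\Df_{t+i}(\cdot)\|$. Reindexing gives $\sum_{i=0}^{k-1}a_i^2=\sum_{i=0}^{k-1}(L_f+\bar L_{\Df})^{2i}$. For each $b_i^2$, Lemma~\ref{lem:error_matching} yields $b_i^2\leq \mu_{\mathrm{ce}}[L_{\Df}(\mmf)]^2\,\ell_{i|t}(f)$. Summing over $i$ produces exactly $\bar\zeta_k[L_{\Df}(\mmf)]^2\sum_{i=0}^{k-1}\ell_{i|t}(f)$.

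For \eqref{eq:appB.2-cost_scaled_state_perturbation_2}, I repeat the same argument using the second (true-trajectory) form of Lemma~\ref{lem:state_perturbation}. In that form the coefficients are $L_f^{k-1-i}$ and the mismatch terms are evaluated at $(\eta^\ast_{i|t}(f),\nu^\star_{i|t}(f))$. Lemma~\ref{lem:error_matching} then bounds these terms by $\mu_{\mathrm{ce}}[L_{\Df}(\mmf)]^2\ell^\ast_{i|t}(f)$, and Cauchy--Schwarz gives the constant $\zeta_k$.

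The main obstacle is bookkeeping rather than any genuine difficulty. Two points need care. First, the index shift: the lemma is stated for $k+1$, and here we need the horizon-$k$ case for $k\in\bI_{1:N}$, which requires $k-1\leq n=N-1$. Second, the Lipschitz constant must be handled consistently, replacing $L_{\Df}(\mmf)$ by its upper bound $\bar L_{\Df}$ through monotonicity of $L_{\Df}$, so that $\bar\zeta_k$ does not depend on $\mmf$.
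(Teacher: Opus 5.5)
Your proposal is correct and follows the paper's proof essentially line for line. You apply Lemma~\ref{lem:state_perturbation} to the optimal CE-MPC input sequence, using $L_f+\bar L_{\Df}$ along the nominal trajectory for the first bound and $L_f$ along the true trajectory for the second. You then square the sum via Cauchy--Schwarz to get the geometric factor $\sum_{i=0}^{k-1}(\cdot)^{2i}$, and bound each mismatch term with Lemma~\ref{lem:error_matching}; your index bookkeeping is slightly cleaner than the paper's, which writes the mismatch arguments with index $k$ instead of $i$.
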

\begin{proof}
    The proof directly applies Lemmas~\ref{lem:state_perturbation} and~\ref{lem:error_matching} as follows:
    \begin{align}
    \label{eq:appB.2-proof_lem5}
        & \|\eta^\ast_{k|t}(f) - \xi^\star_{k|t}(f)\|^2 \notag \\
        \overset{\eqref{eq:state_perturbation}}{\leq} \;&\left(\sum^{k-1}_{i=0}(L_{f} + \bar{L}_{\Df})^{k-1-i} \|\Df_{t+i}(\xi^\star_{k|t}(f), \nu^\star_{k|t}(f))\|\right)^2 \notag \\
        \leq\;\;& \sum^{k-1}_{i=0}(L_f+\bar{L}_{\Df})^{2i}\sum^{k-1}_{i=0}\|\Df_{t+i}(\xi^\star_{k|t}(f), \nu^\star_{k|t}(f))\|^2 \notag \\
        \overset{\eqref{eq:mismatch_bound}}{\leq}\; & \mu_{\mathrm{ce}}\left[\sum^{k-1}_{i=0}(L_f+\bar{L}_{\Df})^{2i}\right][L_{\Df}(\mmf)]^2\sum^{k-1}_{i=0}\ell_{i|t}(f),
    \end{align}
    which confirms~\eqref{eq:appB.2-cost_scaled_state_perturbation_1}. Likewise, \eqref{eq:appB.2-cost_scaled_state_perturbation_2} can be established following the same proof as in~\eqref{eq:appB.2-proof_lem5} by leveraging~\eqref{eq:state_perturbation_2} and~\eqref{eq:mismatch_bound}.
\end{proof}

\begin{lemma} 
\label{lm:quadratic_error_expansion}
Given vectors $x,y \in \bR^n$ and a positive definite matrix $Q \in \bR^{n\times n}$ ($Q \succ 0$), it holds that
\begin{equation}
    \label{eq:appB.2-pre_quadratic_lemma}
    |\nQ{x} - \nQ{y}| \leq \eigmax{Q}(\|x- y\|^2 + 2\|x- y\|\min\{\|x\|,\|y\|\}).
\end{equation}
\end{lemma}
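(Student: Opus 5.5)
We aim to reduce the claim to two elementary facts: an exact algebraic expansion of the difference of two quadratic forms, and the Cauchy--Schwarz inequality in the $Q$-inner product. By symmetry of the left-hand side of \eqref{eq:appB.2-pre_quadratic_lemma} in $x$ and $y$, it suffices to prove
\begin{equation*}
|\nQ{x} - \nQ{y}| \leq \eigmax{Q}\left(\|x-y\|^2 + 2\|x-y\|\,\|y\|\right)
\end{equation*}
for arbitrary $x,y$. Applying the same inequality with the roles of $x$ and $y$ swapped then gives the bound with $\|x\|$ in place of $\|y\|$. Taking the smaller of the two right-hand sides yields the $\min\{\|x\|,\|y\|\}$ form.

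Set $d := x - y$, so that $x = y + d$. Since $Q$ is symmetric, expanding gives the identity
\begin{equation*}
\nQ{x} - \nQ{y} = d^\top Q d + 2\, d^\top Q y .
\end{equation*}
By the triangle inequality, $|\nQ{x} - \nQ{y}| \leq |d^\top Q d| + 2|d^\top Q y|$. We bound the two terms separately.
\begin{itemize}
\item For the first term, the Rayleigh quotient bound gives $0 \leq d^\top Q d \leq \eigmax{Q}\|d\|^2$.
\item For the cross term, we have $|d^\top Q y| \leq \|d\|\,\|Qy\| \leq \eigmax{Q}\|d\|\,\|y\|$, since the induced $2$-norm of a symmetric positive-definite matrix equals its largest eigenvalue. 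Equivalently, one can use Cauchy--Schwarz in the $Q$-inner product: $|d^\top Q y| \leq \|Q^{1/2}d\|\,\|Q^{1/2}y\| \leq \eigmax{Q}\|d\|\,\|y\|$.
\end{itemize}
Adding the two bounds gives the one-sided inequality stated above.

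The only point requiring care is the reduction to $\min\{\|x\|,\|y\|\}$. The expansion is not symmetric in $x$ and $y$, so we must check that the swapped expansion produces the same form. Writing $y = x + (y - x)$ gives
\begin{equation*}
\nQ{y} - \nQ{x} = (y-x)^\top Q (y-x) + 2 (y-x)^\top Q x .
\end{equation*}
Taking absolute values and applying the same two bounds yields the inequality with $\|x\|$. This holds because $\|y - x\| = \|x - y\|$ and the left-hand side is an absolute value. Hence both one-sided bounds hold simultaneously, and their minimum gives \eqref{eq:appB.2-pre_quadratic_lemma}. No step here is a genuine obstacle; the argument is routine.
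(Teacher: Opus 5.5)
Your proposal is correct and follows the paper's proof essentially step for step. You expand $x=(x-y)+y$, bound the quadratic term by $\eigmax{Q}\|x-y\|^2$ and the cross term via Cauchy--Schwarz by $2\eigmax{Q}\|x-y\|\|y\|$, then repeat with the roles of $x$ and $y$ swapped and take the smaller of the two bounds.
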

\begin{proof}
By the Cauchy-Schwarz inequality,
    \begin{multline}
    \label{eq:appB.2-pre_quadratic_lemma_proof_1}
    |\nQ{x} - \nQ{y}| = |\nQ{(x-y)+y} - \nQ{y}| \leq \nQ{x-y} \\ + 2|(x-y)Q^\top y| \leq \eigmax{Q}(\|x-y\|^2 + 2\|x-y\|\|y\|)
\end{multline}
Similarly, it follows that
\begin{multline}
    \label{eq:appB.2-pre_quadratic_lemma_proof_2}
    |\nQ{x} - \nQ{y}| = |\nQ{x} - \nQ{(y-x)+x}| \leq \nQ{x-y} \\ + 2|(y-x)Q^\top x| \leq \eigmax{Q}(\|x-y\|^2 + 2\|x-y\|\|x\|)
\end{multline}
Combining~\eqref{eq:appB.2-pre_quadratic_lemma_proof_1} and~\eqref{eq:appB.2-pre_quadratic_lemma_proof_2} yields~\eqref{eq:appB.2-pre_quadratic_lemma}.
\end{proof}

%% file: 5-appendices/2.2-propositions.tex
\subsection{Supporting Propositions}
\label{appendix:B-2-propositions}
\input{5-appendices/2.2.1-pp1}
\input{5-appendices/2.2.2-pp2}
\input{5-appendices/2.2.3-pp3}

%% file: 5-appendices/2.2.1-pp1.tex
\begin{proof}[Proof of Proposition~\ref{prop:state_perturbation}]
    To study state-driven perturbation, several variables are explicitly conditioned on $x_t = x$ by expanding the notations $\xi^\star_{k|t}(f)$, $\nu^\star_{k|t}(f)$, $\ell_{k|t}(f)$, $\ell_{k|t}(f^\ast)$, $\psi_{k|t}(\bv_{0:n})$, and $\psi^\ast_{k|t}(\bv_{0:n})$ to $\xi^\star_{k|t}(x;f)$, $\nu^\star_{k|t}(x;f)$, $\ell_{k|t}(x;f)$, $\ell_{k|t}(x;f^\ast)$, $\psi_{k|t}(x,\bv_{0:n})$, and $\psi^\ast_{k|t}(x,\bv_{0:n})$, respectively. For brevity, $\psi_{k|t}(x',\{\nu^\star_{i|t}(x;f)\}^{N-1}_{i=0})$ is denoted as $\psi_{k|t}(x';\bm{\nu}^\star(x,f))$ hereafter. For Lipschitz-continuous models, the following perturbation bound is a well-established result in the literature~\cite{wabersich2022cautious, liu2026certainty}:
\begin{equation}
    \label{eq:appB.2-trivial_Lipschitz_perturbation_bound}
    \|\psi_{k|t}(x';\bm{\nu}^\star(x,f)) - \xi^\star_{k|t}(x;f)\| \leq (L_f)^k\|x' - x\|,
\end{equation}
which quantifies the state trajectory deviation resulting from differing initial states under identical input sequences, and can be readily verified via induction. An upper bound on the general value function perturbation $V_N(x';f) - V_N(x;f)$ is first established. By the optimality of $V_N(x';f)$, it holds that
\begin{multline}
    \label{eq:appB.2-beginning_optimality_bound}
    V_N(x';f) \leq \sum^{N-1}_{k=0}\ell(\psi_{k|t}(x';\bm{\nu}^\star(x,f)), \nu^\star_{k|t}(x;f)) \\[-0.2cm] + \elf(\psi_{N|t}(x';\bm{\nu}^\star(x,f))).
\end{multline}
Consequently, $V_N(x';f) - V_N(x;f)$ can be bounded as
\begin{multline}
    \label{eq:appB.2-proceed_input_cancellation}
    V_N(x';f) - V_N(x;f) \overset{\eqref{eq:appB.2-beginning_optimality_bound}}{\leq} \sum^{N-1}_{k=0}\nQ{\psi_{k|t}(x';\bm{\nu}^\star(x,f))}\\
    + \nP{\psi_{N|t}(x';\bm{\nu}^\star(x,f))} - \sum^{N-1}_{k=0}\nQ{\xi^\star_{k|t}(x;f)}
    - \nP{\xi^\star_{N|t}(x;f)} \\
    \overset{\eqref{eq:appB.2-pre_quadratic_lemma};\;\eqref{eq:appB.2-trivial_Lipschitz_perturbation_bound}}{\leq}
    \underbrace{\eigmax{Q}\sum^{N-1}_{k=0}(L_f)^{2k} + \eigmax{P}(L_f)^{2N}}_{\coloneqq \sigma_{N,2}}\|x'-x\|^2 \\[-0.5em]
    + 2\eigmax{Q}\sum^{N-1}_{k=0}(L_f)^k\|\xi^\star_{k|t}(x;f)\|\|x'-x\| \\
    + 2\eigmax{P}(L_f)^N\|\xi^\star_{N|t}(x;f)\|\|x'-x\|.
\end{multline}
Applying the Cauchy-Schwarz inequality yields
\begin{align}
    \label{eq:appB.2-proceed_first_order_coefficient}
    & \;\eigmax{Q}\sum^{N-1}_{k=0}(L_f)^k\|\xi^\star_{k|t}(x;f)\| + \eigmax{P}(L_f)^N\|\xi^\star_{N|t}(x;f)\| \notag \\
    \leq & \Bigg\{\left(\sum^{N-1}_{k=0}\nQ{\xi^\star_{k|t}(x;f)}+\nP{\xi^\star_{N|t}(x;f)}\right) \notag \\[-0.5em]
    & \hspace{10em}\underbrace{\Bigg[\frac{\eigmax{Q}^2}{\eigmin{Q}}\sum^{N-1}_{k=0}(L_f)^{2k} + \frac{\eigmax{P}^2}{\eigmin{P}}(L_f)^{2N}\Bigg]}_{\coloneqq \sigma_{N,1}}\Bigg\}^{\frac{1}{2}} \notag\\
    \leq & \left\{\sigma_{N,1}V_N(x;f)\right\}^{\frac{1}{2}}.
\end{align}
Finally, leveraging~\eqref{eq:appB.2-proceed_input_cancellation} and~\eqref{eq:appB.2-proceed_first_order_coefficient} leads to the general bound:
\begin{multline}
    \label{eq:appB.2-general_state_perturbation}
    V_N(x';f) - V_N(x;f) \leq \sigma_{N,2}\|x'-x\|^2 \\ + 2\left\{\sigma_{N,1}V_N(x;f)\right\}^{\frac{1}{2}}\|x'-x\|.
\end{multline}
Applying~\eqref{eq:appB.2-general_state_perturbation} with $x = \xi^\star_{k|t}(f)$ and $x'= \eta_{k|t}(f^\ast)$ yields
\begin{align}
    \label{eq:appB.2-final_derviation_state_perturbation_bound}
    & V_N(\eta_{k|t}(f^\ast);f) - V_N(\xi^\star_{k|t}(f);f) \leq \sigma_{N,2}\|\eta_{k|t}(f^\ast) - \xi^\star_{k|t}(f)\|^2 \notag \\
    & \hspace{5em} + 2\left\{\sigma_{N,1}V_N(\xi^\star_{k|t}(f);f)\right\}^{\frac{1}{2}}\|\eta_{k|t}(f^\ast) - \xi^\star_{k|t}(f)\| \notag\\[-0.2cm]
    & \overset{\eqref{eq:cost_controllability};\;\eqref{eq:appA-dissipation_final}}{\leq} \sigma_{N,2}\|\eta_{k|t}(f^\ast) - \xi^\star_{k|t}(f)\|^2 \notag \\[-0.2cm]
    & \hspace{1em} + 2\left\{\sigma_{N,1}(\mpc_N + \delta_{N,k})\sum^{k-1}_{i=0}\ell_{i|t}(f)\right\}^{\frac{1}{2}}\|\eta_{k|t}(f^\ast) - \xi^\star_{k|t}(f)\| \notag\\
    & \hspace{0.5em} \overset{\eqref{eq:appB.2-cost_scaled_state_perturbation_1}}{\leq} \underbrace{\sigma_{N,2}\bar{\zeta}_k}_{\coloneqq \alpha^{(2)}_{N,k}}[L_{\Df}(\mmf)]^2\sum^{k-1}_{i=0}\ell_{i|t}(f) \notag\\[-0.6cm]
    & \hspace{6em} +\underbrace{2\left\{\sigma_{N,1}(\mpc_N + \delta_{N,k})\bar{\zeta}_k\right\}^{\frac{1}{2}}}_{\coloneqq \alpha^{(1)}_{N,k}}L_{\Df}(\mmf)\sum^{k-1}_{i=0}\ell_{i|t}(f)\notag \\[-0.2cm]
    & \hspace{1em}\leq \left\{\alpha^{(1)}_{N,k}L_{\Df}(\mmf) + \alpha^{(2)}_{N,k}[L_{\Df}(\mmf)]^2 \right\}\sum^{k-1}_{i=0}\ell_{i|t}(f),
\end{align}
where $\delta_{N,k} \seq \frac{(1+\umpc)\mpc_N[(1 + \epf)^{k}-1]}{1+\umpc+(N-k)(1+\epf)^{k}}$ and $\bar{\zeta}_k$ is given in~\eqref{eq:appB.2-cost_scaled_state_perturbation_1} in Lemma~\ref{lm:state_mismatch_scaled}, confirming the state-driven perturbation bound~\eqref{eq:state_perturbation_original}.
\end{proof}

%% file: 5-appendices/2.2.2-pp2.tex
\begin{proof}[Proof of Proposition~\ref{prop:model_mismatch}]
    Recall the definition of $\xi^\star_{k|t}(f)$ and $\nu^\star_{k|t}(f)$ in Section~\ref{subsec2.3-ampc}. Analogously, the optimal solution sequences to the problem $\probmpc{x_t}{\tsysn}$ are denoted by $\{\xi^\star_{k|t}(f^\ast)\}^N_{k=0}$ and $\{\nu^\star_{k|t}(f^\ast)\}^{N-1}_{k=0}$. Consistent with the definition of $\eta^\ast_{k|t}(f)$ in Section~\ref{sec4.5-1:perturbation_analysis}, define $\eta_{k|t}(f^\ast) := \psi_{k|t}(\{\nu^\star_{k|t}(f^\ast)\}^{N-1}_{k=0})$ for $k \in \bI_{0:N}$, which represents the state evolution under the nominal model $f$ when driven by the optimal control inputs $\{\nu^\star_{k|t}(f^\ast)\}^{N-1}_{k=0}$. The corresponding stage costs are written compactly as $\ell^\ast_{k|t}(f^\ast) \coloneqq \ell(\xi^\star_{k|t}(f^\ast), \nu^\star_{k|t}(f^\ast))$ and $\ell_{k|t}(f^\ast) = \ell(\eta_{k|t}(f^\ast), \nu^\star_{k|t}(f^\ast))$.
    
    Similar to~\eqref{eq:appB.2-cost_scaled_state_perturbation_2}, and following the same reasoning as in~\eqref{eq:appB.2-proof_lem5}, the following perturbation bound can be readily established:
\begin{equation}
    \label{eq:appB.2-scaled_state_perturbation_additional}
    \|\eta_{k|t}(f^\ast) - \xi^\star_{k|t}(f^\ast)\|^2 \leq \zeta_k[L_{\Df}(\mmf)]^2\sum^{k-1}_{i=0}\ell^\ast_{i|t}(f^\ast),
\end{equation}
where $\zeta_k$ is defined in~\eqref{eq:appB.2-cost_scaled_state_perturbation_2} of Lemma~\ref{lm:state_mismatch_scaled}. By the optimality of $V_N(x;f)$, it holds that
\begin{equation}
    \label{eq:appB.2-model_perturbation_optimality_start}
    V_N(x;f) \leq \sum^{N-1}_{k=0}\ell_{k|t}(f^\ast) + \elf(\eta_{N|t}(f^\ast)).
\end{equation}
Consequently, the difference $V_N(x;f)- V_N(x;\tsysn)$ is upper bounded as
\begin{multline}
    \label{eq:appB.2-model_perturbation_expansion}
    V_N(x;f)- V_N(x;\tsysn) \overset{\eqref{eq:appB.2-model_perturbation_optimality_start}}{\leq} \sum^{N-1}_{k=0}\big(\nQ{\eta_{k|t}(f^\ast)} - \nQ{\xi^\star_{k|t}(f^\ast)}\big) \\ 
    + \nP{\eta_{N|t}(f^\ast)} - \nP{\xi^\star_{N|t}(f^\ast)} \\[-0.2cm]
    \overset{\eqref{eq:appB.2-pre_quadratic_lemma}}{\leq}
    \eigmax{Q}\sum^{N-1}_{k=1}\|\eta_{k|t}(f^\ast) - \xi^\star_{k|t}(f^\ast)\|^2 + \eigmax{P}\|\eta_{N|t}(f^\ast) - \xi^\star_{N|t}(f^\ast)\|^2 \\[-0.2cm] + 2\eigmax{Q}\sum^{N-1}_{k=1}\|\xi^\star_{k|t}(f^\ast)\|\|\eta_{k|t}(f^\ast) - \xi^\star_{k|t}(f^\ast)\| \\ + 2\eigmax{P}\|\xi^\star_{N|t}(f^\ast)\|\|\eta_{N|t}(f^\ast) - \xi^\star_{N|t}(f^\ast)\|.
\end{multline}
Next, the terms in~\eqref{eq:appB.2-model_perturbation_expansion} are further relaxed in two separate parts. First, the quadratic terms involving $\|\eta_{k|t}(f^\ast) - \xi^\star_{k|t}(f^\ast)\|^2$ are bounded via the Cauchy-Schwarz inequality as
\begin{multline}
    \label{eq:appB.2-model_perturbation_expansion_derivation_quad}
    \eigmax{Q}\sum^{N-1}_{k=1}\|\eta_{k|t}(f^\ast) - \xi^\star_{k|t}(f^\ast)\|^2 + \eigmax{P}\|\eta_{N|t}(f^\ast) - \xi^\star_{N|t}(f^\ast)\|^2 \\
    \overset{\eqref{eq:appB.2-scaled_state_perturbation_additional}}{\leq}[L_{\Df}(\mmf)]^2\left[\eigmax{Q}\sum^{N-1}_{k=1}\zeta_k\sum^{k-1}_{i=0}\ell^\ast_{i|t}(f^\ast) + \eigmax{P}\zeta_N\sum^{N-1}_{i=0}\ell^\ast_{i|t}(f^\ast)\right] \\
    \leq \underbrace{\left(\eigmax{Q}\sum^{N-1}_{k=1}\zeta_k + \eigmax{P}\zeta_N\right)}_{\coloneqq \beta^{(2)}_{N}}[L_{\Df}(\mmf)]^2\sum^{N-1}_{i=0}\ell^\ast_{i|t}(f^\ast)\\[-0.2cm]
    \leq \beta^{(2)}_{N}[L_{\Df}(\mmf)]^2V_N(x;\tsysn),
\end{multline}
where $\zeta_k$ is given in~\eqref{eq:appB.2-cost_scaled_state_perturbation_2} in Lemma~\ref{lm:state_mismatch_scaled}. On the other hand, the cross terms involving $\|\eta_{k|t}(f^\ast) - \xi^\star_{k|t}(f^\ast)\|$ are bounded as
\begin{align}
    \label{eq:appB.2-model_perturbation_expansion_derivation_linear}
    & \eigmax{Q}\sum^{N-1}_{k=1}\|\xi^\star_{k|t}(f^\ast)\|\|\eta_{k|t}(f^\ast) - \xi^\star_{k|t}(f^\ast)\| \notag \\[-0.2cm] 
    & \hspace{9em} + \eigmax{P}\|\xi^\star_{N|t}(f^\ast)\|\|\eta_{N|t}(f^\ast) - \xi^\star_{N|t}(f^\ast)\| \notag\\
    \leq &\Bigg\{\left(\sum^{N-1}_{k=0}\nQ{\xi^\star_{k|t}(f^\ast)}+\nP{\xi^\star_{N|t}(f^\ast)}\right) \notag \\
    & \hspace{-1.4em}\Bigg(\frac{\eigmax{Q}^2}{\eigmin{Q}}\hspace{-0.3em}\sum^{N-1}_{k=1}\|\eta_{k|t}(f^\ast) \sminus \xi^\star_{k|t}(f^\ast)\|^2 \splus \frac{\eigmax{P}^2}{\eigmin{P}}\|\eta_{N|t}(f^\ast) \sminus \xi^\star_{N|t}(f^\ast)\|^2\Bigg)\hspace{-0.3em}\Bigg\}^{\hspace{-0.3em}\frac{1}{2}}\notag \\
    \overset{\eqref{eq:appB.2-scaled_state_perturbation_additional}}{\leq} &\Bigg\{\left(\sum^{N-1}_{k=0}\nQ{\xi^\star_{k|t}(f^\ast)}+\nP{\xi^\star_{N|t}(f^\ast)}\right) \notag \\
    & \hspace{-1.3em}[L_{\Df}(\mmf)]^2\left[\frac{\eigmax{Q}^2}{\eigmin{Q}}\sum^{N-1}_{k=1}\zeta_k\sum^{k-1}_{i=0}\ell^\ast_{i|t}(f^\ast) + \frac{\eigmax{P}^2}{\eigmin{P}}\zeta_N\sum^{N-1}_{i=0}\ell^\ast_{i|t}(f^\ast)\right]\hspace{-0.3em}\Bigg\}^{\hspace{-0.3em}\frac{1}{2}}\notag \\
    \leq & \Bigg\{\left(\sum^{N-1}_{k=0}\nQ{\xi^\star_{k|t}(f^\ast)}+\nP{\xi^\star_{N|t}(f^\ast)}\right) \notag \\
    & \hspace{3em}\left(\frac{\eigmax{Q}^2}{\eigmin{Q}}\sum^{N-1}_{k=1}\zeta_k + \frac{\eigmax{P}^2}{\eigmin{P}}\zeta_N\right)[L_{\Df}(\mmf)]^2\sum^{N-1}_{i=0}\ell^\ast_{i|t}(f^\ast)\Bigg\}^{\hspace{-0.3em}\frac{1}{2}}\notag \\
    \leq & \left(\frac{\eigmax{Q}^2}{\eigmin{Q}}\sum^{N-1}_{k=1}\zeta_k + \frac{\eigmax{P}^2}{\eigmin{P}}\zeta_N\right)^{\frac{1}{2}}L_{\Df}(\mmf)V_N(x;\tsysn).
\end{align}
Substituting~\eqref{eq:appB.2-model_perturbation_expansion_derivation_quad} and~\eqref{eq:appB.2-model_perturbation_expansion_derivation_linear} into~\eqref{eq:appB.2-model_perturbation_expansion} yields
\begin{multline}
    \label{eq:appB.2-model_perturbation_expansion_derivation_final}
    V_N(x;f)- V_N(x;\tsysn) \leq \beta^{(2)}_{N}[L_{\Df}(\mmf)]^2V_N(x;\tsysn) \\
    + \underbrace{2\left(\frac{\eigmax{Q}^2}{\eigmin{Q}}\sum^{N-1}_{k=1}\zeta_k + \frac{\eigmax{P}^2}{\eigmin{P}}\zeta_N\right)^{\frac{1}{2}}}_{\coloneqq \beta^{(1)}_{N}}L_{\Df}(\mmf)V_N(x;\tsysn) \\
    \leq \left\{\beta^{(1)}_{N}L_{\Df}(\mmf) + \beta^{(2)}_{N}[L_{\Df}(\mmf)]^2\right\}V_N(x;\tsysn),
\end{multline}
which establishes the perturbation bound~\eqref{eq:model_mismatch_original}.
\end{proof}

%% file: 5-appendices/2.2.3-pp3.tex
\begin{proof}[Proof of Proposition~\ref{prop:k_sum_mismatch}]
    The model-driven perturbation on the $k$-sum stage cost $\left|\sum^{k-1}_{i=0}\ell^\ast_{i|t}(f) - \sum^{k-1}_{i=0}\ell_{i|t}(f)\right|$ can be proceeded as
\begin{multline}
    \label{eq:appB.2-k_sum_expansion}
    \left|\sum^{k-1}_{i=0}\ell^\ast_{i|t}(f) - \sum^{k-1}_{i=0}\ell_{i|t}(f)\right| \leq \sum^{k-1}_{i=0}\big|\nQ{\eta^\ast_{i|t}(f)} - \nQ{\xi^\star_{i|t}(f)}\big| \\
    \hspace{-13em}\overset{\eqref{eq:appB.2-pre_quadratic_lemma}}{\leq} \eigmax{Q}\sum^{k-1}_{i=1}\|\eta^\ast_{i|t}(f) \sminus \xi^\star_{i|t}(f)\|^2 \\[-0.3cm]
    + 2\eigmax{Q}\sum^{k-1}_{i=1} \|\xi^\star_{i|t}(f)\|\|\eta^\ast_{i|t}(f) - \xi^\star_{i|t}(f)\|.
\end{multline}
First, the terms involving $\|\eta^\ast_{i|t}(f) - \xi^\star_{i|t}(f)\|^2$ are bounded as
\begin{multline}
    \label{eq:appB.2-k_sum_expansion_derivation_quad}
    \sum^{k-1}_{i=1}\|\eta^\ast_{i|t}(f) - \xi^\star_{i|t}(f)\|^2 \overset{\eqref{eq:appB.2-cost_scaled_state_perturbation_1}}{\leq} \eigmax{Q}[L_{\Df}(\mmf)]^2\sum^{k-1}_{i=1}\bar{\zeta}_i\sum^{i-1}_{j=0}\ell_{j|t}(f) \\
    \leq \left(\sum^{k-1}_{i=1}\bar{\zeta}_i\right)[L_{\Df}(\mmf)]^2 \sum^{k-1}_{i=0}\ell_{i|t}(f).
\end{multline}
Next, the cross terms involving $\|\eta^\ast_{i|t}(f) - \xi^\star_{i|t}(f)\|$ are bounded via the Cauchy-Schwarz inequality as
\begin{align}
    \label{eq:appB.2-k_sum_expansion_derivation_linear}
    & \sum^{k-1}_{i=1}\|\xi^\star_{i|t}(f)\|\|\eta^\ast_{i|t}(f) - \xi^\star_{i|t}(f)\| \notag \\
    \leq & \;\frac{1}{(\eigmin{Q})^{\frac{1}{2}}} \Bigg\{\left(\sum^{k-1}_{i=1}\nQ{\xi^\star_{i|t}(f)}\right)\left(\sum^{k-1}_{i=1}\|\eta^\ast_{i|t}(f) - \xi^\star_{i|t}(f)\|^2\right)\Bigg\}^{\frac{1}{2}} \notag\\
    \overset{\eqref{eq:appB.2-cost_scaled_state_perturbation_1}}{\leq} & \;\frac{1}{(\eigmin{Q})^{\frac{1}{2}}} \Bigg\{\left(\sum^{k-1}_{i=1}\nQ{\xi^\star_{i|t}(f)}\right)[L_{\Df}(\mmf)]^2\sum^{k-1}_{i=1}\bar{\zeta}_i\sum^{i-1}_{j=0}\ell_{j|t}(f)\Bigg\}^{\frac{1}{2}} \notag\\
    \leq & \;\frac{1}{(\eigmin{Q})^{\frac{1}{2}}}\left(\sum^{k-1}_{i=1}\bar{\zeta}_i\right)^{\frac{1}{2}}L_{\Df}(\mmf)\sum^{k-1}_{i=0}\ell_{i|t}(f).
\end{align}
Substituting~\eqref{eq:appB.2-k_sum_expansion_derivation_quad} and~\eqref{eq:appB.2-k_sum_expansion_derivation_linear} into~\eqref{eq:appB.2-k_sum_expansion} yields
\begin{multline}
    \label{eq:appB.2-k_sum_expansion_derivation_final}
    \hspace{-1.2em}\left|\sum^{k-1}_{i=0}\ell^\ast_{i|t}(f) - \sum^{k-1}_{i=0}\ell_{i|t}(f)\right| \leq \underbrace{\eigmax{Q}\left(\sum^{k-1}_{i=1}\bar{\zeta}_i\right)}_{\coloneqq \theta^{(2)}_k}[L_{\Df}(\mmf)]^2 \sum^{k-1}_{i=0}\ell_{i|t}(f) \\[-0.2cm]
    \hspace{-9em} + \underbrace{\frac{2\eigmax{Q}}{(\eigmin{Q})^{\frac{1}{2}}}\left(\sum^{k-1}_{i=1}\bar{\zeta}_i\right)^{\frac{1}{2}}}_{\coloneqq \theta^{(1)}_k}L_{\Df}(\mmf)\sum^{k-1}_{i=0}\ell_{i|t}(f) \\[-0.2cm]
    \leq \left\{\theta^{(1)}_kL_{\Df}(\mmf) + \theta^{(2)}_k[L_{\Df}(\mmf)]^2\right\}\sum^{k-1}_{i=0}\ell_{i|t}(f),
\end{multline}
thereby establishing the perturbation bound~\eqref{eq:model_mismatch_k_sum}.
\end{proof}

%% file: 5-appendices/2.3-final_theorem.tex
\subsection{Proof of Theorem~\ref{thm:stability_performance_acempc}}
\label{appendix:B-3-final_proof}
\begin{proof}
    The proof combines the nominal dissipation inequality~\eqref{eq:appA-dissipation_final_general} with the perturbation bounds~\eqref{eq:state_perturbation_original}--\eqref{eq:model_mismatch_k_sum}. Specifically, the analysis centers on evaluating the dissipation term $V_N(x_{t_{j+1}};f) - V_N(x_{t_j};f)$, which represents the forward difference of the CE-MPC value function at consecutive active-feedback time steps:
\begin{align}
    \label{eq:appB.3-final_dissipation_derivation}
    & V_N(x_{t_{j+1}};f) - V_N(x_{t_{j}};f) = V_N(\eta_{m_j|t_j}(f^\ast);f) -  V_N(x_{t_{j}};f) \notag \\
    & \hspace{1em} = \underbrace{V_N(\eta_{m_j|t_j}(f^\ast);f) - V_N(\xi^\star_{m_j|t_j}(f^\ast);f)}_{\text{state-driven perturbation}} \notag \\ 
    & \hspace{10em} + \underbrace{V_N(\xi^\star_{m_j|t_j}(f^\ast);f) - V_N(x_{t_{j}};f)}_{\text{nominal dissipation}} \notag \\
    & \overset{\eqref{eq:state_perturbation_original};\;\eqref{eq:appA-dissipation_final}}{\leq} \alpha_{N,m_j}(\mmf)\sum^{k-1}_{i=0}\ell_{i|t}(f) -[1 - \delta_{N,m_j}] \sum^{m_j-1}_{k=0}\ell_{k|t_j}(f) \notag \\[-0.2cm]
    & \hspace{1.2em} \leq -[1 - (\delta_{N,m_j} + \alpha_{N,m_j}(\mmf))]\sum^{m_j-1}_{k=0}\ell_{k|t_j}(f) \notag \\
    & \hspace{1em} \overset{\eqref{eq:model_mismatch_k_sum}}{\leq} -\frac{[1 - (\delta_{N,m_j} + \alpha_{N,m_j}(\mmf))]}{1 + \theta_{m_j}(\mmf)}\sum^{m_j-1}_{k=0}\ell^\ast_{k|t_j}(f) \notag \\
    & \hspace{1.2em} \leq -\frac{[1 - (\delta_{N,M} + \alpha_{N,M}(\mmf))]}{1 + \theta_{M}(\mmf)}\sum^{m_j-1}_{k=0}\ell^\ast_{k|t_j}(f).
\end{align}
In view of the stability index $\rho_{N,M}(\mmf)$ defined in~\eqref{eq:stabiltity_index_acempc}, inequality~\eqref{eq:appB.3-final_dissipation_derivation} implies that
\begin{equation}
    \label{eq:appB.3-final_dissipation}
    \hspace{-0.4em}V_N(x_{t_{j+1}};f) - V_N(x_{t_{j}};f) \leq -\frac{1 - \rho_{N,M}(\mmf)}{1 + \theta_{M}(\mmf)}\hspace{-0.1em}\sum^{m_j-1}_{k=0}\hspace{-0.3em}\ell^\ast_{k|t_j}(f).\hspace{-0.3em}
\end{equation}
Consistent with~\eqref{eq:appA-telescopic}, evaluating the telescopic sum of~\eqref{eq:appB.3-final_dissipation} yields
\begin{equation}
    \label{eq:appB.3-telescopic_original}
    \hspace{-0.5em}\frac{1 - \rho_{N,M}(\mmf)}{1 + \theta_{M}(\mmf)}\hspace{-0.1em}\sum^{J}_{j=1}\sum^{m_j-1}_{k=0}\hspace{-0.3em}\ell^\ast_{k|t_j}(f) \leq V_N(x;f) -V_N(x_{t_{J+1}};f).\hspace{-0.3em}
\end{equation}
By taking the limit as $J \to \infty$ and noting that $V_N(x_{t_{J+1}};f) \geq 0$, relation~\eqref{eq:appB.3-telescopic_original} leads to
\begin{multline}
    \label{eq:appB.3-telescopic_limit}
    \frac{1 - \rho_{N,M}(\mmf)}{1 + \theta_{M}(\mmf)}\hspace{-0.1em}\sum^{\infty}_{j=1}\sum^{m_j-1}_{k=0}\hspace{-0.3em}\ell^\ast_{k|t_j}(f) \leq V_N(x;f) \\
    \overset{\eqref{eq:cost_controllability};\;\eqref{eq:model_mismatch_original}}{\leq} [1+\min\{\umpc, \beta_N(\mmf)\}]V_N(x;F^\ast_{0,N}).
\end{multline}
Following standard arguments in MPC stability analysis via relaxed dynamic programming~\cite{grune2008infinite, grune2017nonlinear} (cf. the derivation of Theorem~\ref{thm:nominal_stability_performance} in Appendix~\ref{appendix:A-proof-thm2}), asymptotic stability of the origin under the asynchronous CE-MPC controller is ensured for the system subject to model mismatch, provided that $\rho_{N,M}(\mmf) < 1$.

Due to symmetry, and following the proof structure established for Proposition~\ref{prop:model_mismatch} in Appendix~\ref{appendix:B-2-propositions}, a counterpart to~\eqref{eq:model_mismatch_original} is obtained as:
\begin{equation}
    \label{eq:appB.3-counterpart_model_mismatch_perturbation}
    V_N(x;\tsysn) -  V_N(x;f) \leq \beta_N(\mmf)V_N(x;f),
\end{equation}
which holds for all $x \in \cX$ and $t \in \bN$. Consequently, an extended cost controllability bound for $V_{\infty}(x;\tsysc)$ is established as:
\begin{equation}
     \label{eq:appB.3-extended_cost_controllability}
    V_{\infty}(x;\tsysc) \overset{\eqref{eq:cost_controllability};\;\eqref{eq:appB.3-counterpart_model_mismatch_perturbation}}{\leq} \big[1 + \underbrace{\umpc+\beta_{\infty}(\mmf)+\umpc\beta_{\infty}(\mmf)}_{\coloneqq \bar{\gamma}^\ast(\mmf)}\big]\ell^\star(x).
\end{equation}
Then, analogously to~\cite[Proof of Theorem~5, Part~II]{kohler2023stability}, it holds that
\begin{equation}
    \label{eq:appB.3-mpcvalue_to_infty}
    V_N(x;F^\ast_{0,N}) \leq 1 + \frac{\eigmax{P}}{\eigmin{Q}}\left(\frac{\bar{\gamma}^\ast(\mmf)}{1+\bar{\gamma}^\ast(\mmf)}\right)^N V_{\infty}(x;\tsysc).
\end{equation}
Combining~\eqref{eq:appB.3-telescopic_limit} and~\eqref{eq:appB.3-mpcvalue_to_infty} yields
\begin{multline}
    \sum^{\infty}_{j=1}\sum^{m_j-1}_{k=0}\hspace{-0.3em}\ell^\ast_{k|t_j}(f) \leq \big[1 - \rho_{N,M}(\mmf)\big]^{-1}\big[1+\min\{\umpc,\beta_{N}(\mmf)\}\big]\\ \big[1+\theta_M(\mmf)\big]\left[1 + \frac{\eigmax{P}}{\eigmin{Q}}\left(\frac{\bar{\gamma}^\ast(\mmf)}{1+\bar{\gamma}^\ast(\mmf)}\right)^N\right]V_{\infty}(x;\tsysc),
\end{multline}
thereby establishing the performance bound~\eqref{eq:performance_bound_acempc}.
\end{proof}

%% file: main.bbl
\begin{thebibliography}{10}
\providecommand{\url}[1]{#1}
\csname url@samestyle\endcsname
\providecommand{\newblock}{\relax}
\providecommand{\bibinfo}[2]{#2}
\providecommand{\BIBentrySTDinterwordspacing}{\spaceskip=0pt\relax}
\providecommand{\BIBentryALTinterwordstretchfactor}{4}
\providecommand{\BIBentryALTinterwordspacing}{\spaceskip=\fontdimen2\font plus
\BIBentryALTinterwordstretchfactor\fontdimen3\font minus
  \fontdimen4\font\relax}
\providecommand{\BIBforeignlanguage}[2]{{%
\expandafter\ifx\csname l@#1\endcsname\relax
\typeout{** WARNING: IEEEtran.bst: No hyphenation pattern has been}%
\typeout{** loaded for the language `#1'. Using the pattern for}%
\typeout{** the default language instead.}%
\else
\language=\csname l@#1\endcsname
\fi
#2}}
\providecommand{\BIBdecl}{\relax}
\BIBdecl

\bibitem{worthmann2015model}
K.~Worthmann, M.~W. Mehrez, M.~Zanon, G.~K.~I. Mann, R.~G. Gosine, and
  M.~Diehl, ``Model predictive control of nonholonomic mobile robots without
  stabilizing constraints and costs,'' \emph{IEEE Transactions on Control
  Systems Technology}, vol.~24, no.~4, pp. 1394--1406, 2015.

\bibitem{wang2020event}
B.~Wang, J.~Huang, C.~Wen, J.~Rodriguez, C.~Garcia, H.~B. Gooi, and Z.~Zeng,
  ``Event-triggered model predictive control for power converters,'' \emph{IEEE
  Transactions on Industrial Electronics}, vol.~68, no.~1, pp. 715--720, 2020.

\bibitem{wu2020distributed}
N.~Wu, D.~Li, Y.~Xi, and B.~{D}e Schutter, ``Distributed event-triggered model
  predictive control for urban traffic lights,'' \emph{IEEE Transactions on
  Intelligent Transportation Systems}, vol.~22, no.~8, pp. 4975--4985, 2020.

\bibitem{li2014event}
H.~Li and Y.~Shi, ``Event-triggered robust model predictive control of
  continuous-time nonlinear systems,'' \emph{Automatica}, vol.~50, no.~5, pp.
  1507--1513, 2014.

\bibitem{brunner2017robust}
F.~D. Brunner, W.~P. M.~H. Heemels, and F.~Allg{\"o}wer, ``Robust
  event-triggered {MPC} with guaranteed asymptotic bound and average sampling
  rate,'' \emph{IEEE Transactions on Automatic Control}, vol.~62, no.~11, pp.
  5694--5709, 2017.

\bibitem{sun2019robust}
Z.~Sun, L.~Dai, K.~Liu, D.~V. Dimarogonas, and Y.~Xia, ``Robust self-triggered
  {MPC} with adaptive prediction horizon for perturbed nonlinear systems,''
  \emph{IEEE Transactions on Automatic Control}, vol.~64, no.~11, pp.
  4780--4787, 2019.

\bibitem{grune2010analysis}
L.~Gr{\"u}ne, J.~Pannek, M.~Seehafer, and K.~Worthmann, ``Analysis of
  unconstrained nonlinear {MPC} schemes with time varying control horizon,''
  \emph{SIAM Journal on Control and Optimization}, vol.~48, no.~8, pp.
  4938--4962, 2010.

\bibitem{eqtami2010event}
A.~Eqtami, D.~V. Dimarogonas, and K.~J. Kyriakopoulos, ``Event-triggered
  control for discrete-time systems,'' in \emph{Proceedings of the 2010
  American Control Conference}, 2010, pp. 4719--4724.

\bibitem{heemels2012introduction}
W.~P. M.~H. Heemels, K.~H. Johansson, and P.~Tabuada, ``An introduction to
  event-triggered and self-triggered control,'' in \emph{2012 51st IEEE
  Conference on Decision and Control (CDC)}, 2012, pp. 3270--3285.

\bibitem{rawlings2017model}
J.~Rawlings, D.~Mayne, and M.~Diehl, \emph{Model {P}redictive {C}ontrol:
  {T}heory, {C}omputation, and {D}esign}.\hskip 1em plus 0.5em minus
  0.4em\relax Nob Hill Publishing, 2020.

\bibitem{mesbah2022fusion}
A.~Mesbah, K.~P. Wabersich, A.~P. Schoellig, M.~N. Zeilinger, S.~Lucia, T.~A.
  Badgwell, and J.~A. Paulson, ``Fusion of machine learning and {MPC} under
  uncertainty: What advances are on the horizon?'' in \emph{2022 American
  Control Conference (ACC)}, 2022, pp. 342--357.

\bibitem{kohler2020computationally}
J.~K{\"o}hler, R.~Soloperto, M.~A. M{\"u}ller, and F.~Allg{\"o}wer, ``A
  computationally efficient robust model predictive control framework for
  uncertain nonlinear systems,'' \emph{IEEE Transactions on Automatic Control},
  vol.~66, no.~2, pp. 794--801, 2020.

\bibitem{kohler2021robust}
J.~K{\"o}hler, P.~K{\"o}tting, R.~Soloperto, F.~Allg{\"o}wer, and
  M.~M{\"u}ller, ``A robust adaptive model predictive control framework for
  nonlinear uncertain systems,'' \emph{International Journal of Robust and
  Nonlinear Control}, vol.~31, no.~18, pp. 8725--8749, 2021.

\bibitem{wabersich2022cautious}
K.~Wabersich and M.~Zeilinger, ``Cautious {B}ayesian {MPC}: {R}egret analysis
  and bounds on the number of unsafe learning episodes,'' \emph{IEEE
  Transactions on Automatic Control}, vol.~68, no.~8, pp. 4896--4903, 2022.

\bibitem{liu2025regret}
C.~Liu, S.~Shi, and B.~{D}e Schutter, ``On the regret of model predictive
  control with imperfect inputs,'' \emph{IEEE Control Systems Letters}, vol.~9,
  pp. 601--606, 2025.

\bibitem{degner2026adaptive}
M.~Degner, R.~Soloperto, M.~N. Zeilinger, J.~Lygeros, and J.~K{\"o}hler,
  ``Adaptive economic model predictive control: {P}erformance guarantees for
  nonlinear systems,'' \emph{IEEE Transactions on Automatic Control}, vol.~71,
  no.~7, pp. 4355--4370, 2026.

\bibitem{soloperto2019dual}
R.~Soloperto, J.~K{\"o}hler, M.~A. M{\"u}ller, and F.~Allg{\"o}wer, ``Dual
  adaptive {MPC} for output tracking of linear systems,'' in \emph{2019 IEEE
  58th Conference on Decision and Control (CDC)}, 2019, pp. 1377--1382.

\bibitem{liu2024stability}
C.~Liu, S.~Shi, and B.~{D}e Schutter, ``Stability and performance analysis of
  model predictive control of uncertain linear systems,'' in \emph{2024 IEEE
  63rd Conference on Decision and Control (CDC)}, 2024, pp. 7356--7362.

\bibitem{shi2025suboptimality}
S.~Shi, A.~Tsiamis, and B.~D. Schutter, ``Suboptimality analysis of receding
  horizon quadratic control with unknown linear systems and its applications in
  learning-based control,'' \emph{IEEE Transactions on Automatic Control},
  vol.~71, no.~3, pp. 1422--1437, 2025.

\bibitem{schwenkel2020robust}
L.~Schwenkel, J.~K{\"o}hler, M.~M{\"u}ller, and F.~Allg{\"o}wer, ``Robust
  economic model predictive control without terminal conditions,''
  \emph{IFAC-PapersOnLine}, vol.~53, no.~2, pp. 7097--7104, 2020.

\bibitem{liu2026certainty}
C.~Liu, S.~Shi, and B.~{D}e Schutter, ``Certainty-equivalence model predictive
  control: {S}tability, performance, and beyond,'' \emph{IEEE Transactions on
  Automatic Control}, vol.~71, no.~7, pp. 4649--4664, 2026.

\bibitem{sun2021dynamic}
Z.~Sun, C.~Li, J.~Zhang, and Y.~Xia, ``Dynamic event-triggered {MPC} with
  shrinking prediction horizon and without terminal constraint,'' \emph{IEEE
  Transactions on Cybernetics}, vol.~52, no.~11, pp. 12\,140--12\,149, 2021.

\bibitem{lincoln2006relaxing}
B.~Lincoln and A.~Rantzer, ``Relaxing dynamic programming,'' \emph{IEEE
  Transactions on Automatic Control}, vol.~51, no.~8, pp. 1249--1260, 2006.

\bibitem{lu2015asynchronous}
L.~Lu, ``Asynchronous separable self-triggered model predictive control based
  on relaxed dynamic programming,'' \emph{IFAC-PapersOnLine}, vol.~48, no.~8,
  pp. 948--953, 2015.

\bibitem{lu2022self}
L.~Lu, D.~Limon, and I.~Kolmanovsky, ``Self-triggered {MPC} with performance
  guarantee for tracking piecewise constant reference signals,''
  \emph{Automatica}, vol. 142, p. 110364, 2022.

\bibitem{rawlings2012fundamentals}
J.~B. Rawlings, D.~Angeli, and C.~N. Bates, ``Fundamentals of economic model
  predictive control,'' in \emph{2012 IEEE 51st {C}onference on {D}ecision and
  {C}ontrol (CDC)}.\hskip 1em plus 0.5em minus 0.4em\relax IEEE, 2012, pp.
  3851--3861.

\bibitem{kohler2023stability}
J.~K{\"o}hler, M.~Zeilinger, and L.~Gr{\"u}ne, ``Stability and performance
  analysis of {NMPC}: {D}etectable stage costs and general terminal costs,''
  \emph{IEEE Transactions on Automatic Control}, vol.~68, no.~10, pp.
  6114--6129, 2023.

\bibitem{pannocchia2011conditions}
G.~Pannocchia, J.~B. Rawlings, and S.~J. Wright, ``Conditions under which
  suboptimal nonlinear {MPC} is inherently robust,'' \emph{Systems \& Control
  Letters}, vol.~60, no.~9, pp. 747--755, 2011.

\bibitem{grune2017nonlinear}
L.~Gr{\"u}ne and J.~Pannek, \emph{Nonlinear {M}odel {P}redictive
  {C}ontrol}.\hskip 1em plus 0.5em minus 0.4em\relax Springer, 2017.

\bibitem{schimperna2025data}
I.~Schimperna, K.~Worthmann, M.~Schaller, L.~Bold, and L.~Magni, ``Data-driven
  model predictive control: {A}symptotic stability despite approximation errors
  exemplified in the koopman framework,'' \emph{arXiv preprint
  arXiv:2505.05951}, 2025.

\bibitem{kellett2014compendium}
C.~Kellett, ``A compendium of comparison function results,'' \emph{Mathematics
  of Control, Signals, and Systems}, vol.~26, pp. 339--374, 2014.

\bibitem{wang2018safe}
L.~Wang, E.~A. Theodorou, and M.~Egerstedt, ``Safe learning of quadrotor
  dynamics using barrier certificates,'' in \emph{2018 IEEE International
  Conference on {R}obotics and {A}utomation {(ICRA)}}, 2018, pp. 2460--2465.

\bibitem{dacs2025robust}
E.~Da{\c{s}} and J.~W. Burdick, ``Robust control barrier functions using
  uncertainty estimation with application to mobile robots,'' \emph{IEEE
  Transactions on Automatic Control}, vol.~70, no.~7, pp. 4766--4773, 2025.

\bibitem{ames2016control}
A.~D. Ames, X.~Xu, J.~W. Grizzle, and P.~Tabuada, ``Control barrier function
  based quadratic programs for safety critical systems,'' \emph{IEEE
  Transactions on Automatic Control}, vol.~62, no.~8, pp. 3861--3876, 2016.

\bibitem{liu2025robust}
C.~Liu, A.~Alan, S.~Shi, and B.~{D}e Schutter, ``Robust adaptive discrete-time
  control barrier certificate,'' \emph{arXiv preprint arXiv:2508.08153}, 2025.

\bibitem{kuntz2024beyond}
S.~J. Kuntz and J.~B. Rawlings, ``Beyond inherent robustness: {S}trong
  stability of {MPC} despite plant-model mismatch,'' \emph{IEEE Transactions on
  Automatic Control}, vol.~71, no.~2, pp. 780--792, 2026.

\bibitem{bemporad2002hybrid}
A.~Bemporad, W.~P. M.~H. Heemels, and B.~{D}e Schutter, ``On hybrid systems and
  closed-loop {MPC} systems,'' \emph{IEEE Transactions on Automatic Control},
  vol.~47, no.~5, pp. 863--869, 2002.

\bibitem{li2025learning}
T.~Li, ``Learning-augmented control: Adaptively confidence learning for
  competitive {MPC},'' \emph{arXiv preprint arXiv:2507.14595}, 2025.

\bibitem{grune2008infinite}
L.~Gr{\"u}ne and A.~Rantzer, ``On the infinite horizon performance of receding
  horizon controllers,'' \emph{IEEE Transactions on Automatic Control},
  vol.~53, no.~9, pp. 2100--2111, 2008.

\bibitem{gurobi}
\BIBentryALTinterwordspacing
{Gurobi Optimization, LLC}, \emph{Gurobi Optimizer Reference Manual}, 2024.
  [Online]. Available: \url{https://www.gurobi.com}
\BIBentrySTDinterwordspacing

\bibitem{andersson2019}
J.~Andersson, J.~Gillis, G.H., J.~Rawlings, and M.~Diehl, ``{CasADi} -- {A}
  software framework for nonlinear optimization and optimal control,''
  \emph{Mathematical Programming Computation}, vol.~11, no.~1, pp. 1--36, 2019.

\end{thebibliography}
